\newenvironment{pf}{\begin{proof}}{\end{proof}}
\newconstruct{\ON}{\textbf{on}}{\textbf{do}}{\ENDON}{\textbf{end on}}
\Crefname{assumption}{Asmp.}{Asmps.}
\Crefname{lemma}{Lem.}{Lems.}
\Crefname{definition}{Def.}{Defs.}
\Crefname{algorithm}{Alg.}{Algs.}
\Crefname{ALC@unique}{line}{lines}
\def\EMPTY{\vskip8pt minus 4pt\relax}
\gdef\dash---{\thinspace---\hskip.16667em\relax}
\def\true{\mbox{\sc true}}
\def\false{\mbox{\sc false}}
\def\Er{\ensuremath{E^r}}
\def\Gr{\ensuremath{\mathcal{G}^r}}
\def\Ga{\mathcal{G}}
\def\N{\mathcal{N}}
\def\U{\mathcal{U}}
\def\A{\mathcal{A}}
\def\B{\mathcal{B}}
\def\V{\mathcal{V}}
\def\stableSCC{\mbox{\tt InStableRoot}}
\def\getred{\lambdaabstraction}
\def\distx_#1{\mbox{\tt{cd}}^{#1}}
\def\dist{\@ifnextchar_\distx{\distx_}}
\def\Knows#1{\mathsf{K}_p}
\def\Everyknows{\mathsf{E}}
\newconstruct{\PROC}{\textbf{procedure}}{}{\ENDPROC}{\textbf{end on}}
\newconstruct{\PRED}{\textbf{predicate}}{}{\ENDPRED}{\textbf{end on}}
\newconstruct{\FUNC}{\textbf{function}}{}{\ENDFUNC}{\textbf{end on}}
\renewcommand{\leq}{\leqslant}
\renewcommand{\le}{\leqslant}
\renewcommand{\geq}{\geqslant}
\renewcommand{\ge}{\geqslant}
\gdef\dash---{\thinspace---\hskip.16667em\relax}
\gdef\op|{\,|\;}
\gdef\op:{\,:\;}
\newcommand{\li}[1]{\langle#1\rangle}
\def\la{\leftarrow}
\def\ra{\rightarrow}
\def\lt{\leadsto}
\def\set#1{\left\{#1\right\}}
\def\G{\mathcal{G}}
\def\D{\mathcal{D}}
\def\S{\mathscr{S}}
\def\L{\mathcal{L}}
\def\P{\mathcal{P}}
\def\Q{\mathcal{Q}}
\newcommand{\M}{\mathcal{M}}
\newcommand{\R}{\mathcal{R}}
\newcommand{\msg}[1]{\langle#1\rangle}
\newcommand{\APiD}{{A_{|\PiD}}}
\newcommand{\ARes}{{A_{|\D}}}
\newcommand{\PiD}{\overline{\D}}
\newcommand{\Rii}{\R_\text{($\PiD$)}}
\newcommand{\Riii}{\R_\text{($\D,\PiD$)}}
\newcommand{\Mod}{\mathscr{M}}
\newcommand{\condNonempty}{(A)}
\newcommand{\condRiiRiii}{(B)}
\newcommand{\condNocons}{(C)}
\newcommand{\condMADMA}{(D)}
\def\leftNumbered{\tagsleft@true}\def\rightNumbered{\tagsleft@false}
\def\Timely{\mathcal{N}}
\newcommand{\kommentar}[1]{\noindent\textbf{KOMMENTAR: }\marginpar{****}%
\textit{#1}\textbf{ :RATNEMMOK}}
\def\Er{\ensuremath{E^r}}
\def\Gr{\ensuremath{\mathcal{G}^r}}
\def\Gcap(#1){\mathcal{G}^{\cap\, #1}}
\def\Ecap(#1){E^{\cap\, #1}}
\def\N{\mathcal{N}}
\newcommand{\edge}[1]{\stackrel{#1}{\ra}}
\newcommand{\ltedge}[1]{\stackrel{#1}{\lt}}
\def\R{\mathcal{R}}
\newcommand{\indist}[1]{\ensuremath{\overset{#1}{\sim}}}
\newcommand{\ksa}{$k$-set agreement}
\newcommand{\compat}[1]{\ensuremath{\preccurlyeq_{#1}}}
\newcommand{\mH}{\ensuremath{\mathcal{H}}}
\newcommand{\rGST}{\ensuremath{r_{ST}}}
\newcommand{\subruns}[1][\PiD]{\preccurlyeq_{#1}}
\def\MAD{\M_{A_{|\PiD}}}
\newcommand{\majinfsymb}{{\hookrightarrow}_{\tt m}}
\newcommand{\majinf}[2]{{#1} \majinfsymb {#2}}
\newcommand{\stronginf}[2]{{#1} \majinfsymb* {#2}}
\newcommand{\influence}[2]{{#1} {\hookrightarrow} {#2}}
\newcommand{\goodD}{$D$-bounded}
\newcommand{\goodDE}{$\nwbound$-network-bounded}
\newcommand{\almostgoodDE}{almost $\nwbound-1$-network-bounded}
\newcommand{\lambdaabstraction}{\mathtt{GetLock}}
\def\cdiam{\diameter}
\def\cheight{h}
\def\hist{\ensuremath{\mbox{\tt hist}}}
\def\nwbound{H}
\newcommand{\MAd}[1]{\ensuremath{\mbox{VSRC}({#1})}}
\newcommand{\MAdv}[2]{\ensuremath{\mbox{VSRC}({#1},{#2})}}
\newcommand{\MAdvI}[2]{\ensuremath{\mbox{VSRC}({n},{#2})+\mbox{MAJINF}({#1})}}
\newcommand{\MAkd}[1]{\MAdv{k}{{#1}}}
\newcommand{\MAkI}{\MAdvI{k}{3D+\nwbound}}
\newcommand{\MAkdI}[1]{\MAdvI{k}{{#1}}}
\newcommand{\MAdp}[1]{\ensuremath{\mbox{VSRC'}({#1})}}
\newcommand{\MASigma}{\ensuremath{\mbox{$\Sigma_{n-1}$-MAJ}}}
\newcommand{\MAJINF}[1]{\ensuremath{\mbox{MAJINF}({#1})}}
\begin{document}

\title{Gracefully Degrading Consensus and $k$-Set Agreement in Directed Dynamic Networks}

\author{Martin Biely\inst{1} \and Peter Robinson\inst{2} \and Ulrich Schmid\inst{3} \and Manfred Schwarz\inst{3} \and Kyrill Winkler\inst{3}}
\institute{EPFL, Switzerland, \email{martin.biely@epfl.ch} \and National University of Singapore, Singapore, \email{robinson@comp.nus.edu.sg} \and ECS Group, TU Wien, Austria, \email{\{s,mschwarz,kwinkler\}@ecs.tuwien.ac.at}}

\maketitle

\begin{abstract}
We study distributed agreement in synchronous directed
dynamic networks, where an omniscient message adversary controls the 
presence/absence of communication links.
We prove that consensus is impossible under a message adversary that
guarantees weak connectivity only, and introduce vertex-stable root 
components (VSRCs) as a means for circumventing this impossibility: A 
\MAkd{d} message adversary guarantees that, eventually,
there is an interval of $d$ consecutive rounds where every 
communication graph contains at most $k$ strongly connected components
consisting of the same processes (with possibly varying interconnect
topology), which have at most out-going links to the 
remaining processes. We present a consensus algorithm that works correctly under a \MAdv{1}{4H+2}
message adversary, where $H$ is the dynamic causal network diameter.
Our algorithm maintains local estimates of the communication graphs, 
and applies techniques for detecting network stability and 
univalent system configurations. Several related impossibility results and lower 
bounds, in particular, that neither a \MAdv{1}{H-1} message adversary 
nor a \MAdv{2}{\infty} one allow to solve consensus, reveal that there
is not much hope to deal with (much) stronger message adversaries here.

However, we show that gracefully degrading consensus, which
degrades to general $k$-set agreement in case of unfavorable network
conditions, allows to cope with stronger message adversaries:
We provide a $k$-uniform $k$-set agreement algorithm, where the number
of system-wide decision values $k$ is not encoded in the algorithm, but
rather determined by the actual power of the message adversary in a run:
Our algorithm guarantees at most $k$ decision values under a \MAkdI{d} 
message adversary, which combines \MAdv{n}{d} 
(with some small value of $d$, ensuring termination)
with some information flow guarantee \MAJINF{k} between certain VSRCs (ensuring
$k$-agreement). Since related
impossibility results reveal that a \MAkd{d} message adversary
is too strong for solving $k$-set agreement and that some information
flow between VSRCs is mandatory for this purpose as well, our results
provide a significant step towards the exact solvability/impossibility
border of general $k$-set agreement in directed dynamic networks.

\keywords{Directed dynamic networks, consensus, $k$-set agreement, message adveraries, 
impossibility results, lower bounds.}
\end{abstract}

\section{Introduction}
\pagestyle{plain}

\label{sec:intro}

Dynamic networks, instantiated, e.g., by 
wireless sensor networks, mobile ad-hoc networks and vehicle
area networks, are becoming
ubiquitous nowadays. The primary properties of such networks are sets 
of participants (called processes in the sequel) that are a priori unknown and potentially 
changing, time-varying connectivity between processes, and 
the absence of a central control. Dynamic networks is an important and
very active area of research \cite{KO11:SIGACT}.

Accurately modeling dynamic networks is challenging, for several reasons:
First, process mobility, process crashes/recoveries, 
deliberate joins/leaves, and peculiarities in the low-level 
system design like duty-cycling (used to save energy 
in wireless sensor networks) make static communication 
topologies, as typically used in classic 
network models, inadequate for dynamic networks.
Certain instances of dynamic networks, in particular, peer-to-peer networks 
\cite{KSW10} and inter-vehicle area networks \cite{FNSA12}, even
suffer from
significant churn, i.e., a large number of processes that can appear/disappear
over time, possibly in the presence of faulty 
processes \cite{APR13}, and hence consist of a potentially unbounded total number of
participants over time. More classic applications
like \emph{mobile ad-hoc networks} (MANETS) 
\cite{KM07}, wireless sensor networks \cite{ASSC02,YMG08}
and disaster relief applications \cite{LHSP11} typically consist of
a \emph{bounded} (but typically unknown) total number of processes.

Second, communication in many dynamic networks, in particular,
in wireless networks like MANETS, is inherently broadcast: When a process 
transmits, then every other process within its transmission range will
observe this transmission --- either by legitimately receiving the
message or as some form of interference. This creates quite irregular
communication behavior, such as capture effects and
near-far problems \cite{WJC00}, where certain (nearby) transmitters may
``lock'' a receiver and thus prohibit the reception of messages
from other senders. Consequently, the ``health'' of a wireless
link between two processes may vary heavily over time \cite{CWPE05}.
For low-bandwidth wireless transceivers, an acceptable link quality
usually even requires communication scheduling \cite{ROG03}
(e.g., time-slotted communication) 
for reducing the mutual interference. Overall, this 
results in a frequently changing spatial distribution of pairs of nodes 
that can communicate at a given point in time. 

As a consequence, many dynamic networks, in particular, wireless ones \cite{CWKP05}, 
are not adequately modeled by means of bidirectional links: Fading and interference phenomenons
\cite{SBB12,GKFS10:perf}, including capture effects and near-far problems,
are \emph{local} effects
that affect only the receiver of a wireless link. 
Given that the sender, which is also the receiver of the reverse link, resides
at a different location, the two receivers are likely to
experience very different levels of fading and interference \cite{FWZ05}.
This effect is even more pronounced in the case of time-slotted communication,
where forward and backward links are used at different times. Consequently,
the existence of asymmetric communication links cannot be ruled out in 
practice: According to \cite{NKYG07}, 80\% of the links in a typical
wireless network are asymmetric.

Despite these facts, most of the dynamic network research we are 
aware of assumes bidirectional links \cite{KLO10:STOC,KOM11}.
The obvious advantage of this abstraction is simplicity of the
algorithm design, as strong communication guarantees obviously
make this task easier. Moreover, it allows the re-use of existing 
techniques for wireline networks, which naturally support
bidirectional communication.
However, there are also major disadvantages of this convenient
abstraction:
First, for dynamic networks that operate in environments with unfavourable
communication conditions, e.g. in disaster relief applications or, more 
generally, in settings with various interferers and obstacles 
that severely inhibit communication, bidirectional links may simply 
not be achievable. For implementing distributed services in such settings, 
algorithms that do not need bidirectional links are mandatory.
Second, the entire system needs to be engineered in such a way that bidirectional 
single-hop communication can be provided within bounded time.
This typically requires relatively dense networks and/or 
processes that are equipped with powerful
communication interfaces, which incur significant cost when compared
to sparser networks or/and cheaper or more energy-saving communication
devices. 
And last but not least, if directed single-hop communication was already 
sufficient to reach some desired goal (say, reaching some destination
process) via multi-hop messages, waiting for guaranteed single-hop
bidirectional communication would incur a potentially significant,
unnecessary delay. Obviously, in such settings, algorithmic 
solutions that do not need bidirectional single-hop communication 
could be significantly faster.

\medskip

In this paper, we thus restrict our attention to dynamic networks
consisting of an \emph{unknown but bounded} total number of processes, 
which are interconnected by
\emph{directed} communication links. The system is assumed to
be synchronous,\footnote{As synchronized clocks 
are typically required for basic communication 
in wireless systems anyway, e.g., for transmission scheduling and 
sender/receiver synchronization, this is not an unrealistic 
assumption: Global synchrony can be implemented
directly at low system levels, e.g., via IEEE~1588 network time
synchronization or GPS receivers, or at higher levels via  time synchronization protocols like FTSP \cite{MKSL04} or even synchronizers \cite{Awe85}.} 
hence time is measured in discrete \emph{rounds} that allow
the processes to exchange at most one message. Time-varying communication 
is modeled as a sequence of \emph{communication graphs}, which contain
a directed edge between two processes if the message sent in the
corresponding round is successfully received. A bidirectional 
link is modeled by a pair of directed links that are considered 
independent of each other here. 

A natural approach to build robust services despite the dynamic
nature of such systems is to use some sort of distributed agreement
on certain system parameters like schedules, frequencies, and operating modes,
as well as on application-level issues: Such a solution allows to use arbitrary
algorithms for generating local proposals, which are supplied as inputs to a 
consensus algorithm that finally selects one of them consistently at all 
processes. As opposed to master-slave-based solutions, this
approach avoids the single point of failure formed by the process acting 
as the master.

The ability to reach \emph{system-wide} consensus is hence the most
convenient abstraction one could provide here. The first\footnote{A preliminary version of this part of our paper 
has appeared at SIROCCO'12 \cite{BRS12:sirocco}.} major contribution
of our paper 
is hence a suite of impossibility results and a consensus algorithm
for directed dynamic networks that, to the best of our knowledge,
works under the weakest communication guarantees sufficient for consensus
known so far.

Obviously, however, one cannot reasonably assume that every dynamic network 
always provides sufficiently strong communication guarantees for solving 
consensus. Fortunately, weaker forms of distributed agreement
are sufficient for certain applications. In case of determining communication
schedules \cite{ROG03}, for example, which are used for staggering message transmission of
nearby nodes in time to decrease mutual interference, it usually suffices if
those processes that have to communicate regularly with each other (e.g.,
for implementing a distributed service within a partition) 
agree on their schedule. A more high-level example would be
agreement on rescue team membership \cite{GZCW10} in disaster relief 
applications.

For such applications, suitably designed $k$-set agreement algorithms \cite{Cha93}, 
where processes must agree on at most $k$ different values system-wide, 
are a viable alternative to consensus ($k=1$). This is particularly
true if such a \ksa{} (i) respects partitions, in the sense that processes 
in the same (single) partition decide on the same value, and (ii) is
\emph{gracefully degrading}, in the sense that the actual number $k$ of different decision
values depends on the \emph{actual} network topology in the 
execution: If the network is well-behaved, the resulting $k$ is
small (ideally, $k=1$), whereas $k$ may increase under unfavorable
conditions. Whereas any gracefully degrading algorithm must be 
$k$-uniform, i.e., unaware of any a priori information on $k$, 
it should ideally also be \emph{$k$-optimal}, i.e., 
produce the smallest number $k$ of different decisions possible.

The second\footnote{A brief announcement of this part of our paper appeared at PODC'14 \cite{SWSBR13:PODC}.} major contribution of our paper are several
impossibility results
for $k$-set agreement in directed dynamic networks, as well as the, to the
best of our knowledge, first instance of a worst-case $k$-optimal \ksa{}, 
i.e., a consensus algorithm that indeed degrades \emph{gracefully} to general 
$k$-set agreement.

\medskip
\noindent
{\bf Detailed contributions and paper organization.} 

In \cref{sec:model}, we introduce our detailed system model, which adopts 
the \emph{message adversary} notation used in \cite{RS13:PODC}. 
It consists of an (unknown) number $n$ of processes, where communcation is
modeled by a sequence of directed communication graphs, one for each
round: If some edge $(p,q)$ is present in the communication graph $\Gr$ of round $r$, 
then process $q$ has received the message sent to it by $p$ in round $r$. 
The message
adversary determines the set of links actually present in every $\Gr$, 
according to certain constraints that may be viewed as network assumptions.

With respect to consensus, we provide the following contributions:
\begin{enumerate}
\item[(1)] In \cref{sec:consimposs}, we show that communication graphs that are weakly connected in
every round are not sufficient for solving consensus, and introduce
a fairly weak additional assumption that allows to overcome this
impossibility. Our message adversary \MAd{d} requires that the communication
graph in every round is weakly connected
and has one (possibly changing) strongly connected component (called a 
\emph{root component}) that has no in-coming links from processes
outside. Note carefully that every directed graph 
has at least one root component. Since this assumption is still
too weak for solving consensus, \MAd{d} also requires that, eventually, there
will be $d$ consecutive rounds where the processes in the
root component remain the same, although the connection topology
may still change. We use the term \emph{vertex-stable root component} (VSRC)
for this requirement. In \cref{sec:consensus},
we provide a consensus algorithm that works in this model, and
prove its correctness. Our algorithm requires a window of stability
of $d=4\nwbound+2$ rounds, where $\nwbound$ is the \emph{dynamic network causal diameter} 
of the network
(= the number of rounds required to reach all processes in the network 
from any process in the vertex-stable root component via multi-hop communication).

\item[(2)] In \cref{sec:consimposs}, we show that any consensus and leader election algorithm has to
know an a priori bound on $\nwbound$. Since $n-1$ is a
trivial bound on $\nwbound$, this implies that no uniform algorithm, i.e.,
no algorithm unaware of $n$ or $\nwbound$, 
can solve consensus in our model.
In addition, we prove that consensus is impossible both under \MAdv{2}{\infty} and
under \MAd{\nwbound-1}, which shows that $\nwbound$ is a lower bound for the window of stability of VSRCs.
We also demonstrate that neither reliable broadcast, atomic broadcast, nor 
  causal-order broadcast can be implemented under \MAd{d}. The same is shown to
be true for counting, $k$-verification, $k$-token dissemination, all-to-all token 
  dissemination, and $k$-committee election.
\end{enumerate}

With respect to $k$-set agreement and gracefully degrading consensus, we
provide the following contributions:
\begin{enumerate}
\item[(3)] In \cref{sec:impossibility-proofs}, we 
provide a fairly weak natural message adversary \MAkd{d} that
is still too strong for solving $k$-set agreement:
It reveals that the restriction to at most $k$
simultaneous VSRCs in every round is \emph{not} sufficient for solving $k$-set agreement if just a single VSRC is vertex-stable for less than $n-k$ rounds: A
generic reduction of $k$-set agreement to consensus introduced in
\cite{BRS11:OPODIS}, in conjunction with certain bivalence arguments, is used
to construct a non-terminating run in this case.
Moreover, \emph{eventual} stability of \emph{all} VSRCs 
is also not enough for solving \ksa{}, not even when it is 
guaranteed that (substantially) less than $k$ VSRCs exist simultaneously.
The latter is a consequence of some adversarial
partitioning over time, which could happen in our dynamic networks.

\item[(4)] In \cref{sec:sufficiency}, we show that the message adversary 
\MAkdI{d}, which combines \MAdv{n}{d} (ensuring termination)
with some information flow guarantee \MAJINF{k} between certain VSRCs 
(ensuring $k$-agreement), is sufficient for solving \ksa{}. 
Basically, \MAJINF{k} guarantees that at most $k$ VSRCs exist in 
a run that are not affecting each other significantly. 
Despite being fairly strong,
the resulting message adversary \MAkdI{d} allows
to implement a $k$-uniform $k$-set agreement algorithm, which 
naturally respects partitions and is \emph{worst-case $k$-optimal},
in the sense that no algorithm can solve $k-1$-set agreement
under \MAkdI{d}. To the best of our
knowledge, it is the first gracefully degrading 
consensus algorithm proposed so far.
\end{enumerate}

As a final remark, we note that the ultimate goal of the latter 
part of our research 
are network assumptions for every $1\leq k < n$, 
which are both necessary and sufficient for solving $k$-set agreement.
Knowing or at least approaching this border is interesting for several 
reasons:
First, it is interesting from a theoretical point of view:
$k$-set agreement has been a major target for the study 
of solvability in asynchronous systems with failure detectors since decades.\footnote{Despite 
all efforts, however, the weakest failure detector for message-passing
$k$-set agreement is still unknown \cite{BR11:TCS}. Interestingly, 
\cite{RS13:PODC} revealed that there are relations between this classic model
and dynamic networks.} Second, striving for weak network assumptions is
always advantageous w.r.t.\ the assumption coverage in real systems, as
they are typically more likely to hold in a given dynamic network. Finally,
a set of network assumptions close to the necessary and sufficient ones
is needed for $k$-optimal $k$-set agreement algorithms:
Whereas our worst-case $k$-optimal algorithm only needs a single worst-case
run under \MAkdI{d} where it cannot solve $k-1$-set agreement,
a $k$-optimal algorithm must solve $k$-set agreement for the \emph{smallest} $k$ 
possible in \emph{every} run.

We believe that our work constitutes a significant step towards identifying the exact solvability border of $k$-set agreement:
Since necessary and sufficient network conditions in our model 
must lie somewhere 
in between (3) and (4), we managed to tightly ``enclose'' them.
Further tightening the gap and eventually closing it, is a topic
of future research.

\section{Related Work} 
\label{sec:relwork}

Dynamic networks have been studied intensively in research
(see the overview by Kuhn and Oshman \cite{KO11:SIGACT} and the references therein). Besides work on peer-to-peer networks like \cite{KSW10},
where the dynamicity of nodes (churn) is the primary concern, different
approaches for modeling dynamic connectivity have been proposed, both
in the networking context and in the context of classic distributed
computing. Casteigts et al.~\cite{CFQS12:TVG} introduced a comprehensive
classification of time-varying graph models.

\noindent
\textbf{Models.} There is a rich body of literature on dynamic graph models going
     back to~\cite{HG97:Dyn}, which also mentions for the first time
     modeling a dynamic graph as a sequence of static graphs. A more 
recent paper using this approach is \cite{KLO10:STOC}, 
where distributed computations are organized 
in lock-step synchronous rounds. Communication is described by a sequence
of per-round communication graphs, which must adhere to certain
network assumptions (like $T$-interval connectivity,
which says that there is a common subgraph in any interval of $T$ rounds).
Afek and Gafni \cite{AG13} introduced message adversaries for specifying
network assumptions in this context, and used them for 
relating problems solvable in wait-free read-write shared memory systems
to those solvable in message-passing systems. 
Raynal and Stainer \cite{RS13:PODC} also used message adversaries for
exploring the relationship between round-based models and failure 
detectors. 

Besides time-varying graphs, several alternative approaches that 
consider missing
messages as failures have also been proposed in the past: Moving omission
failures~\cite{SW89}, round-by-round fault detectors~\cite{Gaf98},
the heard-of model~\cite{CBS09} and the perception-based failure
model~\cite{BSW11:hyb}.

\noindent
\textbf{Agreement problems.}
Agreement problems in dynamic networks with undirected 
communication graphs have been studied in \cite{KOM11,APRU12,CRW11}; 
agreement in directed graphs has been considered in 
\cite{SWK09,BRS12:sirocco,AG13,RS13:PODC,CG13}.

In particular, the work by Kuhn et al.~\cite{KOM11} focuses on the $\Delta$-coordinated 
consensus problem, which extends consensus by requiring all processes to decide within
$\Delta$ rounds of the first decision. Since they consider only 
undirected graphs that are 
connected in every round, without node failures, solving consensus 
is always possible.
In terms of the classes of~\cite{CFQS12:TVG}, the model of
\cite{kuhn+lo:dynamic} is in one of the strongest classes (Class~10)
in which every process is always reachable by every other process. 
On the other hand, \cite{SWK09,CG13} do consider directed graphs, but restrict the dynamicity
by not allowing stabilizing behavior. Consequently, they also belong to 
quite strong classes of network assumptions in \cite{CFQS12:TVG}. In sharp
contrast, the message adversary tolerated by our algorithms does not
guarantee bidirectional (multi-hop) communication between all
processes, hence falls between the weakest and second weakest class of models
defined in~\cite{CFQS12:TVG}.

The leader election problem in dynamic networks
has been studied in \cite{CRW11,CRW10:dialm-pomc}, where the adversary controls the mobility of nodes in a
wireless ad-hoc network. This induces dynamic changes of the (undirected)
network graph in every round and requires any leader election algorithm to
take $\Omega(D n)$ rounds in the worst case, where $D$ is a bound on
information propagation.

Regarding $k$-set agreement in dynamic networks, we are not aware of
any previous work except \cite{SS14:DISC}, where bidirectional links are assumed,
and our previous paper~\cite{BRS11:IPDPS}, where
we assumed the existence of an underlying \emph{static} skeleton graph
(a non-empty common intersection of the communication graphs of all rounds)
with at most $k$ \emph{static} root components. Note that this essentially
implies a directed dynamic network with a static core. By contrast, in
this paper, we allow the directed communication graphs to be fully dynamic.
In \cite{BRS13:TPDS},
we provided $k$-set agreement algorithms for partially synchronous
systems with weak synchrony requirements.

\noindent
\textbf{Degrading consensus problems.} 
We are also not aware of related work exploring gracefully degrading consensus
or $k$-uniform
$k$-set agreement. However, there have been several attempts to
weaken the semantics of consensus, in order to cope with partitionable
systems and excessive faults. Vaidya and Pradhan introduced the notion
of \emph{degradable} agreement \cite{VP93}, where processes are allowed
to also decide on a (fixed) default value in case of excessive faults.
The \emph{almost everywhere agreement} problem introduced by \cite{DPPU88}
allows a small linear fraction of processes to remain undecided.
Aguilera et.~al.~\cite{ACT99} considered quiescent consensus in
partitionable systems, which requires processes outside the majority
partition not to terminate. None of these approaches is comparable
to gracefully degrading $k$-set agreement, however: 
On the one hand, we allow more
different decisions, on the other hand, all correct processes are required to
decide and every decision must be the initial value of some process.

Ingram et.~al.~\cite{ISWW09:IPDPS} presented an asynchronous leader election
algorithm for dynamic systems, where every component is guaranteed to elect a
leader of its own. Whereas this behavior clearly matches our definition of
graceful degradation, contrary to decisions, leader assignments are revocable
and the algorithm of \cite{ISWW09:IPDPS} is guaranteed to
successfully elect a leader only once the topology eventually stabilizes.

\section{Model} \label{sec:model}

We consider a synchronous distributed system made up of a fixed set of
distributed processes $\Pi=\{p_1,\dots,p_n\}$ with $|\Pi|=n\geq 2$, which
have fixed unique ids and communicate via unreliable message passing. 
For convenience, we assume that the unique id of $p_i \in \Pi$ is $i$, and 
use both $p_i$ and $i$ for denoting this process; 
``generic'' processes will also be denoted by $p$, $q$ etc.

Similar to the $\mathcal{LOCAL}\/$ model \cite{Pel00}, we assume that 
processes organize their computation as an infinite
sequence of communication-closed~\cite{EF82} 
lock-step rounds. For every $p\in\Pi$ and each round
$r>0$, let $S_p^r \in \S_p$ be the state of $p$ at the beginning of
round $r$, taken from the set  $\S_p$ of all states $p$ can possibly enter; 
$S_p^1\in \S_p^1\subset\S_p$ is taken from the set of $p$'s initial states
$\S_p^1$. The round $r$ 
computation of process $p$ is determined by two functions 
that make up $p$'s algorithm: The message sending function $M_p:\S_p \to 
\M$ determines the message $m_p^r$, taken from a suitable message alphabet $\M$, 
sent to all other processes in the system by $p$ in round $r$, based on 
$p$'s state $S_p^{r}$ at the beginning of round $r$. For simplicity,
we assume that some (possibly $\mbox{NULL} \in \M$) message is sent to all 
in a round where there is no proper algorithm message to be broadcast.
A receiver may omit to receive a message sent to it in a round, and
senders do not know (without receiving explicit feedback later on)
who successfully received their message.
The transition function $T_p:\S_p\times 2^{(\Pi\times\M)} \to \S_p$ takes 
$p$'s state 
$S_p^{r}$ at the beginning of round $r$ and a set $\mu_p^r$ of pairs of process
ids and messages, which contains the round $r$ 
messages received by $p$ from other processes in the system, 
and computes the successor state $S_p^{r+1}$. 
We assume that, for each process $q$, there is at most one
$(q,m_q^r)\in\mu_p^r$ such that $m_q^r$ is the message $q$ sent in round $r$.
Note that neither $M_p$ nor $T_p$ need to involve $n$, i.e., the algorithms
executed by the processes may be uniform with respect to the network size $n$.

\tikzstyle{p}=[circle,draw=gray,fill=lightgray!30,thick,inner sep=0pt,minimum size=4mm]
\tikzstyle{p6}=[p,double]
\tikzstyle{link}=[->,black,thick,auto]
\tikzstyle{fail}=[->,black,thick,densely dotted,auto]
\tikzstyle{i}=[draw=none,opacity=0]
\newcommand{\showproc}[5]{%
  \node[#1]  (p1)               {$p_1$};
  \node[#3]  (p3) [right=of p1] {$p_3$};
  \node[#5]  (p5) [right=of p3] {$p_5$};
  \node[#4]  (p4) [below=of p3] {$p_4$};
  \node[#2]  (p2) [above=of p3] {$p_2$};

  \begin{pgfonlayer}{background}
  \draw[draw=none,fill=lightgray!20] (current bounding box.south west) rectangle (current bounding box.north east);
  \end{pgfonlayer}
}  
\setlength{\textfloatsep}{\baselineskip}
\begin{figure*}[t]
  \tiny
\centering{
\subfloat[$\Ga^1$]{%
\begin{tikzpicture}
  \showproc{p}{p}{p}{p}{p}

  \draw[link] (p1) to [bend left] (p2);
  \draw[link] (p2) to [bend left] (p1);
  \draw[link] (p4) to (p1);
  \draw[link] (p4) to (p5);
  \draw[link] (p2) to (p3);
  \draw[link] (p5) to (p2);
\end{tikzpicture}
\label{fig:g1}
}\ 
\subfloat[$\Ga^2$]{%
\begin{tikzpicture}
  \showproc{p}{p}{p}{p}{p}

  \draw[link] (p1) to (p2);
  \draw[link] (p2) to (p3);
  \draw[link] (p4) to (p1);
  \draw[link] (p4) to (p5);
\end{tikzpicture}
\label{fig:g2}
}\ 
\subfloat[$\Ga^3$]{%
\begin{tikzpicture}
  \showproc{p}{p}{p}{p}{p}

  \draw[link] (p2) to (p1);
  \draw[link] (p3) to (p1);
  \draw[link] (p5) to (p3);
  \draw[link] (p5) to (p2);
  \draw[link] (p3) to (p4);
\end{tikzpicture}
\label{fig:g3}
}
\caption{
\label{fig:graphs}%
}
} %
\end{figure*}

The evolving nature of the network topology is modeled as an infinite
      sequence of simple directed graphs $\G^1,\G^2,\dots$, which is determined 
      by an omniscient \emph{message adversary} \cite{RS13:PODC,AG13} that has access to the 
processes' states.

\begin{definition}[Communication graphs]\label{def:commgraphs}
For each round $r$, the \emph{round $r$ communication graph} $\Gr=\li{V,\Er}$
is a simple directed graph with node set $V=\Pi$ and edge set 
$\Er \subseteq \{(p\ra q) \colon p, q\neq p \in V\}$, where $(p\ra q) \in \Er$
\emph{iff} $q$ successfully receives $p$'s round $r$ message (in round $r$).
The set $\N_q^r$ denotes $q$'s \emph{in-neighbors} in $\Gr$ (excluding $q$).
\end{definition}
Note that we will sloppily write $(p\ra q) \in \Gr$ to denote $(p\ra q) \in E^r$,
as well as $p \in \Gr$ to denote $p\in V=\Pi$.

\cref{fig:graphs} shows a sequence of communication graphs for a 
network of $5$ processes, for rounds $1$ to $3$.
For deterministic algorithms, a run is completely determined by the initial
states of the processes and the sequence of communication graphs.
We emphasize that $p$ does not have any \emph{a priori} knowledge of its
neighbors, i.e., $p$ does not know who receives its round $r$
message, and does not know who it will receive from in round $r$
before its round $r$ computation.

Since every $\G^r$ can range arbitrarily from $n$ isolated nodes 
to a fully connected graph, there is no hope to solve any
non-trivial agreement problem without restricting the power
of the adversary to drop messages\footnote{Even though the adversary 
can only affect communication in our model,
it is also possible to model classic send and/or receive omission
process failures \cite{PT86} (and
thereby also crash failures): A process that is send/receive
omission faulty in round $r$ has no outgoing/incoming edges
to/from some other processes in $\Gr$.} 
to some extent.
Inspired by \cite{RS13:PODC}, we encapsulate a particular restriction, 
e.g., that every communication graph must be strongly connected, 
by means of a particular \emph{message adversary}. Note that \cref{def:ma}
generalizes the notation introduced in \cite{AG13}, which
just specified the \emph{set} of communciation graphs the adversary
may choose from in every round, to sets of \emph{sequences} of communication graphs.

\begin{definition}[Message adversary]\label{def:ma}
A message adversary \emph{Adv} (for our system $\Pi$ of $n$ processors) is a set
of sequences of communication graphs $(\G^r)_{r>0}$. A particular
sequence of communication graphs $(\A^r)_{r>0}$ is \emph{feasible} 
for \emph{Adv}, if $(\A^r)_{r>0} \in \mbox{\emph{Adv}}$.
\end{definition}
Informally, we say that some message adversary \emph{Adv} guarantees some
property, called a \emph{network assumption}, if every $(\G^r)_{r>0} \in \mbox{\emph{Adv}}$
satisfies this property.

For our system $\Pi$ of $n$ processes, this introduces a natural partial 
order of message adversaries, where $A$ is weaker than $B$ 
(denoted $A \leq B$) iff $A \subseteq B$, i.e., if it can generate at most the
communication graph sequences of $B$. 
As a consequence, an algorithm that works correctly under message 
adversary $B$ will also work under $A$.

\subsection{Consensus and $k$-set agreement}
\label{sec:problems}

To formally introduce the consensus and $k$-set agreement problem
studied in this paper, we assume some finite set $\V$ and consider the set
     of possible initial states $\S_p^1$ (of process $p$) to be
     partitioned into $|\V|$ subsets $\S_p^1[v]$, with $v\in \V$.
When $p$ starts in a state in $\S_p^1[v]$, we say that $v$ is $p$'s
input value, denoted $x_p=v$. Moreover, we assume that, for each $v\in
\V$, there is a set $\D_p[v]\subset\S_p$ of decided states such that
$\D_p[v]\cap\D_p[w]=\emptyset$ if $v\ne w$ and $\D_p[w]$ is closed under
$p$'s transition function, i.e., $T_p$ maps every state in this subset
to this subset (for all possible sets $\mu_p$ of received messages). We
say that $p$ has \emph{decided} on the output value (also called decision value)
$v$, denoted $y_p=v$, when it is in some state in $\D_p[v]$. When
$p$ performs a transition from a state outside of the set of decided
states to the set of decided states, we say that $p$ \emph{decides}.

\begin{definition}[Consensus]\label{def:consensus}
Algorithm $\mathcal{A}$ solves \emph{consensus}, if the following properties 
hold in every run of $\mathcal{A}$:
\begin{itemize}%
\item[(Agreement)] If process $p$ decides on $y_p$ and $q$ decides on
  $y_q$, then $y_p=y_q$.
\item[(Validity)] If $y_i=v$, then $v$ is some $p_j$'s initial value $x_j$.
\item[(Termination)] Every process must eventually decide. 
\end{itemize}
\end{definition}

For the \emph{$k$-set agreement problem}~\cite{Cha93}, we assume that both
$|\V|>k$ and $n > k$ to rule out trivial solutions.

\begin{definition}[$k$-set agreement]\label{def:ksa}
Algorithm $\mathcal{A}$ solves \emph{$k$-set agreement}, if the following properties 
hold in every run of $\mathcal{A}$:
\begin{itemize}%
\item[($k$-Agreement)] At most $k$ different decision values are obtained
system-wide in any run.
\item[(Validity)] If $y_i=v$, then $v$ is some $p_j$'s initial value $x_j$.
\item[(Termination)] Every process must eventually decide. 
\end{itemize}
\end{definition}
Clearly, consensus is the special case of $1$-set agreement; set agreement is a
short-hand for $n-1$-set agreement.

A consensus or $k$-set agreement algorithm is called \emph{uniform}, if it
does not have any a priori knowledge of the network (and hence of $n$).
A $k$-set agreement algorithm is called \emph{$k$-uniform}, if it does not 
require a priori knowledge of $k$.

\subsection{Basic network properties: Vertex-stable root components}
\label{sec:MAbasics}

We will now define the cornerstones of the message adversaries used in our 
paper, which culminate in \cref{def:D-bounded-VSRC} and 
\cref{def:E-network-vertex-stable-roots}. Message adversaries such as
\MAd{d} (\cref{ass:window}) and \MAkd{d} (\cref{ass:inter}) will be defined 
implicitly, by defining the properties of the sequences of feasible 
communication graphs.
Informally, most of those will rest 
on the pivotal concept of \emph{root components}, 
which are strongly connected components in $\Gr$ without \emph{incoming}
edges from processes outside the component. Our message adversaries
will be required to eventually guarantee root components that are 
vertex-stable, i.e., to consist of the
same \emph{set} of nodes (with possibly varying interconnect) during a sufficiently 
large number of consecutive rounds. Vertex-stability will eventually guarantee that
all members can receive information from each other.

\begin{definition}[Root Component] \label{def:RC}
A \emph{root component} $R^r$, with non-empty set of vertices $R\subseteq\Pi$,
is a \emph{strongly connected component} (SCC) in $\Gr$ that has no 
incoming edges from other components, formally
$\forall p \in R^r, \forall q \in \Gr\colon (q~\ra~p) \in \Gr 
\Rightarrow q \in R^r$.
\end{definition}
By contracting SCCs, it is easy to see that every weakly
connected directed simple graph $\G$ has at least one root component,
see \cref{lem:root}.
Hence, if $\G$ has $k$ root components, it has at most $k$
weakly connected components (with disjoint root components, but
possibly overlapping in the remaining processes).

\begin{definition}[Vertex-Stable Root Component] \label{def:VSRC}
A sequence of consecutive rounds with communication graphs $\G^x$ for $x\in 
I=[a,b]$, $b\geq a$, contains an $I$-\emph{vertex-stable root component} $R^I$,
if, for $x \in I$, every $\G^x$ contains a root component $R^x$ with the same set of nodes
$R$ (but possibly varying interconnection topology).
\end{definition}
We will abbreviate $R^I$ as an $I$-VSRC or $|I|$-VSRC if only the length
of $I$ matters, and sometimes denote an $I$-VSRC $R^I$ by its vertex set $R$
if $I$ is clear from the context. Note carefully that we assume $|I|=b-a+1$ here, since
$I=[a,b]$ ranges from the \emph{beginning} of round $a$ to the \emph{end} of round $b$; hence, $I=[r,r]$ is not empty but rather represents 
round $r$.

The most important property of a VSRC $R^I$ is that information is guaranteed
to spread to all its vertices $R$ if the interval $I$ is large enough, as proved in
\cref{lem:bncd} below. To express this formally, we need a few basic definitions
and lemmas. 

Similarly to the classic ``happened-before'' relation \cite{Lam78}, we say
that a process \emph{$p$ causally influences $q$ in round $r$}, denoted by
$(p \overset{r}{\leadsto} q)$, iff either (i) 
$q$ has an incoming edge
$(p \rightarrow q)$ from $p$ in $\G^r$, or (ii) if $q = p$, i.e., we assume that
$p$ always influences itself in a round. Given a sequence of communication graphs
$\G^r,\G^{r+1},\dots$, we say that there is an \emph{causal influence chain} 
of length $k \ge 1$ starting from $p$ in round $r$ to $q$, denoted by
$(p \overset{r[k]}{\leadsto} q)$, if there exists a
sequence of not necessarily distinct processes $p=p_0, \dots, p_k=q$ such that
$p_i\overset{r+i}{\leadsto}p_{i+1}$ for $0\le i < k$.
If $k$ is irrelevant, we just write $(p \overset{r}{\leadsto} q)$ or just $(p \leadsto q)$
and say that $p$ (in round $r$) causally influences $q$. 
This allows us to define the
notion of a dynamic causal distance between processes as given in \cref{def:dd}.

\begin{definition}[Dynamic causal distance]\label{def:dd}
Given a sequence of communication graphs $\G^r,\G^{r+1},\dots$,
the \emph{dynamic causal distance} $cd^r(p, q)$  from process $p$ (in round $r$) to process $q$ 
is the length of the shortest causal influence chain starting in $p$ in round $r$ and ending 
in $q$, formally $cd^r(p, q) := min\{k \colon (p \overset{r[k]}{\leadsto}q)\}$. We
define $cd^r(p, p) = 1$ and $cd^r(p,q) = \infty$ if $p$ never
influences $q$ after round $r$.
\end{definition}

Note that, in contrast to the similar notion of dynamic distance defined in
     \cite{KOM11}, the dynamic causal distance in our \emph{directed} graphs
     is not necessarily symmetric: If the adversary chooses the graphs $\G^r$ such that not all
     processes are strongly connected, the causal distance between two
     processes can even be finite in one and infinite in the other direction.
In fact, even if $\G^r$ is strongly connected for round $r$ (but not
     for rounds $r'>r$), $cd^r(p,q)$ can be infinite. However,
the following \cref{lem:cd} shows that the causal distance in successive rounds
cannot arbitrarily decrease.

\begin{lemma}\label{lem:cd}
Given a sequence of communication graphs $\G^r,\G^{r+1},\dots$,
for every two processes $p,q\in\Pi$ it holds that $\dist_{r+1}(p,q) \geq \dist_r(p,q)-1$. As a
consequence, if $\dist_r(p,q)=\infty$, then also $\dist_{r+1}(p,q)=\infty$.
\end{lemma}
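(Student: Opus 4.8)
The plan is to argue by contraposition on the structure of shortest causal influence chains. Fix $p,q\in\Pi$ and consider round $r+1$. If $\dist_{r+1}(p,q)=\infty$ there is nothing to prove, so assume $\dist_{r+1}(p,q)=k$ for some finite $k\ge 1$, witnessed by a chain $p=p_0,\dots,p_k=q$ with $p_i\overset{r+1+i}{\leadsto}p_{i+1}$ for $0\le i<k$. The key observation is that $p$ causally influences itself in round $r$, i.e.\ $p\overset{r}{\leadsto}p$ by clause (ii) of the definition. Prepending this self-influence step to the given chain yields a sequence $p=p_0',p_1'=p_0,p_2'=p_1,\dots,p_{k+1}'=p_k=q$ with $p_0'\overset{r}{\leadsto}p_1'$ and $p_i'\overset{r+i}{\leadsto}p_{i+1}'$ for $1\le i<k+1$; hence this is a causal influence chain of length $k+1$ starting from $p$ in round $r$ and ending in $q$. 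Therefore $\dist_r(p,q)\le k+1=\dist_{r+1}(p,q)+1$, which rearranges to $\dist_{r+1}(p,q)\ge \dist_r(p,q)-1$.

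For the corner cases I would just check that the definitions line up: if $p=q$ then $\dist_r(p,p)=\dist_{r+1}(p,p)=1$ by the explicit stipulation in \cref{def:dd}, and $1\ge 1-1$ holds trivially, so the inequality is fine without invoking any chain argument. The ``as a consequence'' clause is then immediate: if $\dist_{r+1}(p,q)$ were finite while $\dist_r(p,q)=\infty$, the inequality $\dist_{r+1}(p,q)\ge\dist_r(p,q)-1=\infty$ would be violated; equivalently, contrapositively, $\dist_r(p,q)=\infty$ forces $\dist_{r+1}(p,q)=\infty$.

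I do not anticipate a genuine obstacle here — the statement is essentially a one-line observation once the self-loop convention $p\overset{r}{\leadsto}p$ is in play, since that convention lets an influence chain ``wait out'' one round at its source. The only thing to be careful about is bookkeeping the round indices correctly when prepending the step (the chain originally indexed by rounds $r+1,\dots,r+k$ becomes indexed by rounds $r,r+1,\dots,r+k$), and making sure the degenerate length conventions ($\dist_r(p,p)=1$ rather than $0$) are handled separately rather than forced through the chain argument.
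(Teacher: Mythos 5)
Your proposal is correct and uses exactly the paper's argument: the paper's one-line proof likewise exploits the self-influence convention $(p\leadsto p)$ in round $r$ to conclude $\dist_r(p,q)\le 1+\dist_{r+1}(p,q)$, which you simply spell out via prepending the waiting step to a witnessing chain. Your extra care about the $p=q$ and infinite cases is consistent with, and subsumed by, the paper's reasoning.
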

\begin{proof}
Since $(p\lt p)$ in every round $r$, the definition of dynamic
causal distance trivially implies $\dist_r(p,q) \leq 1 + \dist_{r+1}(p,q)$. \qed
\end{proof}

Analogous to the dynamic diameter defined for undirected communication graphs in \cite{KOM11},
we now define the \emph{dynamic causal diameter} $\cdiam^x(R^I)$ for round $x$ in a 
$I$-VSRC $R^I$ as the largest round $x$ dynamic causal distance 
$cd^x(p,q)$ between any pair of processes $p, q \in R$:

\begin{definition}[Dynamic causal diameter]\label{def:cdiam}
Given a sequence of communication graphs $\G^r,\G^{r+1},\dots$,
let $I=[a,b]$, $r\leq a \leq b$, be a nonempty interval of indices in this 
sequence.\footnote{Note that we will implicitly assume that this sentence holds true
in the sequel when we write something like ``there is an interval $I=[a,b]$
with a VSRC $R^I$''.}
Assume that the subsequence of communication graphs $\G^x$ for $x\in I$ 
contains an $I$-VSRC $R^I$ with node set $R$. 
Then, the \emph{dynamic causal diameter} of $R^I$ for round $x$ is defined as
$\cdiam^x(R^I) :=  \max_{p,q \in R}\{cd^x(p,q)\}$.
\end{definition}

Obviously, it may be the case that $\cdiam^x(R^I)=\infty$ in general.
However, if $|I|$ is sufficiently large, the following \cref{lem:2} reveals 
that $\cdiam^x(R^I)<\infty$.

\begin{lemma}[Bound on dynamic causal diameter]\label{lem:2}
Given some $I=[a,b]$ and a VSRC $R^I$ with $|R| \ge 2$, if
$b \ge a+|R|-2$, then $\forall x \in [a,b-|R|+2] \colon \cdiam^x(R^I) \le |R|-1$.
\end{lemma}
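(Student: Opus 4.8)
The plan is to show that in a vertex-stable root component $R^I$ with $|R|=m$ nodes, during an interval of length at least $m-1$, the causal distance from any node to any other node is at most $m-1$. The key structural fact I want to exploit is that a root component $R^x$ is a strongly connected component with no incoming edges from outside; hence for every round $x\in I$ and every node $q\in R$, the in-neighbourhood $\N_q^x\cap R$ is non-empty and, more importantly, the restriction of $\G^x$ to $R$ is strongly connected. I would fix a round $x$ with $x\le b-m+2$ (so that the whole window $[x,x+m-2]$ lies inside $I$) and a pair $p,q\in R$, and argue that a causal influence chain from $p$ starting in round $x$ reaches $q$ within $m-1$ steps.

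**Main argument: the ``reachable set'' grows by at least one per round.** For $i\ge 0$ define $A_i := \{\,u\in R : (p\overset{x[i]}{\leadsto} u)\,\}$ — the set of nodes in $R$ reachable from $p$ via a causal influence chain of length exactly $i$ within the graphs $\G^x,\dots,\G^{x+i-1}$ (with $A_0=\{p\}$, using that $u\overset{r}{\leadsto}u$ always, which also gives $A_i\subseteq A_{i+1}$). The crucial claim is: if $A_i\subsetneq R$, then $A_{i+1}\supsetneq A_i$, i.e.\ the reachable set strictly grows until it exhausts $R$. To see this, suppose $A_i\ne R$. Since $\G^{x+i}$ restricted to $R$ is strongly connected, there must be an edge $(u\to v)$ in $\G^{x+i}$ with $u\in A_i$ and $v\in R\setminus A_i$ — otherwise $A_i$ would be a non-trivial set of vertices of a strongly connected graph with no outgoing edges, a contradiction. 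Concatenating the length-$i$ chain from $p$ to $u$ with this edge yields $(p\overset{x[i+1]}{\leadsto}v)$, so $v\in A_{i+1}\setminus A_i$. This uses only vertex-stability of $R$ over $I$ (so $R$ is a root component, hence $\G^{x+i}|_R$ is strongly connected, for every $x+i\in I$) and the fact that the window $[x,x+m-2]\subseteq[a,b]$, which holds because $x\le b-m+2$.

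**Finishing.** Since $A_0=\{p\}$ has size $1$ and each step strictly increases the size of $A_i$ as long as $A_i\ne R$, we get $A_{m-1}=R$; in particular $q\in A_{m-1}$, i.e.\ $cd^x(p,q)\le m-1$. As $p,q\in R$ were arbitrary, $\cdiam^x(R^I)\le m-1$. I would also note the edge in the definitions: the indices of the graphs used are $x,x+1,\dots,x+(m-1)-1 = x+m-2$, all of which are at most $b$ by the hypothesis $x\le b-m+2 = a+(b-a)-m+2$ together with $b\ge a+m-2$, so the whole chain lives inside the VSRC interval.

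**Expected main obstacle.** The one place that needs care is the ``strictly grows'' step: justifying that a strongly connected graph on vertex set $R$ has, for every proper non-empty $A_i\subsetneq R$, an edge leaving $A_i$. This is standard (strong connectivity means every vertex reaches every other, so in particular some vertex outside $A_i$ is reachable from inside, hence a crossing edge exists), but it must be invoked for the \emph{induced} subgraph $\G^{x+i}|_R$, which is strongly connected precisely because $R^{x+i}$ is a root component — no incoming edges from outside means the SCC $R$ is itself the whole component, so its induced subgraph is strongly connected. Getting the index bookkeeping right (that $[x,x+m-2]\subseteq I$, and off-by-one in $|I|=b-a+1$ versus chain length) is the only other fiddly point; everything else is a short induction.
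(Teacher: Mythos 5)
Your proof is correct and follows essentially the same route as the paper's: both fix $p\in R$ and a starting round $x\le b-|R|+2$, track the set of nodes in $R$ causally reached from $p$ after $i$ rounds, and use strong connectivity of the round graph restricted to $R$ (valid in every round of $I$ by vertex-stability) to show this set gains at least one new node per round until it equals $R$, giving $cd^x(p,q)\le |R|-1$. Your index bookkeeping (the chain uses only $\G^x,\dots,\G^{x+|R|-2}$, all inside $I$) matches the paper's choice $k=|R|-1$ with $k\le b-x+1$, so nothing is missing.
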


\begin{proof}
Fix some process $p\in R$ and some $x$ where $a\leq x \leq 
b-|R|+2$.  Let $\P_0=\set{p}$, and define for each $i>0$ the set
$\P_i=\P_{i-1} \cup \{q: \exists q'\in\P_{i-1}: q'\in\N_{q}^{x+i-1}
\cap R\}$. 
$\P_i$ is hence the set of processes $q\in R$ such that $(p\ltedge{x[i]}q)$ holds. 
Using induction, we will show that  $|\P_k| \geq \min\{|R|,k+1\}$
for $k\geq0$. Induction base $k=0$: $|\P_0| \geq \min\{|R|,1\}=1$ follows immediately from
$\P_0=\set{p}$.
Induction step $k \to k+1$, $k\geq 0$:  
Clearly the result holds if $|\P_k|=|R|$, thus we
consider round $x+k$ and $|\P_k|<|R|$: 
It
follows from strong connectivity of $\Ga^{x+k} \cap R$ that there is a set of
edges from processes in $\P_k$ to some non-empty set
$\L_k\subseteq R\setminus\P_k$. Hence, we have
$\P_{k+1}=\P_k\cup\L_k$, which implies $|\P_{k+1}| \geq |\P_k| + 1 \geq k+1 + 1 = k+2 = \min\{|R|,k+2\}$
by the induction hypothesis.

Thus, in order to guarantee $R=\P_{k}$ and thus $|R|=|\P_{k}|$, choosing
$k$ such that $|R|=1+k$ and $k\le b-x+1$ is sufficient. 
Since $b\ge x+|R|-2$, both conditions can be fulfilled by choosing $k=|R|-1$.
Moreover, due to the definition of $\P_k$, it follows that $\dist_{x}(p,q)\le|R|-1$
for all $q\in R$. Since this holds for any $p$ and any $x\leq s-|R|+2$, 
the statement of \cref{lem:2} follows. \qed
\end{proof}

\cref{lem:2} thus implies that information available at any node $p\in R$
at the beginning of round $x \in [a,b-|R|+2]$ has spread to all other 
nodes in $R$ by the end of round $b$, i.e., during $I$. On the other hand,
it may be the case for some particular VSRC $R^I$ with $|I|< |R|-1$ 
that the information available at the beginning of some round $x \in I$ 
has already spread to all other nodes in $R$ by the end of round $b$.
\cref{lem:infprop} reveals that this implies that the information available
at any round $x' \in [r,x]$ has also been spread to all nodes in $R$
by the end of round $b$.

\begin{lemma}[Information propagation]\label{lem:infprop}
Suppose that $R^I$ for $I=[a,b]$ is an $I$-VSRC of size $|R|\ge 2$, such
that there is some $x\in [a,b]$ with $x+\cdiam^x(R^I)-1 \leq b$.  Then, 
for every $x'\in [a,x]$, it holds that $x'+\cdiam^{x'}(R^I)-1\leq 
  b$.
\end{lemma}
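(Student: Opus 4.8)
The plan is to reduce the statement to a simple monotonicity bound on the dynamic causal distance: for any $p,q\in\Pi$ and any rounds $x',x$ with $x'\le x$,
\[
\dist_{x'}(p,q) \;\le\; \dist_{x}(p,q) + (x-x').
\]
This is exactly the telescoping consequence of applying \cref{lem:cd} a total of $x-x'$ times, since each application gives $\dist_{s}(p,q)\le\dist_{s+1}(p,q)+1$, and chaining these for $s=x',x'+1,\dots,x-1$ yields the displayed inequality. (Equivalently, one may argue directly from the definition of a causal influence chain: as $p$ influences itself in each of the rounds $x',x'+1,\dots,x-1$, any length-$\dist_{x}(p,q)$ chain from $p$ in round $x$ to $q$ can be prepended with $x-x'$ self-influences to obtain a chain of length $(x-x')+\dist_{x}(p,q)$ from $p$ in round $x'$ to $q$, whence $\dist_{x'}(p,q)$ is at most that.)

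First I would fix $x'\in[a,x]$, noting that both $\cdiam^{x'}(R^I)$ and $\cdiam^{x}(R^I)$ are well defined because $R^I$ is an $I$-VSRC and $x',x\in I$. Applying the bound above to every pair $p,q\in R$ and taking the maximum over all such pairs on both sides, and using $\cdiam^{s}(R^I)=\max_{p,q\in R}\dist_{s}(p,q)$ from \cref{def:cdiam}, I obtain $\cdiam^{x'}(R^I)\le\cdiam^{x}(R^I)+(x-x')$. Substituting the hypothesis $x+\cdiam^{x}(R^I)-1\le b$ then finishes the proof:
\[
x'+\cdiam^{x'}(R^I)-1 \;\le\; x'+\bigl(\cdiam^{x}(R^I)+(x-x')\bigr)-1 \;=\; x+\cdiam^{x}(R^I)-1 \;\le\; b.
\]

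I do not anticipate a genuine obstacle here; the argument is short and the only arithmetic is the telescoping sum. The one point worth a remark is that \cref{lem:cd} is a statement about $\dist$ over the full communication-graph sequence---the chains realizing these causal distances need not stay inside $R$---so the pairwise monotonicity bound holds irrespective of $R$, and restricting the endpoints to $R$ only afterwards is harmless. In particular, the hypothesis $x+\cdiam^{x}(R^I)-1\le b$ already forces $\cdiam^{x}(R^I)<\infty$, so all quantities above are finite and the chain of (in)equalities is unambiguous.
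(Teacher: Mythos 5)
Your proof is correct and follows essentially the same route as the paper: the paper's own argument also iterates \cref{lem:cd} to push the quantity $x+\dist_x(p,q)-1\le b$ backwards from $x$ to any $x'\in[a,x]$ for every pair $p,q\in R$, which is exactly your telescoped inequality $\dist_{x'}(p,q)\le\dist_x(p,q)+(x-x')$ followed by taking the maximum over pairs. No gap; your remark that the realizing chains need not stay inside $R$ is a fair (and harmless) clarification.
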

\begin{proof}
\cref{lem:cd} reveals that for all $p,q \in R^I$, we have
$x-1+\dist_{x-1}(p,q)-1 \leq x+\dist_x(p,q)-1\leq s$, which implies 
$x'+\dist_{x'}(p,q)-1\leq s$ for every $x'$ where $r\leq x' \leq x$ and 
proves our lemma.  
\qed
\end{proof}

Conversely, assume that some particular VSRC $R^I$ is such that information
available at the beginning of round $a$ reaches all members of $R$ by
the end of some round $a+D-1 < b$, i.e., $\cdiam^a(R^I) \leq D$
for some $D < |I|$. Can we infer something about $\cdiam^x(R^I)$
for later rounds $x>a$ in this case? In particular, will information available at the
beginning of round $b-D+1$ be spread to all nodes by the end of round $b$?
Unfortunately, in general, this is not the case, as the following simple example 
for $I=[1,2]$ and $|R|=3$ shows: If
$\Ga^1$ is the complete graph whereas $\Ga^2$ is a ring, $\cdiam^1(R^I)=D=1$,
but information propagation starting at round 2 does not reach all other nodes
by the end of of round $2$. 

This stimulated the following \cref{def:D-bounded-VSRC}, 
which parameterizes the worst-case information propagation in
a VSRC via a parameter $D$ that represents its dynamic causal diameter. Informally,
it guarantees that messages sent by any process in $R$, in any but the last $D-1$ rounds of 
$I$, reach all members of $R$ within $I$.

\begin{definition}[$D$-bounded $I$-VSRC]
  \label{def:D-bounded-VSRC}
An $I$-vertex-stable root component $R^I$ with $I=[a,b]$ is \emph{\goodD},
with \emph{dynamic causal diameter} $D>0$, if either $|I| < D$ \footnote{That is,
by convention, we also call a VSRC \goodD\ that is too short to be interesting.
Obviously, such a VSRC need not guarantee information propagation
within $D$ rounds. Note 
that it would actually be possible to write $|I|\leq D$ here, as our algorithms
do not even consider $D$-VSRCs as interesting; we chose the present definition for
consistency with \cref{def:E-network-vertex-stable-roots} for \goodDE\ VSRCs, however.} 
or else $\forall x \in [a,b-D+1] \colon \cdiam^x(R^I) \leq D$.
\end{definition}

\cref{lem:2} showed that every sufficiently long VSRC $R^I$ is $D$-bounded for 
$D\geq |R|-1$; all sufficiently long VSRCs are hence necessarily $(n-1)$-bounded.
On the other hand, choosing some $D<n-1$ can be used to force the message adversary to 
speed-up information propagation accordingly. For example, we show in \cref{sec:expander} that certain expander graph topologies ensure $D=O(\log n)$.

\medskip

To formalize information propagation from root components to the rest of the network,
one has to account for the fact that a process $q$ outside any root 
component may be reachable from \emph{multiple} root components in general. 
Intuitively speaking, this models dynamic networks that
do not ``cleanly'' partition. Given a sequence of communication 
graphs $\G^r,\G^{r+1},\dots$ containing a set $S^I=\{R_1^I,\dots,R_{\ell}^I\}$ 
of $\ell\geq 1$ $I$-VSRCs, all vertex-stable
in the same interval $I=[a,b]$, let the round $x$ \emph{dynamic network causal diameter} $\cheight^x$ be
the maximum, taken over all processes $q \in \Pi$,
of the minimal dynamic causal distance $cd^x(p,q)$ from \emph{some} process
$p\in \bigcup_{i=1}^\ell R_i^I$ in round $x$, formally
$\cheight^x(S^I):=\max_{q\in\Pi}\bigl\{\min_{p \in \cup_{i=1}^{\ell}R_i}\{cd^x(p,q)\}\bigr\}$.
\cref{def:E-network-vertex-stable-roots} will be used in the sequel
to guarantee that every process in the network receives a message
from some member
of at least one VSRC in $S^I=\{R_1^I,\dots,R_{\ell}^I\}$ within $\nwbound$ rounds
if $|I|\geq \nwbound$.

\begin{definition}[$\nwbound$-network-bounded $I$-VSRC]
\label{def:E-network-vertex-stable-roots}
A set $S^I=\{R_1^I,\dots,R_{\ell}^I\}$ of $\ell\geq 1$ $I$-VSRCs with
$I=[a,b]$ is $\nwbound$-network-bounded,
with \emph{dynamic network causal diameter} $\nwbound>0$, if either $|I|<\nwbound$
or else $\forall x \in [a,b-\nwbound+1]:\, \cheight^x(S^I) \leq H$.
\end{definition}

Note that \cref{def:E-network-vertex-stable-roots} guarantees $(p \overset{x[H]}{\leadsto} q)$
for \emph{at least one} but not for all $p\in R_i$. Moreover, $p$ (and hence
$R_i$) may be different for different starting rounds $x$ in $I$.

A comparison of \cref{def:E-network-vertex-stable-roots} and \cref{def:D-bounded-VSRC}
reveals that it always holds that $H\geq D$. Moreover, in the case $\ell=1$ (where $S^I$ contains 
a single root component $R^I$ only), \cref{def:E-network-vertex-stable-roots} is
exactly \cref{def:D-bounded-VSRC} with the dynamic causal diameter 
$\cdiam^x(R^I)$ replaced by the dynamic network causal diameter 
$\cheight^x(S^I)=:\cheight^x(R^I)$. Finally, analogous to \cref{lem:2},
the following \cref{lem:bncd} shows that the dynamic network causal diameter $\nwbound$
is bounded by $n-1$, provided $b-a \geq n-2$.

\begin{lemma}[Bound on dynamic network causal diameter]\label{lem:bncd}
Suppose there is some interval $I=[a,b]$ where there is a set
$S^I=\{R_1^I,\dots,R_{\ell}^I\}$
of exactly
$\ell\geq 1$ $I$-vertex-stable root components.
If $b\geq a+n-2$ and $n\geq 2$, then $S^I$ is $n-1$-network bounded.
\end{lemma}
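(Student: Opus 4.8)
The plan is to mimic the breadth-first-search / frontier-growth argument used in the proof of \cref{lem:2}, but now starting the frontier from the \emph{union} of all root components rather than from a single process. Fix a round $x$ with $a \le x \le b-n+2$ (this interval is nonempty because $b \ge a+n-2$). Let $\P_0 := \bigcup_{i=1}^{\ell} R_i$, the set of all processes lying in some VSRC, and for $i > 0$ define $\P_i := \P_{i-1} \cup \{q : \exists q' \in \P_{i-1} : q' \in \N_q^{x+i-1}\}$, exactly as in \cref{lem:2} but without the restriction ``$\cap R$'', since now the frontier is allowed to grow through arbitrary processes of $\Pi$. As before, $\P_i$ is precisely the set of $q \in \Pi$ such that $(p \ltedge{x[i]} q)$ holds for some $p \in \P_0$.

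The key claim, proved by induction on $k$, is that $|\P_k| \ge \min\{n, |\P_0| + k\}$ for $k \ge 0$. The base case $k=0$ is immediate. For the step, assume $|\P_k| < n$; we must exhibit at least one new process entering the frontier in round $x+k$. Here is where the weak-connectivity hypothesis on the communication graphs enters: each $\G^{x+k}$ is weakly connected (this is guaranteed by the message adversary under which VSRCs are defined — recall from the surrounding text that the root-component assumption always comes together with per-round weak connectivity), so as long as $\P_k \neq \Pi$ there must be at least one edge crossing from inside $\P_k$ to outside it. A priori such an edge could go the ``wrong'' way (from outside into $\P_k$), which would not help; the point is that $\P_k$ contains \emph{all} the root components, hence $\Pi \setminus \P_k$ contains \emph{no} root component, hence in the contraction of $\G^{x+k}$ the subgraph induced on $\Pi\setminus\P_k$ cannot be a union of source SCCs, so some SCC in $\Pi\setminus\P_k$ must have an incoming edge — and since every process ultimately reachable only from root components, that incoming edge can be traced back (within $\Pi\setminus\P_k$ or) to $\P_k$. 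More simply: take a weakly-connected $\G^{x+k}$, contract SCCs, and observe that the only SCCs without incoming edges are the root components, all of which lie in $\P_k$; following edges backwards from any vertex of $\Pi\setminus\P_k$ must therefore eventually reach $\P_k$, so there is an edge from $\P_k$ into $\Pi\setminus\P_k$, giving $|\P_{k+1}| \ge |\P_k|+1$ and completing the induction.

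Since $|\P_0| \ge 1$, the claim gives $|\P_{n-1}| \ge \min\{n, 1+(n-1)\} = n$, i.e.\ $\P_{n-1} = \Pi$. By the characterisation of $\P_k$, this means that for every $q \in \Pi$ there is some $p \in \bigcup_i R_i$ with $(p \ltedge{x[n-1]} q)$, hence $\min_{p \in \cup_i R_i} cd^x(p,q) \le n-1$, and taking the maximum over $q$ yields $\cheight^x(S^I) \le n-1$. We need $k = n-1$ rounds of frontier growth, i.e.\ we need $x+n-1 \le b+1$, which holds for every $x \le b-n+2$; and such $x$ exist precisely because $b-a \ge n-2$. Thus $\forall x \in [a, b-(n-1)+1] : \cheight^x(S^I) \le n-1$, which is exactly the condition in \cref{def:E-network-vertex-stable-roots} for $S^I$ to be $n-1$-network-bounded (the alternative clause $|I| < n-1$ being irrelevant since $|I| = b-a+1 \ge n-1$). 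The main obstacle is the induction step: unlike in \cref{lem:2}, strong connectivity of the frontier is not available, so one must argue carefully that weak connectivity of $\G^{x+k}$ plus ``$\P_k$ contains all root components'' forces an \emph{outgoing} edge from $\P_k$; the contraction-of-SCCs / \cref{lem:root}-style observation is the crux and should be stated explicitly.
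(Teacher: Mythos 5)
Your proposal is correct and follows essentially the same route as the paper's proof: the same frontier-growth induction starting from $\P_0=\bigcup_{i=1}^{\ell}R_i$, growing by at least one process per round until $\P_{n-1}=\Pi$, which yields $\cheight^x(S^I)\le n-1$ for all $x\in[a,b-n+2]$; your explicit contraction-of-SCCs argument is precisely the justification behind the paper's terser statement that every node lies in a weakly connected component containing at least one root (necessarily in $\P_k$), so an outgoing edge from $\P_k$ exists. The only cosmetic difference is that you invoke per-round weak connectivity of the whole graph, which the lemma does not assume and is not needed: the hypothesis that the root components in every round of $I$ are exactly $R_1,\dots,R_\ell\subseteq\P_0$ already guarantees, via the \cref{lem:root}-style argument applied within each weakly connected component, a directed edge leaving $\P_k$ whenever $\P_k\ne\Pi$.
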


\begin{proof}
Let $P_0=\bigcup_{i=1}^{\ell} R_i$ and fix
any $x$ where $a \le x \le b-n+2$. Define,
for each $i>0$, the set $P_i = P_{i-1} \cup \{q: \exists
q' \in P_{i-1}:q' \in \N_q^{x+i-1}\}$. $P_i$ is hence the set of processes $q$
such that $(p \overset{x[i]}{\leadsto} q)$ holds for at least one $p\in P_0$.
Using induction, we will show
that $|P_k| \ge \min\{n, k+1\}$ for $k \ge 0$. Induction start $k=0:|P_0|
\ge \min\{n,1\}=1$ follows immediately from $P_0 \supseteq \{p_1,\dots,p_\ell\}$ with
$\ell\geq 1$.
Induction step $k \rightarrow k+1,k \ge 0$: First assume that already
$|P_k|=n$; since $|P_{k+1}| \ge |P_k| = n \ge
\min\{n,k+2\}$, we are done. Otherwise, consider round $x+k$ and $|P_k| < n$:
Since every node $q\in\Pi$ is in a weakly connected component containing
at least one root in every round, hence also in $\G^{x+k}$, there is a set
of edges from processes in $P_k$ to some non-empty set $L_k \subseteq
\Pi\setminus P_k$.
Hence, we have $P_{k+1}=P_k \cup L_k$, which implies $|P_{k+1}| \ge |P_k|+1 \ge
k+1+1= k+2= \min\{n,k+2\}$ by the induction hypothesis.
Thus, in order to guarantee $\Pi=P_k$ and thus $n=|P_k|$, choosing $k$ such
that $n=1+k$ and $k \le b-x+1$ is sufficient. Since $b \ge x+n-2$,
both conditions can be fulfilled by choosing $k = n-1$. Moreover, due to the
definition of $P_k$, it follows that for all $q \in \Pi$ there is some $p\in
P_0$
with $cd^{x}(p,q) \le n-1$, implying $\cheight^{x} \le n-1$. Since this holds
for any
$x \le b-n+2$
following \cref{def:E-network-vertex-stable-roots}, this implies
\cref{lem:bncd}. \qed
\end{proof}

\subsection{An example for $\nwbound < n-1$: Expander topologies}
\label{sec:expander}

We conclude this section with an example of a network topology that guarantees a dynamic causal 
network diameter $\nwbound$ that is much smaller than $n-1$, which justifies why
we introduced this parameter (as well as $D$) explicitly in our model. 

An undirected graph $\G$ is an 
\emph{$\alpha$-vertex expander} if, for all sets $S \subset V(\G)$ of size $\le 
|V(\G)|/2$, it holds that $\frac{|\N(S)|}{|S|} \ge \alpha$, where $\N(S)$ is
the set of neighbors of $S$ in $\G$, i.e., those nodes in
$V(\G)\setminus S$ that have a neighbor in $S$. (Explicit expander constructions can be found in \cite{HLW2006:Expander}.)
As we need an expander property for \emph{directed} communication graphs,
we consider, for a vertex/process set $S$ and a round $r$,
both the set $\N^r_+(S)$ of nodes outside of $S$ that are reachable from $S$ and 
the set of nodes $\N^r_-(S)$ that can reach $S$ in $r$. \cref{ass:fast}
ensures an expansion property both for subsets $S$ chosen from root components 
(property (a)) and other processes (properties (b), (c)).

\begin{definition}[Directed Expander Topology] \label{ass:fast}
There is a fixed constant $\alpha$ and a fixed set $R$ such that the following 
conditions hold for all sets $S \subseteq V(\G^r)$:
\begin{compactenum}
\item[(a)] If $|S|\le |R|/2$ and $S\subseteq R$, then
  $\frac{|\N^r_+(S) \cap R|}{|S|} \ge \alpha$ and $\frac{|\N^r_-(S) 
  \cap  R|}{|S|} \ge \alpha$.
\item[(b)] If $|S|\le n/2$ and $R\subseteq S$, then
  $\frac{|\N^r_+(S)|}{|S|} \ge \alpha$.
\item[(c)] If $|S|\le n/2$ and $R \cap S = \emptyset$, then 
  $\frac{|\N^r_-(S)|}{|S|} \ge \alpha$.
\end{compactenum}
\end{definition}

The following \cref{lem:expgraphs} shows that (1) \cref{ass:fast} 
does not contradict the existence of
     a single root component and that (2) these expander
     topologies guarantee both a dynamic causal diameter $D=O(\log n)$ for
$I$-VSRCs with $|I|=O(\log n)$ and a dynamic causal network diameter $\nwbound=O(\log n)$.

\begin{lemma}\label{lem:expgraphs}
  There are sequences of graphs $(\Gr)_{r>0}$ with a single root component in
every $\Gr$ where \cref{ass:fast} holds and where, for any such run,
  there is an interval $I$ during which there exists a \goodD\ and 
$\nwbound$-network-bounded $I$-vertex stable root component with $D=O(\log n)$
and $\nwbound=O(\log n)$.
\end{lemma}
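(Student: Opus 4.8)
The plan is to exhibit a concrete family of feasible graph sequences built from classical explicit expanders and to replace the crude $|R|-1$ resp.\ $n-1$ bounds of \cref{lem:2} and \cref{lem:bncd} by a geometric ``ball-growth'' estimate tailored to the expansion in \cref{ass:fast}. Concretely I would fix a set $R\subseteq\Pi$ with $|R|=\lceil n/2\rceil+1$ (so that, since $|R|>n/2$, property~(b) of \cref{ass:fast} is vacuous) and put $\bar R=\Pi\setminus R$; then take a constant-degree undirected vertex expander on $R$, a constant-degree undirected vertex expander on $\bar R$, and a constant-degree bipartite graph between $R$ and $\bar R$ in which every vertex of $\bar R$ has a neighbour in $R$ (all explicit, cf.\ \cite{HLW2006:Expander}). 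Let $G$ be the directed graph obtained by orienting the two expanders in both directions and the bipartite edges from $R$ to $\bar R$, let $\alpha\le 1$ be a common vertex-expansion constant of the two expanders, and let the message adversary be the set of all sequences $(\Gr)_{r>0}$ in which every $\Gr$ has this form with the same root set $R$; the constant sequence $\Gr=G$ already witnesses the lemma.

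Next I would verify the structural claims. For each round $r$, $R$ is strongly connected in $\Gr$ and has no incoming edge from outside (the only cross edges point $R\to\bar R$), hence is a root component; and it is the \emph{unique} one, since if some $q\in\bar R$ were unreachable from $R$ in $\Gr$, the set $T$ of all such $q$ would be a non-empty subset of $\bar R$ with no edge entering it from outside $T$, contradicting property~(c), which forces $|\N^r_-(T)|\ge\alpha|T|>0$. For \cref{ass:fast} itself: property~(a) is exactly the two-sided expansion of the $R$-expander (the in-/out-neighbours inside $R$ of $S\subseteq R$ are its undirected neighbours there), property~(b) is vacuous, and in property~(c) the undirected neighbours of $S\subseteq\bar R$ inside $\bar R$ are already $\N^r_-$-neighbours, so $|\N^r_-(S)|\ge\alpha|S|$.

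The heart of the proof is the quantitative bound: I would show a constant $c=c(\alpha)$ with $cd^x(p,q)\le c\log n$ for every round $x$, every $p\in R$ and every $q\in\Pi$. Fixing $p\in R$ and $x$, let $A_k$ be the set of processes influenced from $p$ in round $x$ within $k$ steps (so $A_0=\{p\}$), and run a three-phase argument. Phase (i): while $|A_k\cap R|\le|R|/2$, property~(a) out-expansion gives $|A_{k+1}\cap R|\ge(1+\alpha)|A_k\cap R|$, so $|A_k\cap R|>|R|/2$ after $O(\log n)$ rounds. Phase (ii): once $|A_k\cap R|>|R|/2$, the unreached part of $R$ has all its in-neighbours in $R$ (no $\bar R\to R$ edges), so property~(a) in-expansion gives $|R\setminus A_{k+1}|\le|R\setminus A_k|/(1+\alpha)$, whence $R\subseteq A_k$ after $O(\log n)$ further rounds. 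Phase (iii): once $R\subseteq A_k$, the unreached set lies in $\bar R$ and, by property~(c), shrinks by a factor $1/(1+\alpha)$ per round, so $A_k=\Pi$ after $O(\log n)$ more rounds. Summing the three phases gives the claimed bound.

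Finally I would conclude: take $D:=\nwbound:=\lceil c\log n\rceil$ and any interval $I=[a,b]$ with $|I|\ge D$, on which $R$ is vertex-stable by the second step; then $\cdiam^x(R^I)\le D$ and $\cheight^x(R^I)\le\nwbound$ for every $x$, so $R^I$ is \goodD\ and \goodDE\ with $D=\nwbound=O(\log n)$, which is the lemma. The step I expect to be the main obstacle is the bookkeeping in the three-phase estimate --- in particular the clean switch inside $R$ from ``out-expansion grows the reached set'' to ``in-expansion shrinks the unreached set'' at the $|R|/2$ threshold --- together with making a single expansion constant $\alpha$ serve all of properties~(a)--(c) simultaneously, which is precisely what motivated the choice $|R|=\lceil n/2\rceil+1$.
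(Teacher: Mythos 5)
Your construction and the three-phase ball-growth argument are fine as far as they go, and the forward-only ``shrink the complement'' estimate is a legitimate alternative to the paper's argument (the paper instead grows a forward ball from $p$ and a backward ball from $q$ until each exceeds half of $R$, resp.\ half of $n$, and lets them meet in the middle). The genuine gap is in the scope of what you prove. The clause ``for any such run'' in the lemma quantifies over \emph{every} sequence with a single root per round in which \cref{ass:fast} holds, not just over a family you are free to design; the paper's proof says explicitly that the explicit construction only serves to show existence, and then derives $D=O(\log n)$ and $\nwbound=O(\log n)$ for \emph{arbitrary} topologies satisfying \cref{ass:fast}. By fixing $|R|=\lceil n/2\rceil+1$ you make property~(b) vacuous and never use it, but \cref{ass:fast} does not force $R$ to be large: for a small root component (say $|R|=O(1)$), after your Phase~(ii) the unreached set $C_k=\Pi\setminus A_k$ can have size well above $n/2$, so property~(c) is not applicable to it and Phase~(iii) does not start. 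This is exactly the case property~(b) exists for: you need an intermediate phase in which, once $R\subseteq A_k$ and while $|A_k|\le n/2$, out-expansion~(b) grows $A_k$ geometrically until $|A_k|>n/2$, and only then can the (c)-based complement shrinking take over. With that phase added, your scheme does recover the paper's statement; without it, you have only proved the bound for your special family, which is weaker than the lemma.

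Two smaller points. First, your verification of property~(c) for the constructed graph is incomplete: the $\bar R$-expander only expands sets $S\subseteq\bar R$ with $|S|\le|\bar R|/2$, whereas (c) must hold for all $S\subseteq\bar R$ up to size $n/2$ (including $S=\bar R$); for large $S$ you must fall back on the bipartite $R\to\bar R$ edges, which give $|\N^r_-(S)|\ge|S|/d$ for the $R$-side degree $d$, so the constant in (c) must be taken as $\min\{\alpha,1/d\}$ rather than the expanders' $\alpha$. Second, in Phases~(ii) and~(iii) the geometric decay does not follow from applying in-expansion to the current unreached set $B_k$ (that only bounds the number of \emph{senders} in $A_k$ with edges into $B_k$, which could all hit one vertex); you must apply (a), resp.\ (c), to the \emph{next} unreached set $B_{k+1}$ (resp.\ $C_{k+1}$): every in-neighbour of $B_{k+1}$ lies in $B_k\setminus B_{k+1}$, whence $|B_k\setminus B_{k+1}|\ge\alpha|B_{k+1}|$ and the claimed factor $1/(1+\alpha)$. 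Both issues are repairable, but as written the proof does not close the ``any such run'' part of the lemma.
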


\begin{proof}
We will first argue that \emph{directed} graphs with a single root exist that
satisfy \cref{ass:fast}.
Consider the simple \emph{undirected} graph $\bar{\U}$ that is the 
union of an $\alpha$-vertex expander on $R^I$ with member set $R$, and
an $\alpha$-vertex expander on $V(\G^r)$. 
We turn $\bar{U}$ into a directed graph by replacing every edge $(p,q)\in E(\bar{\U})$ with oriented 
directed edges $p\ra q$ and $q\ra p$. This guarantees Properties
(a)-(c). In order to guarantee the existence of exactly one root
component, we drop all directed edges pointing to
$R^I$ from the remaining graph, i.e., we remove all edges  $p\ra q$
where $p\not\in R$ and $q\in R$, which leaves Properties (a)-(c)
intact and makes the $R$ from \cref{ass:fast} the single
root component of the graph.
We stress that the actual topologies chosen by the adversary might be 
quite different from this construction, which merely serves us to show 
the existence of such graphs.

We also recall that our message adversaries like the one given in
\cref{ass:window} will rely on vertex-stable root components $R^I$,
which only require that the set of its vertices $R$ remain unchanged, 
whereas the interconnect topology can change arbitrarily. Adding
\cref{ass:fast} does of course not change this fact. 

We will first show that the ``per round'' expander topology stipulated
by \cref{ass:fast} is strong enough to guarantee that every sufficiently long 
VSRC is \goodD\ with $D=O(\log n)$.

For $i\ge1$, let $\P_i\subseteq R$ be the set of processes $q$ in $R^I$ 
with $I=[a,b]$ and $|I|=O(\log n)$ such that $(p\ltedge{a[i]}q)$, and $\P_0=\{p\}$.
The result $D=O(\log n)$ follows immediately from \cref{lem:2}
if $|R|\in O(\log n)$, so assume that 
$|R|\in\Omega(\log n)$ and consider some process $p \in R$.
For round $a$, Property (a) yields $|\P_{1}| \ge |\P_{0}|(1+\alpha)$.
In fact, for all $i$ where $|\P_{i}|\le |R|/2$, we can apply Property (a) to 
get $|\P_{i + 1}| \ge |\P_{i}|(1+\alpha)$, hence $|\P_{i}| \ge
\min\{(1+\alpha)^i,|R|/2\}$. 
Let $\ell$ be the smallest value such that
$(1+\alpha)^{\ell}>|R|/2$, which guarantees that $|\P_{\ell}|>|R|/2$.
That is, $\ell = \left\lceil\frac{\log(|R|/2)}{\log(1+\alpha)}\right\rceil 
\in O(\log n)$.
Now consider any $q\in R$ and define $\Q_{i-1}\subset R$ as 
the set of nodes that causally influence the set $\Q_i$ in round $a+i$, for 
$\Q_{2\ell+1}=\{q\}$.
Again, by Property (a), we get $|\Q_{i-1}| \ge |\Q_i|(1+\alpha)$, so $|\Q_{2k-i}| \ge 
\max\{(1+\alpha)^{i},|R|/2\}$. From the definition of $\ell$ above, we
thus have $|\Q_{\ell}|>|R|/2$.
Since $\P_{\ell} \cap \Q_{\ell} \ne \emptyset$, it follows that every $p \in 
R$ influences every $q\in R$ within $2\ell\in O(\log n)$
rounds. While the above proof has been applied to the starting round $x=a$
only, it is evident that it carries over literally also for any 
$x<s-2\ell$, which shows that $R^I$ is indeed \goodD.

What remains to be shown is that $\nwbound$-network-boundedness with $\nwbound= O(\log
n)$ also holds. We use Properties (b) and (c) similarly as in the above proof:
For any round $x \in [r,s-2k']$, we know by (b) that any process $p\in R$ has 
influenced at least $n/2$ nodes by round $x+k'$ where $k' = \lceil 
\log_{1+\alpha}(n/2)\rceil \in O(\log n)$ by arguing as for the $\P_i$
sets above.
Now (c) allows us to reason along the same lines as for the sets 
$\Q_{i-1}$ above. That is, any $q$ in round $x+2k'$ will be influenced 
by at least $n/2$ nodes. Therefore, any $p$ will influence every $q\in\Pi$ by 
round 
$x+2k'$, which completes the proof. \qed
\end{proof}

This confirms that sequences of communication graphs with 
$D < n-1$ and $\nwbound < n-1$ indeed exists and are
compatible with message adversaries such as \MAd{d} stated in
\cref{ass:window} below.

\section{Consensus Impossibilities and Lower Bounds} \label{sec:consimposs}

In this section, we will prove that some a priori
knowledge of the dynamic network causal diameter and
the existence of a stable interval of a certain minimal size
are inevitable for soving consensus in our model. Moreover, we will introduce
the message adversary \MAd{d}, which will be shown in \cref{sec:consensus} to be 
weak enough for solving consensus if $d=2D+2\nwbound+2 \leq 4\nwbound+2$, albeit it is
too strong for solving other standard problems in dynamic networks like
reliable broadcasting.

Since consensus is trivially impossible for an unrestricted message
adversary, which may just inhibit any communication in the system, 
we start from a message adversary that guarantees weakly connected 
communication 
graphs $\Gr$ in every round $r$. However, it is not 
difficult to see that this not sufficient for solving consensus,
even when all $\Gr=\G$ are the same, i.e., in a static topology: 
Consider the case where $\G$ contains two root components $R_1$ and
$R_2$; such a graph obviously exists, cp.\ \cref{lem:root} below.
If all processes in $R_1$ start with initial value $0$ and 
all processes in $R_2$ start with initial value $1$, they must decide
on their own initial value and hence violate agreement. After all,
no process in, say, $R_1$ ever has an incoming link from any process not
in $R_1$.

We hence restrict our attention to message adversaries that guarantee
a \emph{single} root component in $\G^r$ for any 
round $r$. \cref{fig:graphs} showed a sequence of graphs where
this is the case. Some simple properties of such graphs
are asserted by \cref{lem:root}.

\begin{lemma}\label{lem:root}
Any $\Gr$ contains at least one and at most $n$ root components (isolated processes), 
which are all disjoint.
If $\Gr$ contains a single root component $R^r$, then 
$\Gr$ is weakly connected, and there is a directed 
(out-going) path from every $p\in R^r$ to every $q \in \Gr$.
\end{lemma}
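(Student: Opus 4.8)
The plan is to work with the \emph{condensation} of $\Gr$, i.e., the directed graph obtained by contracting each strongly connected component (SCC) of $\Gr$ to a single vertex; this condensation is always a finite directed acyclic graph (DAG). The single observation that drives everything is: \emph{any nonempty finite directed graph contains a root component.} Indeed, in the condensation DAG, start at any vertex and repeatedly follow an incoming edge backwards; since the DAG is finite and acyclic, this walk must terminate at a vertex with no incoming edge, and the SCC it represents is by \cref{def:RC} a root component. I will reuse this fact, applied to various induced subgraphs, three times below.

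From here the ``at least one'' part is immediate. For ``at most $n$'' and disjointness, I would simply note that root components are SCCs, and the SCCs of a graph form a partition of its vertex set (they are the equivalence classes of the mutual-reachability relation); hence distinct root components are disjoint and there are at most $n=|\Pi|$ of them, the bound being attained when $\Gr$ consists of $n$ isolated vertices (each its own root component), which explains the parenthetical ``isolated processes''.

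For ``a single root component implies weak connectivity'', I argue by contraposition: if $\Gr$ had at least two weakly connected components, then each of them, viewed as a directed graph in its own right, is nonempty and has \emph{no} edges to or from the others, so by the opening observation each contains a root component, and such a component has no incoming edge in $\Gr$ at all. This yields at least two root components of $\Gr$, contradicting uniqueness. For the reachability claim, let $W\subseteq\Pi$ be the set of vertices reachable from $R^r$ by a directed path; clearly $R\subseteq W$. If $W\neq\Pi$, apply the opening observation to the subgraph of $\Gr$ induced by $\Pi\setminus W$ to obtain a set $R'$ that is a root component of that induced subgraph. Then $R'$ has no incoming edge from $\Pi\setminus W$ (it is a root there) and none from $W$ either, since an edge $(q\ra p)$ with $q\in W$, $p\in R'$ would force $p\in W$; hence $R'$ is a root component of $\Gr$ disjoint from $R^r$ (as $R'\subseteq\Pi\setminus W$ and $R^r\subseteq W$), again contradicting uniqueness. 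Therefore $W=\Pi$, and since $R^r$ is strongly connected, there is an out-going directed path from every $p\in R^r$ to every $q\in\Gr$.

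I do not expect any real obstacle here: this is a routine finite-graph argument. The only step needing a modicum of care is the last one — verifying that the root component $R'$ of the induced subgraph on $\Pi\setminus W$ is in fact a root component of the whole $\Gr$ — and that is precisely where the downward-closedness of $W$ under reachability from $R^r$ is used.
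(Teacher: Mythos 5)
Your proof is correct, and its first half coincides with the paper's: both obtain existence of a root component from the SCC condensation (a finite DAG has a source), and the disjointness and at-most-$n$ count are the same. Where you diverge is the second statement. The paper stays inside the condensation DAG: it uses the correspondence between directed paths in $\Gr$ and directed paths between contracted vertices, notes that a single root component means the condensation has exactly one source, and concludes that every vertex of the (weakly connected) DAG is reachable from that unique source. You instead re-apply your existence observation to induced subgraphs: once to each weakly connected component (giving weak connectivity by explicit contraposition, a step the paper leaves largely implicit), and once to the complement of the set $W$ of vertices reachable from $R^r$, producing a second root component and contradicting uniqueness. Your route is a bit longer but more self-contained, since it avoids reasoning about path correspondence with the condensation; the paper's is more compact once the condensation is set up. One small point you should spell out: your $R'$ is a priori only an SCC of the induced subgraph on $\Pi\setminus W$, whereas \cref{def:RC} requires an SCC of $\Gr$; this is immediate, because you have shown $R'$ has no incoming edges from outside $R'$ in $\Gr$, so no outside vertex can reach $R'$ and $R'$ is therefore maximal in $\Gr$ as well --- but it deserves a sentence.
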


\begin{proof}
We first show that every weakly connected directed simple graph $\G$ has at least
one root component. To see this, contract every SCC to a 
single vertex and remove all resulting self-loops. The resulting graph $\G'$ is a directed acyclic graph (DAG)
(and of course still weakly connected), and hence $\G'$ has at least one vertex
$R$ (corresponding to some SCC in $\G$) that has no incoming edges. By
construction, any such vertex $R$ corresponds to a root component in the
original graph $\G$. Since $\Gr$ has at least $1$ and at most $n$ weakly
connected components, the first statement of our lemma follows.

To prove the second statement, we use the observation that there is a directed
path from $u$ to $v$ in $\G$ if and only if there is a directed path from the vertex
$C_u$ (containing $u$) to the vertex $C_v$ (containing $v$) 
in the contracted graph $\G'$.
If there is only one root component in $\G$, the above observations imply
that there is exactly one vertex $R$ in the contracted graph $\G'$
that has no incoming edges.
Since $\G'$ is connected, $R$ has a directed path to every other vertex in $\G'$,
which implies that every process $p \in R$ has a directed path to every vertex
$q$, as required. \qed
\end{proof}

It follows from \cite{BRS11:IPDPS} that assuming a single root component
makes consensus solvable if the root component is static. In this paper, we
allow the root component to change throughout the run, i.e., the (single)
root component $R^r$ of $\G^r$ might consist of a different set of processes in
every round round $r$. However, it will turn out that a sufficiently
long interval of vertex-stability is indispensable for solving 
consensus in this setting. In the sequel, we will consider the message adversary
\MAd{d} stated in \cref{ass:window}, which implicitly enforces the
dynamic network causal diameter $\nwbound$ according to 
\cref{def:E-network-vertex-stable-roots} and
is parameterized by some stability window duration $d>0$. 

\begin{definition}[Consensus message adversary \MAd{d}] \label{ass:window}
The message adversary \MAd{d} is the set of all 
sequences of communication graphs $(\Gr)_{r>0}$, where
\begin{enumerate}
\item[(i)] for every round $r$, $\Gr$ contains exactly one root component $R^r$,
\item[(ii)] all vertex-stable root components 
occurring in any $(\Gr)_{r>0}$ are \goodDE,
\item[(iii)] for each $(\Gr)_{r>0}$, there exists some $r_{ST}>0$ and an interval 
of rounds $J=[r_{ST},r_{ST}+d-1]$ with a \goodDE\ $J$-vertex-stable root component.
\end{enumerate}
\end{definition}
Note that item~(ii) has been added to the above definition solely for the
sake of our consensus algorithm in \cref{sec:consensus}. All the impossibility
results and lower bounds in this section hold also when (ii) is dropped
or replaced by something (like \goodD\ VSRCs, as in \cref{ass:inter}) that 
does not affect item~(iii).

First, we relate the message adversary in \cref{ass:window} to 
the classification 
of~\cite{CFQS12:TVG}:  \cref{lem:classification}
reveals that it is stronger than the weakest class that requests
one node that eventually reaches all others, but weaker than the
second class that requests one node that is reached by all. By
contrast, models like \cite{KOM11,kuhn+lo:dynamic} that assume
bidirectionally connected graphs $\Gr$ in every round belong 
to the strongest classes (Class 10) in~\cite{CFQS12:TVG}.

\begin{lemma}[Properties of $\MAd{d}$]\label{lem:classification}
In every sequence $(\Gr)_{r>0}$ of communication graphs feasible for 
$\MAd{d}$, 
\begin{enumerate}
\item[(i)] there is at least one process $p$ such that $\dist_1(p,q)$ is
finite for all $q\in\Pi$, and this causal distance is in fact at most $n(n-2)+1$. 
\item[(ii)] Conversely, for $n>2$, the adversary can choose some sequence $(\Gr)_{r>0}$
where no process $p$ is causally influenced by all other processes 
$q$, i.e., ${\not\exists p}\ \forall q\colon \dist_1(q,p) < \infty$.
\end{enumerate}
\end{lemma}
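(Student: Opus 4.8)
The plan is to prove the two parts essentially independently, since they have quite different flavours.

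For part (i), I would start from any feasible sequence $(\Gr)_{r>0} \in \MAd{d}$. By item~(iii) of \cref{ass:window}, there is an interval $J=[r_{ST},r_{ST}+d-1]$ with a $J$-vertex-stable root component $R^J$, and $|J|=d$ can be assumed large (the window duration), in particular $|J|\geq n-1$. By \cref{lem:bncd} (applied with the singleton family $S^J=\{R^J\}$) the set $\{R^J\}$ is $(n-1)$-network-bounded, so for every $q\in\Pi$ there is some $p_q \in R$ with $\dist_{r_{ST}}(p_q,q)\leq n-1$. Fix any single process $p\in R$. Inside $R^J$, \cref{lem:2} gives $\dist_{r_{ST}}(p,p_q)\leq |R|-1\leq n-1$ for every $p_q\in R$ (here I need $d$ large enough that $r_{ST}$ lies in the "good" prefix $[r_{ST}, r_{ST}+d-1-(n-1)+1]$ of $J$, which holds since $d\geq 4H+2 \geq 4(n-1)+2$ in the worst case — actually it suffices that $d \geq 2(n-1)$). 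Concatenating a chain of length $\leq n-1$ from $p$ to $p_q$ (within $R$) with a chain of length $\leq n-1$ from $p_q$ to $q$ shows $\dist_{r_{ST}}(p,q)\leq 2(n-1)$ is finite. To get $\dist_1(p,q)$ finite and the quantitative bound $n(n-2)+1$, I would use \cref{lem:cd}: causal distance can grow by at most one per round when walking backwards, so $\dist_1(p,q)\leq \dist_{r_{ST}}(p,q) + (r_{ST}-1)$; since this is a finite quantity, finiteness of $\dist_1(p,q)$ follows for all $q$, which is what the "at least one process $p$" claim requires. For the explicit bound $n(n-2)+1$ I would instead argue directly from the structure of the graphs before $r_{ST}$: in every round there is a single root component with an out-going path to every node (\cref{lem:root}), and one shows that a process in the current root can, round by round, follow the (possibly moving) root and reach every node within $n(n-2)+1$ rounds — I expect the cleanest route is to pick the process $p$ in the round-$1$ root component and track how the root can change, noting there are at most $n-1$ distinct "epochs" of the root and within each the root's reach grows; the arithmetic $n(n-2)+1$ comes from at most $n-2$ root changes each costing at most $n$ rounds to re-establish reachability, plus the final spread. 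I would cite the analysis in \cref{lem:bncd}/\cref{lem:2} as the template and only sketch the bookkeeping.

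For part (ii), the plan is an explicit construction of a feasible sequence where no process is reached by all others. Take $n>2$ and designate a two-process set, say $R=\{p_1,p_2\}$, to be the permanent root component: put edges $p_1\to p_2$ and $p_2\to p_1$ in every round, and from $\{p_1,p_2\}$ a fixed out-going spanning structure to $p_3,\dots,p_n$ (e.g.\ $p_1\to p_j$ for all $j\geq 3$), and \emph{no} edges whatsoever into $p_3,\dots,p_n$ from each other nor back into $R$. Then every $\Gr=\G$ is the same static graph; it is weakly connected, has exactly one root component $R=\{p_1,p_2\}$, and $R$ is trivially a $[1,\infty)$-VSRC. One checks it is $H$-network-bounded and $D$-bounded with $D=1$ (or $D=2$), so items~(i)--(iii) of \cref{ass:window} all hold — in particular (iii) with $r_{ST}=1$ and any window length. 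Now no process $p_j$ with $j\geq 3$ has any incoming edge, so it is not influenced by anyone else; and $p_1$ is never influenced by $p_3$ (no edge $p_3\to\cdot$ exists at all), similarly $p_2$ is never influenced by $p_3$. Hence for every process $p$ there is some $q$ (namely $p_3$, or any $p_j$ with $j\geq 3$) with $\dist_1(q,p)=\infty$, establishing ${\not\exists p}\ \forall q\colon \dist_1(q,p)<\infty$.

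The main obstacle is the quantitative bound $n(n-2)+1$ in part~(i): proving mere finiteness is easy via the VSRC plus \cref{lem:cd}, but pinning down this particular constant requires a careful round-by-round argument about how a single root component can migrate over time while the graph stays weakly connected with a unique root. I would handle it by a potential-style argument — track the set of nodes causally reachable from the chosen starting process $p$ and show it can fail to grow for at most a bounded number of rounds between "absorptions" of root changes, yielding the stated bound — but I expect this to be the part that needs the most care, and I would present only the key invariant and the counting, deferring routine case analysis.
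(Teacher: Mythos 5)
Your proposal for part~(i) has a genuine gap, and its main route is quite different from (and weaker than) the paper's. The lemma is stated for $\MAd{d}$ with an \emph{arbitrary} parameter $d$ (indeed the surrounding impossibility results invoke it ``for any $d$'', including $d=1$), but your principal argument needs the stability window of item~(iii) of \cref{ass:window} to have length at least about $n$ (you yourself require $d\ge 2(n-1)$, justified by appealing to $d=4\nwbound+2$, which is not part of the hypothesis). For small $d$ the $J$-VSRC is too short for \cref{lem:2}/\cref{lem:bncd} to yield any propagation guarantee, so even finiteness of $\dist_1(p,q)$ does not follow from your argument. Moreover, even when $d$ is large, the window route only gives $\dist_1(p,q)\le 2(n-1)+r_{ST}-1$ with $r_{ST}$ unbounded, so the stated bound $n(n-2)+1$ has to come entirely from your ``direct'' sketch — and that sketch is flawed: picking $p$ from the \emph{round-1} root component is the wrong witness (the root may move away after round~1 and such a $p$ may never influence anyone further), and the claim of ``at most $n-1$ epochs'' of the root is false, since the root's member set may change in every round. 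The paper's proof uses none of the VSRC machinery: because every $\Gr$ has exactly one root component, some process $p$ lies in the root in infinitely many rounds; in every such round, as long as $p$'s influence set $\P_k\ne\Pi$, \cref{lem:root} provides a directed path from $p$ to any node outside $\P_k$, hence an edge leaving $\P_k$, so the set grows by at least one. A pigeonhole over the first $n(n-2)+1$ rounds shows some process is in the root in at least $n-1$ of them (otherwise at most $n(n-2)$ rounds could be accounted for), which simultaneously gives finiteness and the explicit bound $n(n-2)+1$, for every $d$. You would need to replace your window-based argument by (something equivalent to) this counting argument.

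Part~(ii) is the right kind of construction but fails at the boundary case $n=3$: with root $\{p_1,p_2\}$ (mutual edges) and the single outgoing edge $p_1\to p_3$, process $p_3$ is influenced by $p_1$ directly and by $p_2$ via $p_2\to p_1\to p_3$, so $\forall q\colon \dist_1(q,p_3)<\infty$, i.e., exactly the process the lemma says must not exist. For $n\ge 4$ your graph does the job. The paper instead uses a static star with a single-vertex root $c$ and edges $c\to q$ for all $q\ne c$: $c$ is influenced by nobody and each leaf only by $c$, which covers every $n>2$; switching to that construction (or handling $n=3$ separately) repairs your part~(ii).
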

\begin{proof}
\cref{ass:window} guarantees that there is (at most) one 
root component $R^r$ in every $\Gr$, $r>0$.
  Since we have infinitely many graphs in $(\Gr)_{r>0}$ 
but only finitely many
  processes, there is at least one process $p$ in
  $R^r$ for infinitely many $r$. Let $r_1,r_2,\dots$ be this sequence
  of rounds. Moreover, let $\P_0=\set{p}$, and define for each $i>0$ the set
  $\P_i = \P_{i-1} \cup \{q: \exists q'\in \P_{i-1}: q' \in \N_q^{r_i}\}$.

Using induction, we will show that $|\P_k| \geq \min\{n,k+1\}$ for $k\geq 0$. 
Consequently, by the end of round $r_{n-1}$ at latest, $p$ will have causally influenced all
processes in $\Pi$.
Induction base $k=0$: $|\P_0| \geq \min\{n,1\}=1$ follows immediately from
$\P_0=\set{p}$.
Induction step $k \to k+1$, $k\geq 0$:  First assume that already
$|\P_k|=n \geq \min\{n,k+1\}$; since $|\P_{k+1}|\geq
|\P_k|=n \geq \min\{n,k+1\}$, we are done.
  Otherwise, consider round $r_{k+1}$ and $|\P_k|<n$: Since $p$ is in $R^{r_{k+1}}$, there is a path from
  $p$ to any process $q$, in particular, to any process $q$ in
  $\Pi\setminus\P_k \neq \emptyset$. Let $(v\rightarrow w)$ be an edge on such a
  path, such that $v\in\P_k$ and $w\in\Pi\setminus\P_k$. Clearly, the existence of
  this edge implies that $v\in\N_{w}^{r_{k+1}}$ and thus $w\in\P_{k+1}$.
  Since this implies $|\P_{k+1}| \geq |\P_k| + 1 \geq k+1 + 1 = k+2 = \min\{n,k+2\}$
by the induction hypothesis,
  we are done.

Finally, at most $n(n-2)+1$ rounds are needed until all processes $q$
have been influenced by $p$, i.e., $r_{n-1}\leq n(n-2)+1$:
A pigeonhole argument reveals that at
least one process $p$ must have been in the root component for $n-1$ times
after so many rounds. After all, if every $p$ appeared at most $n-2$ times, we
could fill up at most $n(n-2)$ rounds. By the above result, this is
enough to secure that some $p$ influenced every $q$.

  The converse statement (ii) follows directly from
  considering a static star, for example, i.e., a communication graph where there is one central
     process $c$, and for all $r$, $\Gr=\li{\Pi,\set{(c\ra q)|
     q\in\Pi\setminus\set{c}}}$. 
Clearly, $c$ cannot be causally influenced by any other process, and $q\not\lt q'$
for any $q,q'\neq q \in \Pi\setminus\set{c}$. On the other hand,
this topology satisfy \cref{ass:window}, which
includes the requirement of at most one root component per round. \qed
\end{proof}

Next, we examine the solvability of several broadcast problems \cite{kuhn+lo:dynamic}
under the message
adversary of \cref{ass:window}, summarized in \cref{thm:impossibleProblems}.
Although there is a strong bond between some of these problems and consensus in 
traditional settings, they are \emph{not} implementable under our 
assumptions---basically, because there is no guarantee of
(eventual) bidirectional communication. 

\begin{theorem} \label{thm:impossibleProblems}
Under the message adversary \MAd{d} given in 
\cref{ass:window}, for any $d$,
neither \emph{reliable broadcast}, \emph{atomic broadcast}, nor 
\emph{causal-order broadcast} can be implemented.
Moreover, there is no algorithm that solves
\emph{counting}, \emph{$k$-verification}, \emph{$k$-token
dissemination}, \emph{all-to-all token dissemination}, and 
\emph{$k$-committee election}.
\end{theorem}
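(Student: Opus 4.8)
The plan is to exploit the fundamental limitation already isolated in \cref{lem:classification}(ii): under \MAd{d}, the adversary can run a static star, in which the center $c$ is never causally influenced by any other process, and two leaves $q,q'$ never causally influence each other. Every problem listed requires, in one form or another, that information originating at a process actually reaches some set of other processes; the static star denies exactly this. So the skeleton of every argument is the same: assume an algorithm $\mathcal A$ solves the problem under \MAd{d} for some $d$, run it on the static star, and derive a contradiction by comparing two executions that differ only in the input of a leaf (or in the token held by a leaf), which are indistinguishable at the center and at the other leaves.

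First I would handle the three broadcast primitives. For \emph{reliable broadcast}: let leaf $q$ broadcast a message $m$. Reliable broadcast requires that if any correct process delivers $m$ then every correct process delivers $m$; but $q$ itself delivers $m$ (the sender delivers its own broadcast), while $c$ has no incoming influence chain from $q$, so by an indistinguishability argument $c$'s execution is identical to one in which $q$ broadcasts nothing, hence $c$ never delivers $m$ — contradiction. (Formally one uses \cref{def:dd}: since $cd^1(q,c)=\infty$, the transition functions of $c$ produce the same state sequence in both runs.) \emph{Atomic broadcast} and \emph{causal-order broadcast} are strictly stronger than reliable broadcast in the standard hierarchy — each is required to deliver every broadcast message to all correct processes — so the same counterexample kills them; I would just note that their delivery guarantee already contains the reliable-delivery guarantee that fails here.

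Next the token/counting problems. For \emph{counting} (every process must eventually output $n$): pick $n>2$ and the static star; the center $c$ has in-degree $0$ in every round and its entire view is independent of how many leaves there are, so two systems of different sizes $n_1\neq n_2$ — both built as static stars, both feasible for \MAd{d} since a star has exactly one root component per round — are indistinguishable at $c$, forcing $c$ to output the same value in both, contradicting correctness in at least one. The same size-ambiguity argument disposes of \emph{$k$-committee election} (the center cannot tell how many processes should be grouped) and, with a token-placement variant, of \emph{$k$-token dissemination} and \emph{all-to-all token dissemination}: place a token at a leaf $q$; since $q\not\leadsto q'$ for a second leaf $q'$, process $q'$ never learns $q$'s token, so it cannot output a set containing all $k$ tokens (for $k\geq 2$), nor can all-to-all dissemination succeed. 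For \emph{$k$-verification} (processes must decide whether there are at most $k$ distinct tokens in the system): distribute tokens among the leaves so the answers differ between two configurations that a given leaf cannot distinguish — e.g. all leaves hold the same token versus each leaf holding a distinct token — and observe that a fixed leaf's causal past is identical in both, forcing the same (hence in one case wrong) verdict.

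The main obstacle, and the only place real care is needed, is making the indistinguishability arguments airtight against \emph{all} $d$ and against the full generality of \MAd{d}: one must confirm that the static star genuinely lies in \MAd{d}, i.e. that it satisfies clauses (i)--(iii) of \cref{ass:window}. Clause (i) holds since $\{c\}$ is the unique root component in every round; clause (iii) holds because the interval $J=[1,d]$ itself has $\{c\}$ as a vertex-stable root component; clause (ii) holds because a single-vertex root component is trivially \goodDE. Once this feasibility check is in place, each sub-proof is just: two feasible runs agreeing on a process's causal past up to the decision round $\Rightarrow$ that process decides/outputs the same $\Rightarrow$ violation of the problem specification in one of the two runs. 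I would present one sub-proof (reliable broadcast) in full detail and then remark that the remaining cases follow the identical template with the obvious choice of the two indistinguishable configurations, citing \cite{kuhn+lo:dynamic} for the formal problem definitions.
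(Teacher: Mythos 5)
Your proposal is correct and follows essentially the same route as the paper's proof: exhibit a static communication graph that is feasible for \MAd{d} (the paper uses a three-process chain for reliable broadcast and the token-dissemination problems and the static star for the remaining problems, whereas you use the star throughout) and argue by indistinguishability that some process is never causally influenced by the relevant information, so it cannot deliver, learn the token, or determine $n$. One small caveat: $k$-verification in \cite{kuhn+lo:dynamic} asks whether $n\le k$ rather than whether at most $k$ distinct tokens are present, but your size-ambiguity argument for counting applies to it verbatim, so this does not affect the validity of your approach.
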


\begin{proof}
We first consider reliable broadcast,
which requires that when a correct process broadcasts
$m$, every correct process eventually delivers $m$. Suppose
that the adversary chooses the communication graphs $\forall r:\ 
\Gr=\li{\set{p,q,s},\set{(p\ra q),(q\ra s)}}$, which matches \cref{ass:window}.
Clearly, $q$ is a correct process in our model. Since $p$ never receives a message from $q$,
$p$ can trivially never deliver a message that $q$ broadcasts.

For the token dissemination problems stated in \cite{kuhn+lo:dynamic}, 
consider the same communication graphs and assume that there is a token that 
only $s$ has. 
Since no other process ever receives a message from $s$, token dissemination is
     impossible.

For counting, $k$-verification, and $k$-committee election,
we return to the static star round graph $\Gr=\li{\Pi,\set{(c\ra q)|
     q\in\Pi\setminus\set{c}}}$ with central node $c$ considered in the
proof of \cref{lem:classification}.
As the local history of any process is obviously independent of $n$ 
here, it is impossible to solve any of these problems.
 \qed
\end{proof}

\subsection{Necessity of a priori knowledge of the dynamic network causal diameter}
\label{sec:impossDiameter}

We will now show that every correct solution for 
consensus, as well as for the related leader-election problem, requires
some a priori knowledge of the dynamic network causal diameter of
the communication graphs generated by the adversary. Recall that a uniform
algorithm does not have any priori knowledge of the network,
i.e., does not even know upper bounds for the dynamic network causal diameter $\nwbound$
(and hence for $n$).

\begin{theorem}[Impossibility of uniform consensus]  \label{thm:impossDiameter}
There is no uniform algorithm that can solve consensus under 
the message adversary \MAd{d} given in \cref{ass:window}, for
any $d$.
\end{theorem}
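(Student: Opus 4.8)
The plan is to use an indistinguishability argument that exploits the absence of any a priori bound on the network size $n$ (and hence on $\nwbound$). Suppose, for contradiction, that some uniform algorithm $\A$ solves consensus under $\MAd{d}$ for some fixed $d$. Since $\A$ is uniform, it runs unchanged on systems of any size. The idea is to build a system $\Pi$ that is large enough that the stability window $J=[\rGST,\rGST+d-1]$ promised by \cref{ass:window} is \emph{too short} to let the VSRC influence all of $\Pi$, so that the decisions the algorithm is forced to make inside the VSRC can be contradicted by what happens elsewhere.

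Concretely, I would first take a "small" system and a run that, during the stable window, carries one initial value, forcing the processes in the VSRC to decide on that value by some fixed round (termination of $\A$ on this small system gives a concrete bound). Then I would embed this small system as the root component of a much larger system $\Pi'$ with $|\Pi'|$ chosen so large that, during an interval of length $d$, the dynamic network causal diameter exceeds $d$: by \cref{lem:bncd} the natural bound is $n-1$, and since the adversary controls the topology it can in fact keep $\cheight$ as large as it likes on any fixed-length interval when $n$ is large (e.g.\ a long path hanging off the root). During the window $J$ in the large system, the processes of the embedded root component see exactly the same messages as in the small-system run, so by indistinguishability they decide on the same value $v$ by the same round. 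Meanwhile, the far end of $\Pi'$ has not yet been reached from the root; I then let the adversary \emph{prepend} to this scenario a different, equally valid prefix — a second run in which a disjoint part of $\Pi'$ has early on committed to a different value $v'\neq v$ (this is where one must be careful to keep a single root component in every round and to respect item~(iii), i.e.\ the stable window still occurs, just not early enough to spread the information). Gluing the two together yields a single run feasible for $\MAd{d}$ in which one group has decided $v$ and another group must eventually decide, and by Validity/Agreement one can drive it to decide $v'$, contradicting Agreement.

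The main obstacle I expect is the bookkeeping needed to make the two "halves" of the construction simultaneously consistent with \emph{all three} clauses of \cref{ass:window}: exactly one root component per round, all VSRCs \goodDE, and an actual stable window of length $d$ somewhere. The cleanest way around this is probably not to literally glue two runs, but to exhibit two runs $\alpha,\beta$ of $\A$ on the large system $\Pi'$ that are indistinguishable to a suitable set of processes through the round by which those processes are forced to decide (using the termination bound extracted from a smaller system), yet require different decision values — a standard valence/indistinguishability packaging. The only genuinely new ingredient beyond classical FLP-style reasoning is the observation, supplied by \cref{lem:bncd} and the adversary's freedom, that for uniform $\A$ one can always pick $n$ past the point where a window of the algorithm's fixed length $d$ guarantees nothing about reaching the whole network; the rest is careful construction of the two runs so that each individually lies in $\MAd{d}$.
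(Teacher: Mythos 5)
Your first half is essentially the paper's own argument: uniformity means the root's decision time cannot depend on $n$ (the paper extracts a constant $\kappa$ from a run on a static line rooted at $p$, whose local view is independent of $n$ because a root component receives nothing from outside), and then one picks $n$ large enough that the endpoint $q$ of the line is still causally uninfluenced by $p$ at round $\kappa$, so $q$'s state at that point is the same under both initial values at $p$. Up to this point your proposal and the paper coincide.

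The gap is in how you complete the contradiction. Your first packaging---``prepend'' a prefix in which a disjoint part of $\Pi'$ has already committed to a different value $v'$---does not survive the constraints of \cref{ass:window}: a part of the network that decides while receiving nothing from the rest is itself a root component, so either you violate the exactly-one-root-per-round requirement, or you must give the embedded root incoming edges during that prefix, which destroys the indistinguishability with the small-system run on which your bound $\kappa$ rests. Moreover, ``driving'' the other group to decide $v'$ via Validity/Agreement is not the right mechanism; nothing forces an uninfluenced process to decide at all unless the adversary makes it do so. The paper's resolution, which your fallback ``standard valence/indistinguishability packaging'' gestures at but never constructs, is to place the stability window \emph{after} round $\kappa$ (i.e.\ $r_{ST}=\kappa+1$) and reverse the line so that $q$ becomes the sole root from then on: this keeps the sequence feasible for $\MAd{d}$, forces $q$ to decide by Termination, and---since $q$'s state at the end of round $\kappa$ is identical in $\beta_v$ and $\beta_{1-v}$ and it receives nothing afterwards---$q$ decides the \emph{same} value $v'$ in both runs, contradicting Agreement with $p$'s decision in $\beta_{1-v'}$. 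Without this explicit continuation (late window, reversed information flow, identical post-$\kappa$ evolution of $q$ in both runs), the proof is not complete.
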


\begin{proof}
Assume for the sake of a contradiction that there is such a uniform 
algorithm $\A$, w.l.o.g.\ for a set of input values $\V$ that contains 0 and 1.
Consider a run $\alpha_v$ of $\A$ on a communication graph $\G$
that forms a (very large) static directed line rooted at process $p$
and ending in process $q$.
Process $p$ has initial value $v\in [0,1]$, while all other processes have
initial value $0$.
Clearly, the uniform algorithm $\A$ must allow
$p$ to decide on $v$ by the end of round $\kappa$, where $\kappa$ is a constant
(independent of $\nwbound$ and $n$; we assume that $n$ is large enough to guarantee $n-1 >\kappa$).
Next, consider a run $\beta_{v}$ of $\A$ that has the same initial
states as $\alpha_v$, and communication graphs $(\B^r)_{r>0}$
that, during rounds $[1,\kappa]$, are also the same as in $\alpha_v$
(defining what happens after round $\kappa$ will be defered).
In any case, since $\alpha_v$ and $\beta_{v}$ are indistinguishable for $p$
until its decision round $\kappa$, it must also decide $v$ in $\beta_{v}$
at the end of round $\kappa$.

However, since $n > \kappa+1$, $q$ has not been causally influenced by $p$
by the end of round $\kappa$. Hence, it has the same state $S_p^{\kappa+1}$
both in $\beta_{v}$ and in $\beta_{1-v}$. As a consequence, it cannot
have decided by round $\kappa$: If $q$ decided $v$, it would violate 
agreement with $p$ in $\beta_{1-v}$. Now assume that runs $\beta_v$, $\beta_{1-v}$
are actually such that the stable window occurs later than round $\kappa$, 
i.e., $r_{ST}=\kappa+1$, and that the adversary just reverses
the direction of the line then: For all $\B^{\ell}$, $\ell\geq \kappa+1$, $q$ is 
the root and $p$ is the last process of the resulting topology.
Observe that the resulting $\beta_{v}$ still satisfies
\cref{ass:window}, since $q$ itself forms the only root component.  
Now, $q$ must eventually decide on some value $v'$ in some later round $\kappa'$, 
but since $q$ has been in the 
same state at the end of round $\kappa$ in both $\beta_{v}$
and $\beta_{1-v}$, it is also in the same state in round $\kappa'$
in both runs. Hence, its decision contradicts the decision of $p$ in 
$\beta_{1-v'}$. \qed
\end{proof}

We now use a more involved indistinguishability argument to show
that a slightly weaker problem than consensus, namely, leader election is also
impossible to solve uniformly under the message adversary 
$\MAd{d}$. The classic leader election problem 
(cf.\ \cite{Lyn96})
assumes that, eventually, exactly one process irrevocably elects itself as leader (by 
entering a special \textsc{elected} state) and every other process elects 
itself as non-leader (by entering the \textsc{non-elected} state).  
Non-leaders are not required to know the process id of the leader.

Whereas it is easy to achieve leader election in our model when consensus is
solveable, by just reaching consensus on the process ids in the system, the opposite
is not true: Since the leader elected by some algorithm need not be in the root component
that exists when consensus terminates, one cannot use the leader to 
disseminate a common value to all processes in order to solve
consensus atop of leader election.

\begin{theorem}[Impossibility of uniform leader election]
  \label{thm:impossDiameterLE}
There is no uniform algorithm that can solve leader election under 
the message adversary \MAd{d} given in \cref{ass:window}, for
any $d$.
\end{theorem}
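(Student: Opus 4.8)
The plan is to mimic the structure of the uniform-consensus impossibility (\cref{thm:impossDiameter}) but replace the agreement-violation argument with a leader-uniqueness-violation argument. First I would assume, for contradiction, a uniform leader election algorithm $\A$. The key obstacle is that leader election says nothing about decision \emph{values}, so I cannot directly contradict ``agreement''; instead I must force two processes to both enter the \textsc{elected} state, or force a process that has elected itself leader to later be contradicted by the topology. The mechanism is the same indistinguishability lever: on a long static directed line rooted at $p$ and ending at $q$, a uniform algorithm (which knows no bound on $\nwbound$ or $n$) must let $p$ irrevocably decide its leader status by some constant round $\kappa$ that is independent of $n$; choosing $n$ large enough that $q$ has not been causally influenced by $p$ within $\kappa$ rounds, $q$'s state at the end of round $\kappa$ is independent of everything happening at the root end of the line.

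The key steps, in order. (1) Build run $\alpha$ on the static line rooted at $p$; by uniformity and termination, $p$ irrevocably settles its leader/non-leader status by a constant round $\kappa$; w.l.o.g.\ (by symmetry of the two ends, or by a separate run) one may arrange that $p$ elects itself leader in $\alpha$ --- actually it is cleaner to argue: whatever $p$ does by round $\kappa$, it has committed irrevocably. (2) Build a second run $\beta$ whose communication graphs agree with $\alpha$ on rounds $[1,\kappa]$, so $p$ behaves identically and is committed by round $\kappa$; then, starting at round $r_{ST}=\kappa+1$, the adversary \emph{reverses} the line, making $q$ the unique root component for all subsequent rounds (this is feasible for $\MAd{d}$ exactly as in \cref{thm:impossDiameter}, since a singleton root is permitted and the VSRC window can be placed after $\kappa$). (3) Now do the symmetric construction with the roles of ``who ends up being forced to decide first'' swapped, or directly: run $\gamma$ identical to $\beta$ on $[1,\kappa]$ but where the \emph{other} endpoint's eventual obligations are exploited. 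Concretely, I would set up two runs that are indistinguishable to $q$ through round $\kappa$ (since $q$'s $(\kappa{+}1)$-state is independent of the far end), let the adversary make $q$ the permanent root in one and $p$ the permanent root in the other after round $\kappa$; in the first, $q$ is eventually obliged to elect itself leader, while $p$ has already irrevocably elected itself leader in the prefix --- two leaders, contradiction. The symmetric placement of the stable window lets me always satisfy \cref{ass:window} with a singleton root at the chosen endpoint.

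The main obstacle is the bookkeeping to make sure that in at least one of the constructed runs \emph{two distinct processes simultaneously hold the \textsc{elected} status}, despite leader election only promising ``eventually exactly one.'' The resolution is that irrevocability is what does the work: once $p$ has entered \textsc{elected} in the common prefix, it stays there forever, so it suffices to force $q$ into \textsc{elected} in some extension that is consistent with that prefix --- and the extension in which $q$ is the permanent unique root from round $\kappa+1$ on leaves $q$ with no incoming information ever, so by termination and the impossibility of $q$ distinguishing ``I am the only relevant root'' from ``I must be the leader,'' $q$ must eventually elect itself. Formally one runs the standard argument: if in that extension $q$ elected itself non-leader, then in the mirror extension (where $p$ is permanent root and $q$ never hears anyone, but now it is $q$ who would have to be leader of its own trivial component while $p$'s prefix already fixed $p$ as non-leader) no process is ever elected in the component containing $q$, violating termination/uniqueness; balancing the two mirrored runs against the single committed prefix yields the contradiction. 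I would close by noting that the same $\kappa$-independence-of-$n$ trick shows no a priori bound on $\nwbound$ can be dispensed with, matching the consensus case.
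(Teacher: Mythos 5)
Your construction does not force a second leader, and that is the heart of the matter. In your run $\beta$ (line rooted at $p$ for rounds $[1,\kappa]$, then reversed so that $q$ is the permanent singleton root), the claim that ``$q$ is eventually obliged to elect itself leader'' is unjustified: $q$ sits at the sink end of the line during the prefix and therefore \emph{does} receive messages in every round $1,\dots,\kappa$, so its situation is not indistinguishable from being alone or from heading a chain, and no indistinguishability lever applies to it. Nothing in the leader-election specification forces the eventual root to become leader (the paper even remarks, just before the theorem, that the elected leader need not lie in the root component); the outcome ``$p$ self-elected in the prefix, $q$ and everybody else go \textsc{non-elected}'' is perfectly consistent in $\beta$, so no contradiction arises. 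Your fallback for the case that $p$ commits to non-leader is also unsound: you argue that ``no process is ever elected in the component containing $q$, violating termination/uniqueness,'' but the problem as defined requires exactly one leader \emph{globally}, not one per component, so an unled component is not a violation. (Separately, you cannot stay agnostic about what $p$ commits to: the two-leader contradiction needs $p$ to have entered \textsc{elected}, and establishing that requires the argument, which you omit, that a process that never receives any message cannot exclude being alone and must therefore self-elect.)

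The paper's proof needs more machinery than the consensus proof precisely because of this. Since \cref{ass:window} allows only one root component per round, you can never make \emph{both} endpoints of a single line silent simultaneously, so a one-line construction cannot produce two processes that are each forced to self-elect. The paper instead takes two chains of length $T+1$ headed by $p$ and $q$, closes them into a ring (via an extra process $r$) for the first $T$ rounds, and splits them afterwards with $r$ as the permanent root. In the base run $\beta$ the unique leader $\ell$ lies either in $p$'s half or in $q$'s half; depending on the case, one deletes the single incoming edge of the \emph{other} head during the first $T$ rounds, making that head's view identical to the solo-chain run in which it must self-elect, while the chains are long enough that this perturbation cannot causally reach $\ell$'s half before the halves are disconnected at round $T+1$. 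Hence $\ell$ behaves as in $\beta$ and still self-elects, giving two leaders. Your proposal is missing exactly this two-chain structure, the case distinction on where $\beta$'s leader lies, and the length/latency bookkeeping that keeps the perturbation invisible to $\ell$; without these, the argument does not go through.
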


\begin{proof}
We assume that there is a uniform algorithm $\A$ that solves the problem. 
Consider the execution $\alpha_w(m)$ of $\A$ in a static unidirectional 
chain of $m$ processes, headed by process $p$ with id $w$: Since $p$ has only a single out-going edge and does not
know $n$, it cannot know whether it has neighbors at all. Since it might
even be alone in the single-vertex graph consisting of $p$ only, it
must elect  itself as leader in any $\alpha_w(m)$, $m\geq 1$, 
after some $T_w$ rounds ($T_w$ may depend on $w$, however, as we do 
not restrict $\A$ to be time-bounded). 

Let $w$ and $z$ be two arbitrary different process ids, and let $T_w$ resp.\
$T_z$ be the termination times in the executions $\alpha_w(m)$
resp.\ $\alpha_z(m')$, for any $m$, $m'$; let $T=\max\{T_w,T_z\}$.

We now build a system consisting of $n=2T+3$ processes. To do so we 
assume a chain $\G_p$ of $T+1$ processes headed by $p$ (with id $w$)
and ending in process $t$, a second chain $\G_q$ of $T+1$ processes headed
by $q$ (with id $z$) and ending in process $s$, and the process $r$.

Now consider an execution $\beta$, which proceeds as follows: 
For the first $T$ rounds, the communication graph is the
     unidirectional ring created by connecting the above chains with
     edges $(s\ra p)$, $(t\ra r)$ and $(r\ra q)$; 
its root component clearly is the entire ring. 
Starting from round $T+1$ on, process $r$ forms the single vertex root
     component, which feeds, through edges $(r\ra q)$ and $(r\ra t)$ the
     two chains $\G_q$ and $\bar{\G}_p$, with $\bar{\G}_p$ being $\G_p$ with all
     edges reversed. 
Note that, from round $T+1$ on, there is no edge connecting processes
     in $\G_p$ with those in $\G_q$ or vice versa.

Let $\ell$ be the process that is elected leader in $\beta$.
We distinguish 2 cases: 

\begin{enumerate}
\item If $\ell \in \G_q \cup \{r\}$, then
consider the execution $\beta_p$ that is exactly like $\beta$, except
that there is no edge $(s\ra p)$ during the first $T$ rounds:
$p$ with id $w$ is the single root component here. Clearly, for
$p$, the execution $\beta_p$ is indistinguishable from $\alpha_w(2T+3)$
during the first $T_w\leq T$ rounds,
so it must elect itself leader. However, since no process in $\G_q \cup \{r\}$ (including $t=\ell$)
is causally influenced by $p$ during the first $T$ rounds, all
processes in $\G_q\cup\{r\}$ have the same
state after round $T$ (and all later rounds) in $\beta_p$ as in $\beta$.  
Consequently, $\ell$
also elects itself leader in $\beta_p$ as it does in $\beta$, which is a 
contradiction.
\item On the other hand, if $\ell \in \G_p$, we consider the execution
$\beta_q$, which is exactly like $\beta$, except
that there is no edge $(r\ra q)$ during the first $T$ rounds:
$q$ with id $z$ is the single root component here. Clearly, for
$q$, the execution $\beta_q$ is indistinguishable from $\alpha_z(T+1)$ (made
up of the chain $\G_q$) during the first $T_z\leq T$ rounds,
so it must elect itself leader. However, since no process $t$ in $\G_p \cup \{r\}$ (including $t=\ell$)
is causally influenced by $q$ during the first $T$ rounds, $t$ has the 
same
state after round $T$ (and all later rounds) in $\beta_q$ as in $\beta$.  
Consequently, $\ell$
also elects itself leader $\beta_q$ as it does in $\beta$, which is again 
a contradiction.
\end{enumerate}
This completes the proof of \cref{thm:impossDiameterLE}. \qed
\end{proof}

\subsection{Impossibility of consensus with too short stability intervals}

The goal of this section is to show that root components $R^I$ must be
vertex-stable sufficiently long for solving consensus in our model. In essence,
what is needed for this purpose is that every member of the set $R$ of processes
in $R^I$ is able to reach the entire network. Recalling
\cref{def:E-network-vertex-stable-roots}, this requires $|I|$ to be
at least $\nwbound$ and hence $d \geq \nwbound$ in 
\cref{ass:window}.

In order to show that \MAd{\nwbound} is indeed necessary in our setting, we will now consider a
stronger message adversary \MAdp{\nwbound-1} given in \cref{ass:too-short} below: 
It is stronger than \MAd{\nwbound} as its stability interval is shorter, but
still slightly weaker than $\MAd{\nwbound-1}$, in that it also guarantees one process 
to be reached from the processes in $R$ within $\nwbound$ rounds, despite the too
short stability interval $I$.
Note carefully that, since there is only one such process, it would be reached if 
$|I|$ was actually $\nwbound$. 
This property is formally captured by \emph{\almostgoodDE} VSRCs introduced in
\cref{def:almostE-network-vertex-stable-roots}, which is slightly weaker than
\cref{def:E-network-vertex-stable-roots} in that $I$-VSRC's with $|I|=\nwbound-1$
are no longer arbitrary.

\begin{definition}[Almost $\nwbound-1$-bounded $I$-VSRC]
\label{def:almostE-network-vertex-stable-roots}
An $I$-vertex-stable root component $R^I$ with $I=[a,b]$ is \emph{\almostgoodDE},
with \emph{dynamic network causal diameter} $\nwbound>0$, if either $|I|<\nwbound-1$ or else 
$\forall x \in [a,b-\nwbound+2]$
there exists a unique $q\in\Pi$ with $\forall p\in R: \dist_x(p,q)\le \nwbound$, while for
all $q'\in\Pi\setminus\set{q}$ we have $\forall p\in R: \dist_x(p,q')\le \nwbound-1$.
\end{definition}

\begin{definition}\label{ass:too-short}
The message adversary \MAdp{\nwbound-1} is the set of all 
sequences of communication graphs $(\Gr)_{r>0}$, where
\begin{enumerate}
\item[(i)] for every round $r$, $\Gr$ contains exactly one root component $R^r$,
\item[(ii)] all vertex-stable root components $R^I$ 
occurring in any $(\Gr)_{r>0}$ are \goodDE,
\item[(iii)] for each $(\Gr)_{r>0}$, there exists some $r_{ST}>0$ and an interval 
of rounds $J=[r_{ST},r_{ST}+\nwbound-2]$ with an \almostgoodDE\ $J$-vertex-stable root component.
\end{enumerate}
\end{definition}

Note carefully that \cref{ass:too-short} allows the message adversary 
to choose \emph{any} communication graph sequence that is consistent with the
conditions stated therein. In particular, \MAdp{\nwbound-1} can choose a sequence of
communication graphs that ensures a dynamic causal distance $\nwbound$ 
between \emph{any} specific $p\in R^I$ and $q$ in a VSRC with $|I|=\nwbound-1$. 
Moreover, we have the following \cref{lem:advrelation} that relates our
message adversaries:

\begin{lemma}\label{lem:advrelation}
It holds that $\MAd{\nwbound-1} \geq \MAdp{\nwbound-1} \geq \MAd{\nwbound}$, so
that every sequence of communication graphs generated by the message adversary
\MAd{\nwbound} is also feasible for \MAdp{\nwbound-1}.
\end{lemma}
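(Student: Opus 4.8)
The chain $\MAd{\nwbound-1} \geq \MAdp{\nwbound-1} \geq \MAd{\nwbound}$ unfolds into two set inclusions, since $A\geq B$ means $A\supseteq B$: namely $\MAdp{\nwbound-1}\subseteq\MAd{\nwbound-1}$ and $\MAd{\nwbound}\subseteq\MAdp{\nwbound-1}$. I would verify each by comparing the three defining conditions of the adversaries in \cref{ass:window} and \cref{ass:too-short}. Conditions~(i) (exactly one root component per round) and~(ii) (every vertex-stable root component occurring in the sequence is \goodDE) are \emph{literally identical} for all three adversaries, so in each case it suffices to show that condition~(iii) of the weaker (larger) adversary follows from condition~(iii) of the stronger (smaller) one. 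The easy inclusion is $\MAdp{\nwbound-1}\subseteq\MAd{\nwbound-1}$: if $(\Gr)_{r>0}$ is feasible for $\MAdp{\nwbound-1}$, its condition~(iii) supplies an interval $J=[r_{ST},r_{ST}+\nwbound-2]$ of length $\nwbound-1$ carrying an \almostgoodDE\ $J$-VSRC $R^J$; since $|J|=\nwbound-1<\nwbound$, \cref{def:E-network-vertex-stable-roots} makes $R^J$ vacuously \goodDE\ as well, so this same interval $J$ — whose length is exactly the window parameter $d=\nwbound-1$ of $\MAd{\nwbound-1}$ — witnesses condition~(iii) of $\MAd{\nwbound-1}$.

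For the more substantial inclusion $\MAd{\nwbound}\subseteq\MAdp{\nwbound-1}$, I would take $(\Gr)_{r>0}$ feasible for $\MAd{\nwbound}$, with the \goodDE\ $J'$-VSRC $R^{J'}$ on vertex set $R$ over $J'=[r_{ST},r_{ST}+\nwbound-1]$ (length $\nwbound$) given by its condition~(iii), and restrict to the prefix $J=[r_{ST},r_{ST}+\nwbound-2]\subseteq J'$, of length $\nwbound-1$. Since $\Gr$ has root component exactly $R$ for all $r\in J'$ and hence for all $r\in J$, the subsequence over $J$ carries a $J$-VSRC $R^J$ with the same vertex set $R$ — the candidate demanded by condition~(iii) of $\MAdp{\nwbound-1}$. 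It remains to show $R^J$ is \almostgoodDE. Because $|J'|=\nwbound$, \goodDE-ness of $R^{J'}$ reduces, via \cref{def:E-network-vertex-stable-roots}, to $\cheight^{r_{ST}}(R^{J'})\leq\nwbound$ (the range $[r_{ST},(r_{ST}+\nwbound-1)-\nwbound+1]$ collapses to the single round $r_{ST}$); and since $\cheight^{x}$ depends only on the starting round, the vertex set $R$, and the graph sequence — not on the stability interval — we also get $\cheight^{r_{ST}}(R^J)\leq\nwbound$. This is exactly the reachability bound \cref{def:almostE-network-vertex-stable-roots} requires at the single round $x=r_{ST}$ to which its index range $[r_{ST},(r_{ST}+\nwbound-2)-\nwbound+2]$ collapses; the accompanying "unique slow process $q$" clause then follows by combining $\cheight^{r_{ST}}(R^J)\leq\nwbound$ with the fact that $R$ stays a strongly connected root component throughout all of $J'$, so that — by a propagation argument in the style of \cref{lem:2} — the one extra round by which $J'$ exceeds $J$ carries the bound across all members of $R$ and confines the processes still at causal distance $\nwbound$ to at most one. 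Thus $R^J$ is \almostgoodDE, $J$ witnesses condition~(iii) of $\MAdp{\nwbound-1}$, and the "so that" restatement in \cref{lem:advrelation} follows at once.

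The step I expect to be the crux is this last one: upgrading the \goodDE\ guarantee of $R^{J'}$, which controls only the dynamic causal distance from the \emph{closest} member of $R$ to each outside process, to the per-member (``$\forall p\in R$'') reachability structure of \cref{def:almostE-network-vertex-stable-roots}. Getting this right genuinely needs the strong connectivity and vertex-stability of $R$ over the full $\nwbound$ rounds of $J'$ — the extra round of $J'$ relative to $J$ is precisely what propagates the bound through $R$ and isolates the at-most-one exceptional process; the remainder is routine definition chasing and careful bookkeeping of interval endpoints.
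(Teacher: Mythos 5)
Your first inclusion, $\MAdp{\nwbound-1}\subseteq\MAd{\nwbound-1}$, is correct and is essentially the paper's own argument: items (i) and (ii) of \cref{ass:window} and \cref{ass:too-short} coincide, and a window of length $\nwbound-1$ is vacuously \goodDE\ by \cref{def:E-network-vertex-stable-roots}, so the \almostgoodDE\ window guaranteed by \cref{ass:too-short} also witnesses item (iii) of \cref{ass:window} with $d=\nwbound-1$.

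The second inclusion is where your proof breaks down. You claim that \goodDE-ness of the length-$\nwbound$ window $J'$, i.e.\ $\cheight^{r_{ST}}(R^{J'})\leq\nwbound$, together with the one extra round of vertex-stability, yields the \almostgoodDE\ property of the prefix $J$ ``by a propagation argument in the style of \cref{lem:2}''. This step does not go through. The quantity $\cheight^x$ only bounds $\max_{q}\min_{p\in R}\dist_x(p,q)$, i.e.\ the distance from the \emph{closest} root member, whereas \cref{def:almostE-network-vertex-stable-roots} demands $\dist_x(p,q')\leq\nwbound-1$ for \emph{every} $p\in R$ and every $q'$ except a single exceptional $q$ (which must still satisfy $\dist_x(p,q)\leq\nwbound$ for every $p$). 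Neither the switch of quantifier (some $p$ to all $p$) nor the tightening from $\nwbound$ to $\nwbound-1$ for all but one process follows, and the extra round of stability of $R$ does not repair this. Concretely, take $\nwbound=2$, $R=\{p_1,p_2\}$ with both edges $p_1\ra p_2$ and $p_2\ra p_1$ in every round, and outside processes $q_1,q_2$ with only $p_1\ra q_1$ and $p_2\ra q_2$: then $\cheight^x=1\leq\nwbound$ in every round, so every length-$\nwbound$ window is \goodDE, yet $\dist_x(p_2,q_1)=\dist_x(p_1,q_2)=2=\nwbound$, so two distinct processes violate the $\leq\nwbound-1$-from-all-of-$R$ requirement and no unique exceptional $q$ exists; the length-$(\nwbound-1)$ prefix is therefore not \almostgoodDE. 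So the step you yourself flag as the crux is precisely the one that fails, and it cannot be fixed along this route. Note also that this is not how the paper argues: its proof of $\MAdp{\nwbound-1}\geq\MAd{\nwbound}$ consists only of the definitional remark that \cref{ass:too-short} ``does not forbid'' sequences adhering to \cref{ass:window} with $d=\nwbound$; it does not attempt the stronger derivation you sketch, which the per-member quantifier in \cref{def:almostE-network-vertex-stable-roots} rules out.
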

\begin{proof}
A comparison of \cref{ass:too-short} and \cref{ass:window} reveals that they
differ only in item~(iii). Since \almostgoodDE\ is slightly weaker than \goodDE,
as the adversary needs to guarantee a network causal distance $\dist_x(p,q')$
of at most $H-1$ from every $p\in R$ to every $q'\neq q$ in the former, 
$\MAd{\nwbound-1} \geq \MAdp{\nwbound-1}$ follows: After all, $\MAd{\nwbound-1}$
assumes a \goodDE\ VSRC. On the other hand, 
\cref{ass:too-short} does not forbid the message adversary to generate a
sequence of communication graphs that adheres to \cref{ass:window} with $d=\nwbound$, which
also confirms $\MAdp{\nwbound-1} \geq \MAd{\nwbound}$ and completes our proof. 
\qed
\end{proof}

We will now prove that the message adversary \MAdp{\nwbound-1}, and hence
by \cref{lem:advrelation} also \MAd{\nwbound-1}, is too strong
for solving consensus: Processes can \emph{withold} information
from each other, which causes consensus to be impossible \cite{SWK09}.
In order to simplify our proof, we assume that the adversary has to fix the
start of $J=[r_{ST},r_{ST}+\nwbound-2]$ and the set of root members $R$ in the 
eventually generated root component $R^J$ before the beginning of the 
execution (but given the initial values). 
Note that this does not strengthen the adversary, and hence does not 
weaken our impossibility result: For 
deterministic algorithms, the whole execution depends only on the
initial values and the sequence of the $\G^r$'s, so the
adversary could simulate the execution and determine
every $\G^{r+1}$ based on this. 

\begin{lemma}\label{lem:neighbours-are-bivalent}
Consider two runs of a consensus algorithm $\A$ under message adversary
\MAdp{\nwbound-1}, for some a priori fixed $J=[r_{ST},r_{ST}+\nwbound-2]$ and 
set of processes $R$ in $R^J$, which
start from two univalent configurations $C'$ and $C''$ that differ 
only in the state of one process $p$ at the beginning 
of round $r$. Then, $C'$ and $C''$ cannot differ in valency.
\end{lemma}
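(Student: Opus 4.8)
The plan is a classical bivalence/indistinguishability argument that exploits the structural weakness of $\MAdp{\nwbound-1}$: under this adversary some process can be kept causally uninfluenced by a given process for the entire run, even though the stability window $J$ is realized. Assume, for contradiction, that $C'$ is $v'$-valent and $C''$ is $v''$-valent with $v'\neq v''$, w.l.o.g.\ $v'=0$ and $v''=1$; recall that $C'$ and $C''$ agree on the state of every process except $p$ at the beginning of round $r$. The goal is to exhibit a single sequence of communication graphs that is feasible for $\MAdp{\nwbound-1}$ with the prescribed $J$ and $R$, under which running $\A$ from $C'$ and from $C''$ yields two executions that are indistinguishable to some process $q\neq p$. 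Then $q$ decides the same value in both executions, which is impossible since the execution from $C'$ decides only $v'=0$ (by $0$-valence and Agreement) while the one from $C''$ decides only $v''=1$.

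Concretely, I would fix a witness $q\in\Pi\setminus(R\cup\{p\})$ --- a nonempty set, since $R$ was chosen a priori as a proper subset of $\Pi$ and $n$ is large enough --- and build a feasible sequence $(\G^s)_{s>0}$ with $\dist_r(p,q)=\infty$, i.e., no causal influence chain from $p$ in round $r$ ever reaches $q$. The construction has three phases: (i) in the rounds of $[r,r_{ST})$ the adversary keeps $q$ outside $p$'s causal reach, for instance by letting $q$ be a singleton root component there, and by not creating any edge from $p$ into $R$; (ii) during $J=[r_{ST},r_{ST}+\nwbound-2]$ it realizes the $J$-VSRC on $R$ so that $q$ is precisely the unique ``slow'' process demanded by \almostgoodDE-ness --- $R$ reaches every process $\neq q$ within $\nwbound-1$ rounds, and reaches $q$ only within $\nwbound$ rounds, the lone edge into $q$ being placed in round $r_{ST}+\nwbound-1$ and issued by a member $w\in R$ whose state is still untainted by $p$'s round-$r$ state at the start of that round; (iii) from round $r_{ST}+\nwbound$ on, $q$ is again the single-vertex root component (out-edges to everyone, no in-edge), so it receives nothing further. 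Feasibility is easy to check: every round has exactly one root component; the only long vertex-stable root components are $R^J$ --- whose length $\nwbound-1<\nwbound$ makes $\nwbound$-network-boundedness vacuous, with \almostgoodDE-ness built in --- and the terminal singleton $\{q\}$, which is trivially \goodDE\ with $\cheight=1$; everywhere else the root membership can be changed each round to prevent any further long VSRC. (If $r$ already lies past $r_{ST}+\nwbound-2$, phases (i)--(ii) are vacuous and only (iii) is needed.)

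It remains to conclude. A routine induction on $s\ge r$ shows that every process $w$ with $\dist_r(p,w)>s-r$ has identical states in the two executions at the start of round $s$: by \cref{lem:cd}-style reasoning all of $w$'s round-$(s-1)$ in-neighbours also lie in this set, hence send identical messages. In particular, since $\dist_r(p,q)=\infty$, process $q$ sees identical states, messages and transitions in both executions in every round, so by Termination it decides --- on the same value --- in both, the desired contradiction. The main obstacle is realizing phase (ii) when $p\in R$: one must arrange the (changing) strongly connected topologies inside $R$ so that $p$'s round-$r$ state travels ``the long way around'' and has not yet reached the member $w$ that feeds $q$ by the end of round $r_{ST}+\nwbound-2$, while $R$ nonetheless reaches every other process within $\nwbound-1$ rounds. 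This is exactly the point where the one-round slack between $\nwbound-1$ and $\nwbound$, together with the a priori freedom to fix a suitable $R$ and $n$, is consumed; the case $p\notin R$ (where all of $R$ stays $p$-untainted automatically) and the case where $J$ lies entirely in the past are comparatively routine.
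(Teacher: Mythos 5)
Your overall plan is the paper's: extend both prefixes by one common, feasible graph sequence so that some witness $q$ never sees the difference (which sits only in $p$'s state at the beginning of round $r$), and conclude via Termination/Agreement that $C'$ and $C''$ cannot have opposite valency; like the paper, you use the distinguished ``slow'' process $q$ of \cref{def:almostE-network-vertex-stable-roots} as the witness in the hard case $p\in R$, and you make $\set{q}$ the singleton root after $J$ (the paper splits into the cases $p\notin R$, where it simply takes any $q\in R$, and $p\in R$ with $r$ inside or before $J$, which correspond to your phases).

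There is, however, a genuine problem in your phase (ii), and it hits exactly the subcase that matters. You insist on discharging the reachability clause of \cref{def:almostE-network-vertex-stable-roots} literally, by placing a lone edge $w\to q$ with $w\in R$ in round $r_{ST}+\nwbound-1$, while simultaneously claiming $\dist_r(p,q)=\infty$ and that $w$ is ``untainted''. But the one-round slack in that definition exempts only $q$: the very same definition requires $\dist_{r_{ST}}(p,q')\le \nwbound-1$ for \emph{every} $q'\neq q$ and \emph{every} member of $R$, in particular for $q'=w$ and the member $p$ itself. Hence feasibility forces a causal chain from $p$, starting in round $r_{ST}$, that reaches $w$ by the end of round $r_{ST}+\nwbound-2$ and then, through your lone edge, reaches $q$ by the end of round $r_{ST}+\nwbound-1$. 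For $r=r_{ST}$ (a case the induction of \cref{thm:consensusimp} must pass through) this chain starts at $(p,r)$, so $\dist_r(p,q)\le\nwbound$ rather than $\infty$, and the ``untaintedness'' of $w$ can no longer be derived from the absence of causal chains --- it would depend on the unknown algorithm. In other words, there is no room to route $p$'s influence ``the long way around'': the definition forces it onto every potential feeder of $q$ within $\nwbound-1$ rounds of $r_{ST}$, so the slack you plan to consume does not exist. The paper's construction avoids this trap by never delivering anything to $q$ at all: during $J$ the window is one round too short for the influence of $R$ to arrive at $q$, and from round $r_{ST}+\nwbound-1$ on $\set{q}$ is the root and has no in-edges whatsoever; the ``unique $q$ with $\dist_{r_{ST}}(p,q)\le\nwbound$'' clause is used only to license keeping $q$ out of reach during the too-short window, not as an obligation the constructed extension must additionally fulfil. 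Dropping your extra edge into $q$ (i.e., doing exactly what the paper does) repairs the argument; with it, the step you yourself flag as ``the main obstacle'' is not actually resolved.
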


\begin{proof}
The proof proceeds by assuming the contrary, i.e., that $C'$ and $C''$
have different valency. We will then apply the same sequence of round
graphs to extend the execution prefixes that led to $C'$ and $C''$ to
get two different runs $e'$ and $e''$. It suffices to show that
there is at least one process $q$ that cannot distinguish $e'$ from
$e''$: This implies that $q$ will eventually decide on the same value in both
executions, which contradicts the assumed different valency of $C'$
and $C''$. 

Our choice of the round graphs depends on the following exhaustive 
cases: 
\begin{enumerate}
\item[(i)] For $p\not\in R$, we let the adversary choose any root component
$R^s$ consisting of the processes in $R$, for all $s\geq r$. Obviously, every process (i.e.,
we can choose any) $q\in R$ 
has the same state throughout $e'$ and $e''$.
\item[(ii)] For $p\in R$ and $r\in J$, we choose any root component $R^s$ consisting of the
processes in $R$ for $r\leq s \leq r_{ST}+\nwbound-2$, and $R^s=\{q\}$ for $s>r_{ST}+\nwbound-2$,
where $q$ is the process that does not hear from any process in
$R$ (and hence from $p$) within $J$ according to 
\cref{def:almostE-network-vertex-stable-roots}. Hence, $q$ has the same
state in $e'$ and $e''$, both during $J$ and afterwards, where
it is the single root.
\item[(iii)]  For $p\in R$ and $r\not\in J$, we choose graphs
$\G^s$ where $R^s=\{q\}$ and $p$ has only in-edges for 
$r\leq s < r_{ST}$; $q$ (satisfying $q\not\in R$ and hence $q\neq p$) 
is 
again the ``distant'' process allowed by 
\cref{def:almostE-network-vertex-stable-roots}. From $s=r_{ST}$ on, we
choose the same graphs $\G^s$ as in case (ii). It is again
obvious that $q$ has the same state throughout  $e'$ and $e''$,
since $p$ cannot communicate to any process before $J$ and
does not reach $q$ within $J$.
\end{enumerate}
In any case, for process $q$, the sequence of states in the extensions
starting from $C'$ and $C''$ is hence the same. Therefore, the two
runs are indistinguishable for $q$, which cannot hence decide 
differently. This provides the required contradiction to the
different valencies of $C'$ and $C''$. \qed
\end{proof}

The next \cref{lem:graph-seq} establishes connectedness
of the successor graph of a configuration \cite{SWK09}. 

\begin{lemma}\label{lem:graph-seq}
For any two round 
$r$ graphs $\G'$ and $\G''$, we can find a finite
sequence of graphs $\G',\G_1,\dots\G_i\dots\G''$, each with a single
root component, where any two consecutive graphs differ only by at most one edge. We say that the configurations $C'$ resp.\ $C''$ reached by applying
  $\G'$ resp.\ $\G''$ to the same configuration $C$ are \emph{connected}
in this case.
Moreover, our construction guarantees that if the root components of
$\G'$ and $\G''$ consist of the same set of processes $R'=R''=R$,
the same is true for all $\G_i$.
\end{lemma}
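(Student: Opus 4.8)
The plan is to funnel every transformation through canonical graphs. For a nonempty $R\subseteq\Pi$, write its elements in increasing id order as $q_1<\dots<q_m$, and let $\G_R$ be the graph whose edge set is the directed cycle $q_1\to q_2\to\dots\to q_m\to q_1$ on $R$ together with the out-star $\{q_1\to v : v\in\Pi\setminus\{q_1\}\}$. By construction $\G_R$ is weakly connected, $R$ is strongly connected, no edge enters $R$ from outside, and every vertex is reached directly from $q_1$; hence, arguing as in the proof of \cref{lem:root}, $R$ is the unique root component of $\G_R$. Note that $\star:=\G_{\{p_1\}}$ is exactly the star centred at the globally smallest process $p_1$. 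The lemma follows once two claims are established.

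\textbf{Claim (A) (canonicalisation with a fixed root).} If $\G$ has a single root component with vertex set $R$, then there is a sequence of single-edge modifications leading from $\G$ to $\G_R$ such that every intermediate graph again has a single root component with vertex set $R$. I would prove this in two phases: first insert, one at a time, the edges of $\G_R$ absent from $\G$, then delete, one at a time, the edges of $\G$ absent from $\G_R$; the intermediate graphs are $\G\cup S$ with $S\subseteq\G_R$ in the first phase and $\G_R\cup T$ with $T\subseteq\G$ in the second. In every such graph $R$ is strongly connected ($\G$ has $R$ as an SCC, and $\G_R$ contains the cycle on $R$), every vertex is reachable from $R$ (by \cref{lem:root} in $\G$, and directly from $q_1$ in $\G_R$), and no edge enters $R$ from outside (neither $\G$, whose root is $R$, nor $\G_R$ has one); hence $R$ is the \emph{unique} root of every intermediate graph. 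Claim (A) already settles the case $R'=R''=R$: concatenate a sequence from $\G'$ to $\G_R$ with the reverse of a sequence from $\G''$ to $\G_R$, and every graph on the way keeps root set $R$, which is precisely the invariance promised in the last sentence of the lemma.

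\textbf{Claim (B) (linking canonical graphs, root set may vary).} For every $R$ there is a sequence of single-edge modifications from $\G_R$ to $\star$ in which every intermediate graph has a single root component (possibly with different vertex sets, which is now allowed). If $p_1\in R$, then $p_1=q_1$ and $\G_R$ is $\star$ together with the extra cycle edges $q_2\to q_3,\dots,q_{m-1}\to q_m,q_m\to q_1$; deleting them one at a time, \emph{starting with} $q_m\to q_1$, keeps $p_1$ a root that has no incoming edge and reaches every vertex, hence the unique root, and ends at $\star$. If $p_1\notin R$, I would (i) add $p_1\to q_1$, which — since $q_1\to p_1$ is already present and nothing enters $R\cup\{p_1\}$ from outside — makes $R\cup\{p_1\}$ the unique root; (ii) add the remaining out-star edges $\{p_1\to v : v\neq p_1\}$; (iii) delete every cycle edge and every edge $q_1\to v$ \emph{except} $q_1\to p_1$; and (iv) delete $q_1\to p_1$, arriving at $\star$. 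For (ii) and (iii) the decisive point is that the current graph always contains $\star\cup\{q_1\to p_1\}$, so the strongly connected component $A$ of $p_1$ contains $q_1$, every vertex is reachable from $p_1$, and any edge entering $A$ would have its tail reach $p_1$ and thus already lie in $A$; hence $A$ is the unique root of every intermediate graph, even though the root set shrinks toward $\{p_1,q_1\}$. The case $R'\neq R''$ then follows by chaining (A) applied to $\G'$, (B) applied to $R'$, the reverse of (B) applied to $R''$, and the reverse of (A) applied to $\G''$.

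The step I expect to be the main obstacle, and the reason for routing everything through $\G_R$ and $\star$ rather than editing graphs ad hoc, is step (iii) of Claim (B): deleting the edges that separate two canonical graphs with distinct root sets passes through graphs whose root component is not the ``obvious'' set and whose root set shrinks non-monotonically, so it cannot be tracked naively. The remedy is to keep enough of the target star present that the root is forced to be ``the SCC of $p_1$'', which is automatically the unique root as soon as every vertex is reachable from $p_1$. The remaining checks — that each elementary move alters exactly one edge, that degenerate cases such as $|R|=1$ behave, and that when $R'=R''$ only Claim (A) is invoked so the root-set invariance holds — are routine.
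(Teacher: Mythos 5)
Your proof is correct, but it takes a genuinely different route from the paper's. The paper edits $\G'$ into $\G''$ directly: if the two root components intersect it starts from $\G'$ (otherwise it first adds a single bridging edge from a member of $R''$ to a member of $R'$), then in a first phase adds the edges of $E''\setminus E_i$ one by one (adding edges can never create a second root component), and in a second phase deletes the edges of $E_i\setminus E''$ one by one, using the invariant that edges of $E''$ are never removed, so a directed path from $R''$ to both endpoints of any deleted edge survives; the ``moreover'' part for $R'=R''$ is then obtained by a separate induction showing that no added or deleted edge can alter the root set. Your argument instead canonicalises: Claim~(A) morphs any single-rooted graph into the cycle-plus-out-star graph $\G_R$ while keeping the root set \emph{exactly} $R$ at every step, which makes the root-set invariance for $R'=R''$ fall out for free rather than requiring the paper's extra induction, and Claim~(B) links canonical graphs with different roots through the global star $\star$, where the root is forced to be the strongly connected component of $p_1$ as soon as everything is reachable from $p_1$. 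What you buy is modularity and an automatic proof of the second statement; what you pay is a longer edge-modification sequence and the extra bookkeeping of Claim~(B), whereas the paper's sequence has length roughly the symmetric difference of the edge sets. One small imprecision: during step~(ii) of Claim~(B) the intermediate graphs do \emph{not} yet contain all of $\star$ (only some of the $p_1$-out-edges have been added), so the stated invariant ``the current graph always contains $\star\cup\{q_1\to p_1\}$'' is literally true only from the end of (ii) onward; however, the properties you actually use---$q_1$ lies in the strongly connected component of $p_1$ (via $p_1\to q_1$ and $q_1\to p_1$) and every vertex is reachable from $p_1$ (via $p_1\to q_1$ and the $q_1$-out-star)---do hold throughout (ii), so the argument goes through after restating the invariant accordingly.
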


\begin{proof} 
First, we consider two cases with respect to the members $R'$
and $R''$ of the respective root components: (a) $R'\cap R''=\emptyset$, (b)
$R'\cap R''\ne\emptyset$. Moreover, for the second part of the proof,
we also consider a special case of (b): (b') $R'=R''$.

For case (b) (and thus also for (b')), we consider $\G_1=\G'$.
For case (a), we construct $\G_1$ from $\G'$ as follows: Let $p'\in R'$
and $p''\in R''$, then $\G_1$ has the same edges as $\G'$ plus
$a=(p''\edge{}p')$, thus $R_1\supseteq R'\cup\set{p''}$ (recall that
$p''$ must be reachable from $R'$ already in $\G'$).
So, now we have that in both cases $\G'$ and $\G_1$ differ in at most
one edge. 
Moreover, there is a nonempty intersection between $R_1$ and $R''$.

In the first phase of our construction (which continues as long as $E''\setminus
E_i\ne\emptyset$), we construct $\G_{i+1}$ from $\G_{i}$, $i\ge1$, by
choosing one edge $e=(v\edge{}w)$ from $E''\setminus E_i$ and let
$\G_{i+1}$ have the same edges as $\G_i$ plus $e$.  Clearly, $\G_i$
and $\G_{i+1}$ differ in at most one edge. Moreover, when adding an
edge, we cannot add an additional root component, so as long as we add edges we
will have that $\G_{i+1}$ has a single root component $R_{i+1}\supset
R'$.  

When we reach a point in our construction where $E''\setminus
E_i=\emptyset$, the first phase ends. As $\G_i$ now contains all the
edges in $\G''$, i.e., $E_i\supset E''$, we have $R_i\supset
R''$. In the second phase of the construction, we remove edges. To this
end, we choose one edge $e=(v\edge{}w)$ from $E_i\setminus E''$, and
construct $\G_{i+1}$ from $\G_i$ by removing $e$. Again we have to
show that there is only one root component. Since we never remove
an edge in $E''$, $\G_i$ always contains a directed path
from some $x\in R''$ to both $v$ and $w$ that only uses edges in
$E''$. As $e\not\in E''$, this also holds for $\G_{i+1}$. Since there
is only one root component in $\G''$, this implies that there is only
one in $\G_{i+1}$.

Let $\G_j$ be the last graph constructed in the first phase, and
$\G_k$ the last graph constructed in the second phase. 
It is easy to see that $E_k=E_j\setminus(E_j\setminus E'')$, 
which implies that $E_k=E''$ and hence $\G_k=E''$. This completes
the proof of the first part of our lemma.

To see that the second part also holds, we consider case (b') 
in more detail and show by induction that $R_{i+1}=R_{i}=R$.
For
the base case, we recall that $\G_1=G'$ and thus $R_1=R'$. 
For the induction step, we consider first that the step involves
adding an edge $e=(v\edge{}w)$ (phase 1): Adding an edge 
can only modify the
root component when $v\not\in R_i$ and $w\in R_i$. Since such an edge $e$ is
not in $E''$ (as it has the same root component as $E'$), 
we cannot select it for addition, so the root component
does not change. If, on the other hand, the step from $\G_i$ to
$\G_{i+1}$ involves removing the edge $e=(v\edge{}w)$ (phase 2),
we only need to consider the case where $v\in R_i$. (If $v\not\in R_i$, then
also $w\not\in R_i$ so the root component cannot change by removing $e$.)
But since we never remove edges from $E''$, this implies that even
after removing $e$ there is still a path from $v$ to $w$, so the root
component cannot have changed. 
\qed
\end{proof}

The proof of the following impossibility result follows roughly 
along the lines of the proof of~\cite[Lemma~3]{SWK09}. It shows, by means of induction on the round number, that a
consensus algorithm $\A$ cannot reach a univalent configuration after any finite
number of rounds. 

\begin{theorem}[Impossibility of consensus under \MAd{\nwbound-1}]\label{thm:consensusimp}
There is no algorithm that solves consensus under the message adversary \MAdp{\nwbound-1}, 
and hence none under \MAd{\nwbound-1}.
\end{theorem}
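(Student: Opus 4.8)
The plan is a classical bivalence argument in the style of~\cite[Lemma~3]{SWK09}: assume $\A$ solves consensus under $\MAdp{\nwbound-1}$ and show, by induction on the round number, that the adversary can keep the configuration \emph{bivalent} at the beginning of every round. Since any configuration in which some process has already decided is necessarily univalent (decided states are closed under $T_p$, and Agreement forces all decisions to coincide), this contradicts Termination. Throughout I restrict the input values to $\{0,1\}$, so that every configuration is $0$-valent, $1$-valent, or bivalent, and — as allowed by the remark preceding \cref{lem:neighbours-are-bivalent} — I fix once and for all a position for the window $J=[r_{ST},r_{ST}+\nwbound-2]$ together with the vertex set $R$ of the $J$-vertex-stable root component that the run will eventually exhibit.

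\textbf{Base case.} I would first produce a bivalent initial configuration by the usual chain argument: enumerate the initial configurations $I_0,\dots,I_n$, where $I_j$ assigns input $1$ to $p_1,\dots,p_j$ and $0$ to the rest, so that $I_{j-1}$ and $I_j$ differ in the state of a single process at the beginning of round $1$. If every $I_j$ were univalent, \cref{lem:neighbours-are-bivalent} (instantiated at $r=1$ with the fixed $J,R$) would force all of them to have the same valency, contradicting Validity, which makes $I_0$ $0$-valent and $I_n$ $1$-valent. Hence some $I_j$ is bivalent.

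\textbf{Inductive step.} Given a bivalent configuration $C$ at the beginning of round $r$, I would show that some communication graph the adversary may still play in round $r$ leads to a bivalent configuration at the beginning of round $r+1$. Suppose not. Since $C$ is bivalent there are continuations of the prefix reaching $C$ — each feasible for $\MAdp{\nwbound-1}$ — that decide $0$ resp.\ $1$; let $\G'$ resp.\ $\G''$ be their round-$r$ graphs, and note that if $r\in J$ both have root component exactly $R$, as those continuations honour the pre-committed adversary. Applying $\G'$ and $\G''$ to $C$ yields a $0$-valent $C'$ and a $1$-valent $C''$. By \cref{lem:graph-seq} there is a finite sequence of single-root graphs $\G'=\G_0,\G_1,\dots,\G_m=\G''$ in which consecutive graphs differ by at most one edge — and, by its ``Moreover'' clause, all of them have root component $R$ when $r\in J$. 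Applying $\G_i$ to $C$ gives configurations $C_0=C',\dots,C_m=C''$; since the round-$r$ messages are determined by $C$ alone, toggling a single edge $(v\to w)$ only alters the set of messages received by $w$, so $C_i$ and $C_{i+1}$ differ in the state of at most one process at the beginning of round $r+1$. By the contradiction hypothesis each $C_i$ is univalent, so \cref{lem:neighbours-are-bivalent} (now at round $r+1$) makes consecutive $C_i$ equivalent in valency, hence $C'$ and $C''$ equivalent — contradicting $0$-valent vs.\ $1$-valent.

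\textbf{Conclusion and main obstacle.} Concatenating the graphs chosen at each round produces a single infinite communication-graph sequence, feasible for $\MAdp{\nwbound-1}$, along which every round-start configuration is bivalent; hence no process ever decides, contradicting Termination. The statement for $\MAd{\nwbound-1}$ then follows from $\MAd{\nwbound-1}\geq\MAdp{\nwbound-1}$ (\cref{lem:advrelation}), since an algorithm correct under a stronger adversary is correct under a weaker one. I expect the delicate part to be the feasibility bookkeeping of the constructed run — that conditions~(i)--(iii) of \cref{ass:too-short} really hold — and in particular that during the pre-committed window $J$ we use only single-root graphs on the vertex set $R$ that together realize an \almostgoodDE\ $J$-vertex-stable root component while still preserving bivalence. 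This is exactly where the slack in \cref{def:almostE-network-vertex-stable-roots} — the one process that need not be reached from $R$ within the $\nwbound-1$ rounds of $J$ — is indispensable (it is the ``frozen'' process exploited in the case analysis inside \cref{lem:neighbours-are-bivalent}); with a window of length $\nwbound$ that slack disappears and the argument breaks, which is precisely why $\MAd{\nwbound}$ is no longer too strong.
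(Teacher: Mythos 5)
Your proposal is correct and follows essentially the same route as the paper's proof: a round-by-round bivalence induction with the base case obtained from a chain of initial configurations differing in one process, the inductive step handled by connecting the two round-$r$ graphs via \cref{lem:graph-seq} and applying \cref{lem:neighbours-are-bivalent} to neighbouring configurations, and the transfer to \MAd{\nwbound-1} via \cref{lem:advrelation}. The ``delicate'' feasibility bookkeeping you flag (pre-committed $J$ and $R$, and the slack process $q$ of \cref{def:almostE-network-vertex-stable-roots}) is exactly how the paper handles it as well, inside the constructions of \cref{lem:neighbours-are-bivalent}.
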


\begin{proof}
We follow roughly along the lines of the proof of~\cite[Lemma
  3]{SWK09} and show per induction on the round number, that no
algorithm $\A$ can reach a univalent configuration by round $r$, 
for any $r>0$. Since no process can have decided in a bivalent
configuration, this violates the termination property of consensus.

For the base case, we consider binary consensus only and argue similar to~\cite{FLP85} but make use of
our stronger validity property: 
Let $C_x^0$ be the initial configuration, where the processes with
the $x$ smallest ids start with $1$ and all others with $0$. Clearly,
in $C_0^0$ all processes start with $0$ and in $C_n^0$ all start with
$1$, so the two configurations are $0$- and $1$-valent, respectively.
To see that for some $x$ $C_x^0$ must be bivalent, consider that this
is not the case, then there must be a $C_x^0$ that is $0$-valent while
$C_{x+1}^0$ is $1$-valent. But, these configurations differ only in
$p_{x+1}$, and so by \cref{lem:neighbours-are-bivalent} they
cannot be univalent with different valency.

For the induction step we assume that there is a bivalent
configuration $C$ at the beginning of round $r-1$, and show that there is
at least one such configuration at the beginning of round $r$.
We proceed by contradiction and assume all configurations at the
beginning of round $r$ are univalent. Since $C$ is bivalent and all
configurations at the beginning of $r$ are univalent, there
must be two configurations $C'$ and $C''$ at the beginning of round
$r$ which have different valency. 
Clearly, $C'$ and $C''$ are reached from $C$ by two different 
round $r-1$ graphs $\G'=\li{\Pi,E'}$ and
$\G''=\li{\Pi,E''}$. \cref{lem:graph-seq} shows that there is a
sequence of graphs such that $C'$ and $C''$ are connected. 
Each pair of subsequent graphs in this sequence
differs only in one link $(v\edge{}w)$, such that the resulting configurations
differ only in the state of $w$. Moreover, if the root component in
$\G'$ and $\G''$ is the same, all graphs in the sequence also have the
same root component. 
Since the valency of $C'$ and $C''$ was assumed to be
different, there must be two configurations $\overline{C}'$ and
$\overline{C}''$ in the corresponding sequence of configurations that have different 
valency and differ only in the state of one process, say $p$.
Applying \cref{lem:neighbours-are-bivalent} to  $\overline{C}'$ and
$\overline{C}''$ again produces a contradiction, and so not all
successors of $C$ can be univalent. 

We have hence established that \MAdp{\nwbound-1} is too strong for
consensus, which implies the same for \MAd{\nwbound-1} according to 
\cref{lem:advrelation}. \qed
\end{proof}

\section{A Consensus Algorithm for \MAd{2D + 2\nwbound+2}} 
\label{sec:consensus}

In this section, we show that it is possible to solve consensus
under the message adversary \MAd{2D+2\nwbound+2} given in \cref{ass:window}.

The underlying idea of our consensus algorithm is to use flooding to
     propagate the largest input value to everyone. 
However, as \cref{ass:window} does not guarantee
     bidirectional communication between every pair of processes according
to \cref{lem:classification}, flooding is not sufficient: 
The largest input value could be hidden at a single process $p$
     that never has outgoing edges. 
If such a leaf process $p$ would never accept smaller values, 
it is impossible to reach agreement (without potentially violating validity). 
Thus, we have to find a way to force $p$ to accept also a smaller
value. 

A well-known technique to do so is \emph{locking} a candidate value. Obviously, 
we do not want a leaf process to lock its value, but rather 
some process(es) that will be able to impose their locked value, i.e.,
can successfully flood the system. In addition, we may allow
processes that have successfully locked a value to decide only when
they are sure that every other process has accepted their value
as well. According to \cref{def:E-network-vertex-stable-roots}, both can be guaranteed
when these processes have been in a vertex stable root component long 
enough--- which is (amply) guaranteed by \MAd{2D+2\nwbound+2}. 

The first major ingredient of our consensus algorithm is 
a network approximation algorithm (described in \cref{sec:approxalgo}), 
which allows processes to detect their root membership in (past)
rounds. The core of our consensus algorithm (presented in 
\cref{sec:consensuscore}) then exploits this knowledge for reaching
agreement on locked values and imposes the resulting value on 
all processes in the network. As we will see, the main complication comes from the
fact that a process can detect whether it has been part of the
root component of round $r$ only with some latency.

\subsection{The Local Network Approximation Algorithm}
\label{sec:approxalgo}

According to our system model, no process $p$ has any initial 
knowledge of the network. In order to learn about VSRCs, 
for example, it hence needs to \emph{locally} acquire such
knowledge. Process $p$ achieves this by means of
\cref{alg:approx}, which maintains a \emph{network
estimate} $A_p$ in a local variable.\footnote{We denote the value of 
a variable $v$ of process $p$ in
round $r$ \emph{before} the round $r$ computation finishes as $v_p^r \in S_p^r \in \S_p$; we usually
suppress the superscript when it refers to the current round.}
$A_p$ is a graph that holds the local estimates of every communication
graph $\G^r$ that occurred so far, simply by labeling an edge $(p \to q)$ with the set of
round numbers of every $\G^r$ once $p$ received evidence that $(p \to q)$ was present in
round $r$.

Initially, $A_p$ consists of process $p$ only. In every round, every process $p$ 
broadcasts its current $A_p$ and fuses it with the network estimates received 
from its neighbors. In more detail, $p$ updates $A_p$ whenever $q \in \N_p^r$, 
by adding $(q\edge{\{r\}}p)$ if $q$ is $p$'s neighbour for the first time, or by
updating the label of the edge $(q\edge{U}p)$ to $(q\edge{U \cup
 \{r\}}p)$ (\cref{line:addEdge1} and~\cref{line:addEdge2}).
Moreover, $p$ also receives $A_q$ from $q$ and uses this information
to update its own knowledge: 
The loop in \cref{line:forloop} ensures that $p$ has an edge
$(v\edge{T\cup T'}w)$ for each $(v\edge{T'}w)$ in $A_q$, where 
$T$ is the set of rounds previously known to $p$.

Given $A_p$, we use $A_p|t$ with $0$\footnote{To simplify the presentation, we have refrained from purging outdated
information from the network approximation graph. Actually, our consensus algorithm 
only queries $\stableSCC$ for intervals that span at most the
last $2\nwbound+1$ rounds, i.e., any older information could safely be removed from 
the approximation graph, resulting in a message complexity that is polynomial in
$n$.}$< t \le r$ to denote the current estimate of
$\G^t$ contained in $A_p$. Formally, $A_p|t$ is the graph induced by
the set of edges
\[
    E_p|t=\set{e=(v\rightarrow w)\mid \exists T\supseteq\set{t}:
      (v\edge{T} w)\in A_p}.
\]
As the information about $q$'s neighbors in $\G^t$ might take many rounds
  to reach some process $p$ (if it ever arrives at $p$), $A_p|t$
     may never be fully up-to-date, and as only reported
     edges are added to the estimate (but not all reports need to reach
     $p$), $A_p|t$ will be an under-approximation of $\G^t$. For example, 
a process $p$ that does not have any incoming links from other processes,
throughout the entire run of the algorithm, cannot learn
anything about the remaining network, i.e., $A_p$ will permanently be
the singleton graph.

\cref{alg:approx} finally provides an externally callable function $\stableSCC(I)$, 
which will be used by the core consensus consensus algorithm to find out
whether the calling process $p$ was member in an $I$-VSRC $R^I$ and to query
the set of all members $R$. We will prove in \cref{lem:Cpr2root} below that the latter
is the case if $A_p|t$ is strongly connected and consists of the same non-empty
set $R$ of processes for all $t\in I$. Informally, this is due to the fact
that the members of an $I$-VSRC will not be able to acquire knowledge of 
the topology outside $R^I$ within $I$, as they do not have incoming links
from outside.

\begin{algorithm}[!h]
  \footnotesize
\caption{\em Local Network Approximation (Process~$p_i$)}
\label{alg:approx}
\setlinenosize{\footnotesize}
\setlinenofont{\tt}
\begin{algorithmic}[1]
\item[] Provides externally callable function $\stableSCC()$. %
\EMPTY
\item[] {\bf Variables and Initialization:}

\STATE $A_{p_i}:=\li{V_{p_i},E_{p_i}}$ initially $(\set{p_i},\emptyset)$
\COMMENT{weighted digraph without multi-edges and loops}
\item[] {\bf\boldmath Emit round $r$ messages:}
\STATE send $\msg{A_{p_i}}$ to all current neighbors
\item[] {\bf\boldmath Round $r$: computation:}
\FOR{$q \in \Timely_{p_i}^r$ and $q$ sent message $\msg{A_q}$ in $r$}
  \IF{$\exists$ edge $e=(q\edge{T}p_i) \in E_{p_i}$}
    \STATE replace $e$ with $(q\edge{T'}p_i)$ in $E_{p_i}$
    where $T' \la T \cup \set{r}$ \label{line:addEdge1}
  \ELSE
    \STATE add $e:=(q\edge{\set{r}}p_i)$ to $E_{p_i}$ \label{line:addEdge2}
  \ENDIF
  \STATE $V_{p_i} \la V_{p_i} \cup V_q$
\ENDFOR
\FOR{every pair of nodes $(v,w)\in V_{p_i}\times V_{p_i}$, $v\neq w$}
\label{line:forloop}
  \IF{$T'=\bigcup\set{ S \mid \exists q\in\Timely_{p_i}^r\colon (v\edge{S}w)\in
E_q} \neq \emptyset$}
    \STATE replace $(v\edge{T}w)$ in $E_{p_i}$ with $(v\edge{T\cup
    T'}w)$; add $(v\edge{T'}w)$ if no such edge exists
  \ENDIF
  \ENDFOR \label{line:forloopend}
\EMPTY
\FUNC{$\stableSCC(I)$}
  \STATE Let $A_{p_i}|t$ be induced graph of $\set{(v\edge{T}w)\in E_{p_i} \mid t \in T}$
  \STATE Let $C_{p_i}|t$ be $A_{p_i}|t$ if it is strongly connected, 
or the empty graph otherwise.\label{line:Cps}
 \IF{$\forall t_1,t_2 \in I: C_{p_i}:=V(C_{p_i}|t_1)=V(C_{p_i}|t_2)\neq\emptyset$}
   \STATE return $C_{p_i}$
 \ELSE
   \STATE return $\emptyset$
\ENDIF
\ENDFUNC
\end{algorithmic}%
\end{algorithm}

We start our analysis of \cref{alg:approx} with 
\cref{lem:AsubsetG}, which shows that $A_p|t$ underapproximates 
$\G^t$ in a way that consistently
includes neighborhoods. Its proof uses the trivial invariant
asserting
$A_p|t = \li{\{p\},\emptyset}$ at the end of every round $r<t$.

\begin{lemma} \label{lem:AsubsetG}
If $A_p|t$ contains $(v\edge{}w)$ at the end
of some round $r$, then (i) $(v\edge{}w)\in\G^t$, i.e., $A_p|t\subseteq\G^t$,
and (ii) $A_p|t$ also contains $(v'\edge{}w)$ for every $v'\in 
\N_w^t \subseteq \G^t$.
\end{lemma}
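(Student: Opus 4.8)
The plan is to strengthen the statement into a single invariant and prove it by induction on the round number $r$, simultaneously over all processes and all $t\le r$: \emph{at the end of round $r$, for every $t\le r$, every edge $(v\edge{}w)\in A_p|t$ satisfies $(v\to w)\in\G^t$, and whenever $A_p|t$ contains any edge into $w$ it contains $(v'\edge{}w)$ for every $v'\in\N_w^t$.} Before the induction I would record the ``no premature information'' fact already mentioned in the text: at the end of any round $r'<t$ we have $A_p|t=\li{\{p\},\emptyset}$, since round-$t$ labels are created only in round $t$; in particular the message $\msg{A_q}$ that $p$ processes in round $r$ (sent before $q$'s round-$r$ computation) contains no edge whose label includes $r$, so the merge loop can never introduce a round-$r$ edge.

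The base case is immediate, as every $A_p$ starts as $(\{p\},\emptyset)$. For the step I would analyse the two phases of $p$'s round-$r$ computation separately. In the first phase (\cref{line:addEdge1,line:addEdge2}), for every $q\in\N_p^r$ the edge $(q\to p)$ is inserted or relabelled so that its label contains $r$; by the preliminary fact this is the \emph{only} way a round-$r$ edge enters $A_p$. Since $q\in\N_p^r$ means $(q\to p)\in\G^r$, under-approximation (i) is preserved for these new edges, and because the loop processes \emph{every} in-neighbour of $p$ and the later phase is monotone, at the end of round $r$ the in-edges of $p$ in $A_p|r$ are exactly $\{(q\to p):q\in\N_p^r\}$, which gives the neighbourhood-completeness part (ii) for the pair $(w,t)=(p,r)$.

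In the second phase (the loop at \cref{line:forloop}), $p$ fuses the received estimates: the pair $(v,w)$ gets $t$ added to the label of $(v\to w)$ in $E_p$ precisely when some neighbour $q\in\N_p^r$ reported $(v\edge{S}w)$ with $t\in S$, i.e.\ when $(v\to w)\in A_q|t$ for some such $q$, necessarily with $t\le r-1$. Under-approximation then follows immediately from the induction hypothesis applied to $q$. For neighbourhood-completeness, the key point is that the hypothesis for $q$ forbids $A_q|t$ from containing one in-edge of $w$ without containing all of them; hence that same $q$ witnesses, for every $v'\in\N_w^t$, that the pair $(v',w)$ acquires $t$ in its label when the loop reaches it, so $(v'\to w)\in A_p|t$ at the end of round $r$. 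Edges already present at the end of round $r-1$ are covered by the hypothesis for $r-1$ together with monotonicity (labels only grow, edges are never deleted). Ranging over all $(w,t)$ with $t\le r$ closes the induction and yields both (i) and (ii).

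The only real subtlety — and the point I would be most careful about — is cleanly separating the ``current round'' case $t=r$ from $t\le r-1$: one must argue that round-$r$ edges arrive \emph{solely} through the direct reception loop and that this loop captures the entire in-neighbourhood of $p$ within the same round, which is exactly the place where the under-approximation half and the neighbourhood-completeness half of the invariant reinforce each other. The remainder is routine monotonicity of the two update rules.
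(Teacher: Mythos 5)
Your proof is correct and follows essentially the same route as the paper's: a round-by-round induction on the combined invariant (i)+(ii), resting on the observation that a round-$t$ label can only originate at the receiving endpoint $w=p$ during round $t$ itself, and that the merge loop preserves both halves of the invariant via the induction hypothesis applied to the sending neighbour. Your reorganization (simultaneous induction over all $t\le r$ with the two update phases treated separately, instead of the paper's fixed-$t$ induction with base case $r=t$) is only a bookkeeping difference.
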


\begin{proof} 
We first consider the case where $r<t$, then at the end of round $r$
$A_p|t$ is empty, i.e., there are no edges in $A_p|t$. As the
precondition of the Lemma's statement is false, the statement is true.

For the case where $r\geq t$, we proceed by induction on $r$:

Induction base $r=t$: If $A_p|t$ contains $(v\edge{}w)$ at the end of 
round $r=t$,
it follows from $A_q|t = \li{\{q\},\emptyset}$ at the end of every 
round $r<t$, for every $q\in\Pi$, that $w=p$, since $p$ is the only
processor that can have added this edge to its graph approximation. 
Clearly, it did so only when $v\in \N_p^t$, i.e., 
$(v\edge{}w) \in \G^t$, and included also $(v'\edge{}w)$ for every 
$v'\in \N_p^t$ on that occasion. This confirms (i) and (ii).

Induction step $r\to r+1$, $r\geq t$: Assume, as our induction 
hypothesis, that
(i) and (ii) hold for any $A_q|t$ at the end of round $r$, 
in particular, for every
$q\in\N_p^{r+1}$. If indeed $(v\edge{}w)$ in $A_p|t$ at the end of 
round $r+1$, it must be contained in the union of round $r$
approximations
\[
  U=\left(A_p|t\right) \cup \left(\bigcup_{q\in \N_p^{r+1}} A_q|t\right)
\]
and hence in some $A_i|t$ ($i=q$ or $i=p$) at the end of round $r$.
Note that the edges (labeled $r+1$) added in round $r+1$ to $A_p$
are irrelevant for $A_p|t$ here, since $t < r+1$.

Consequently, by the induction hypothesis, $(v\edge{}w) \in \G^t$, thereby
confirming (i). As for (ii), the induction hypothesis also implies that 
$(v'\edge{}w)$ is also in this $A_i|t$. Hence, every such
edge must be in $U$ and hence in $A_p|t$ at the end of 
round $r+1$ as asserted. \qed
\end{proof}

The following \cref{lem:Cpr2root} shows that locally detecting $A_p|t$ to be strongly connected
(in \cref{line:Cps} of \cref{alg:approx}) 
implies that $p$ is in the root component of round~$t$.
This result rests
on the fact that $A_p|t$ underapproximates $\G^t$ (\cref{lem:AsubsetG}.(i)), but does so in a way that
never omits an in-edge at any process $q\in A_p|t$ (\cref{lem:AsubsetG}.(ii)).

\begin{lemma}\label{lem:Cpr2root}
If the graph $C_p|t$ (\cref{line:Cps}) with $t<r$ is non-empty in round $r$,
then $p$ is member of $R^t$, i.e., $p\in R$.
\end{lemma}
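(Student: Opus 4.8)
\noindent The plan is to identify the vertex set of the component that $p$ locally detects with the \emph{actual} round-$t$ root component, and then to show that $p$ belongs to it. Write $R:=V(A_p|t)$. Since $C_p|t$ is non-empty, \cref{line:Cps} tells us that $A_p|t$ is strongly connected and contains at least one edge, so $|R|\geq 2$.

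First I would prove $R=R^t$, purely by unwinding \cref{lem:AsubsetG}. By \cref{lem:AsubsetG}.(i) we have $A_p|t\subseteq\G^t$, so $A_p|t$ is a subgraph of $\G^t$ on vertex set $R$; strong connectivity of $A_p|t$ therefore forces $R$ to lie inside a single SCC of $\G^t$. By \cref{lem:AsubsetG}.(ii), whenever $w\in R$ and $v\in\N_w^t$ the edge $(v\ra w)$ is already present in $A_p|t$, hence $v\in R$; so no edge of $\G^t$ enters $R$ from outside. A vertex set that is contained in one SCC and receives no incoming edges is itself a maximal SCC with no incoming edges, i.e.\ a root component of $\G^t$; by the uniqueness of the root component guaranteed by \cref{ass:window}(i), $R=R^t$ follows.

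Next I would show $p\in R^t$, which is the genuinely delicate step: a priori $p$ need not even belong to $V(A_p|t)$, since the fusion loop starting at \cref{line:forloop} may copy into $A_p$ edges not incident to $p$. I would argue by contradiction, assuming $p\notin R^t$. Because $\G^t$ has a single root component, \cref{lem:root} provides a directed path in $\G^t$ from $R^t$ to $p$; as $p\notin R^t$ this path is nonempty, so its last edge is some incoming edge $(v\ra p)\in\G^t$, i.e.\ $v\in\N_p^t$. Consequently, in round $t$ process $p$ inserts $(v\edge{\set{t}}p)$ into $E_p$ via \cref{line:addEdge1} or \cref{line:addEdge2}, and since edge labels are only ever enlarged this edge still lies in $A_p|t$ in round $r$; hence $p\in V(A_p|t)=R=R^t$, contradicting the assumption. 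Therefore $p\in R^t$.

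The main obstacle is exactly this last step: the slogan ``$A_p|t$ is strongly connected'' does not by itself place $p$ among the vertices of $A_p|t$, and one has to exploit the single-root-component hypothesis --- in the strong form of full reachability from the root supplied by \cref{lem:root} --- to force $p$ to have recorded one of its own round-$t$ in-edges. The identification $R=R^t$ in the first step, by contrast, is an essentially mechanical consequence of \cref{lem:AsubsetG}.
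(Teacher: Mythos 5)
Your proof is correct and rests on the same two pillars as the paper's: \cref{lem:AsubsetG}.(i) to place $A_p|t$ inside $\G^t$, and \cref{lem:AsubsetG}.(ii) to show that the vertex set of the detected SCC is closed under in-neighbors in $\G^t$, hence (being strongly connected) is a root component and, by item~(i) of \cref{ass:window}, equals $R^t$. The organization differs, though. The paper runs a single contradiction: if $p\notin R^t$, then $A_p|t$ is an SCC that is not the root component, so some vertex of it must miss one of its $\G^t$-in-edges, which \cref{lem:AsubsetG}.(ii) forbids; the membership question is settled there by the one-line remark that ``$p$ is always included in any $A_p$ by construction.'' You instead prove the identification $V(A_p|t)=R^t$ directly and then treat $p\in V(A_p|t)$ as a separate obligation, discharging it via the reachability part of \cref{lem:root}: if $p\notin R^t$, then $p$ has a round-$t$ in-neighbor whose edge $p$ itself records in \cref{line:addEdge1}/\cref{line:addEdge2} and never deletes. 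That extra step is a genuine refinement: under the edge-induced definition of $A_p|t$, membership of $p$ is indeed not automatic (the loop at \cref{line:forloop} imports edges not incident to $p$), so you close a point the paper glosses over, at the price of invoking the single-root reachability of \cref{lem:root}, which the paper's argument does not need. One cosmetic remark: when applying \cref{lem:AsubsetG}.(ii) to every $w\in V(A_p|t)$ you implicitly use that each such $w$ has at least one in-edge inside the SCC $A_p|t$ (immediate from strong connectivity and $|V(A_p|t)|\ge 2$, which you established); the paper states this explicitly, and it deserves a clause in your write-up as well.
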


\begin{proof}
For a contradiction, assume that $C_p|t$ is non-empty (hence $A_p|t$ is
an SCC by \cref{line:Cps}), but $p\not\in R$. Since $p$ is always
included in any $A_p$ by construction and $A_p|t$ underapproximates
$\G^t$ by \cref{lem:AsubsetG}.(i), this implies that $A_p|t$ cannot be
the root component of $\G^t$. Rather, $A_p|t$ must contain some process $w$ 
that has an in-edge $(v\edge{}w)$ in $\G^t$ that is not present in $A_p|t$.
As $w$ and hence some edge $(q\edge{t}w)$ is contained in $A_p|t$, because
it is an SCC, \cref{lem:AsubsetG}.(ii) reveals that this is impossible. \qed
\end{proof}

From the definition of the function $\stableSCC(I)$ in
\cref{alg:approx} and \cref{lem:Cpr2root}, we get the following
\cref{cor:stable2root}.

\begin{corollary}\label{cor:stable2root}
  If the function $\stableSCC(I)$
  evaluates to $R\neq\emptyset$ at process $p$ in round $r$, then $\forall x\in
  I$ where
  $x<r$, it holds that $p$ is a member of $R^{x}$, i.e., $p\in R$.
\end{corollary}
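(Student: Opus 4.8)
The plan is to obtain the corollary as a pointwise application of \cref{lem:Cpr2root}. First I would unfold the body of $\stableSCC(I)$ in \cref{alg:approx}: the function returns a non-empty set $R$ only along the branch where the test ``$\forall t_1,t_2\in I\colon C_p:=V(C_p|t_1)=V(C_p|t_2)\neq\emptyset$'' succeeds, and on that branch $R$ is exactly this common non-empty vertex set. Hence, whenever $\stableSCC(I)$ evaluates to $R\neq\emptyset$ at $p$ in round $r$, the graph $C_p|t$ is non-empty (in round $r$) for \emph{every} index $t\in I$; by \cref{line:Cps} this means that $A_p|t$ is strongly connected and that $V(A_p|t)=R$.

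Next I would fix an arbitrary $x\in I$ with $x<r$ and apply \cref{lem:Cpr2root} with the choice $t:=x$: since $C_p|x$ is non-empty in round $r$ and $x<r$, the lemma yields $p\in R^{x}$, i.e., $p$ is a member of the round-$x$ root component. As $x$ ranges over all indices of $I$ strictly below $r$, this is precisely the claim of the corollary. To make the phrase ``$p\in R$'' literal, one may additionally recall that the argument behind \cref{lem:Cpr2root} (through \cref{lem:AsubsetG}, in particular part~(ii)) shows that the non-empty strongly connected under-approximation $A_p|x\subseteq\G^{x}$ is closed under in-neighbours in $\G^{x}$, hence forms a root component of $\G^{x}$, hence equals $R^{x}$; thus $R=V(A_p|x)=R^{x}$ and so $p\in R$.

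I do not expect any genuine obstacle: the corollary is essentially bookkeeping on top of \cref{lem:Cpr2root}, with all ingredients already in place. The only point that must not be glossed over is the side condition $x<r$ --- this is exactly the hypothesis of \cref{lem:Cpr2root}, and it is unavoidable because at the moment $\stableSCC(I)$ is invoked in round $r$ the estimate $A_p|r$ can have been updated only with $p$'s own round-$r$ in-edges (no round-$r$ reports from other processes can have reached $p$ yet), so $C_p|r$ carries no guarantee about membership in $R^{r}$. This is why the conclusion of the corollary is stated only for the indices $x\in I$ with $x<r$.
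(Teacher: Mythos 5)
Your proposal is correct and matches the paper's own derivation: the paper obtains \cref{cor:stable2root} exactly by unfolding the definition of $\stableSCC(I)$ in \cref{alg:approx} (so $C_p|t\neq\emptyset$ for every $t\in I$) and applying \cref{lem:Cpr2root} pointwise to each $x\in I$ with $x<r$. Your closing remark on why the restriction $x<r$ is needed, and the optional observation that $V(A_p|x)$ in fact coincides with the root component, are consistent with (and no stronger than what is justified by) \cref{lem:AsubsetG} and \cref{lem:Cpr2root}.
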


The following \cref{lem:root2Cpr} proves that, 
in a sufficiently
     long $I=[a,b]$ with a $I$-vertex-stable root component $R^I$, 
every member $p$ of $R^I$ detects an SCC for round $a$ (i.e., $C_p|a 
\neq \emptyset$) with a latency of at most $D$ rounds
(i.e., at the end of round $a+D$).
Informally speaking, together with \cref{lem:Cpr2root}, it
asserts that if there is an $I$-vertex-stable root component $R^I$ for
a sufficiently long interval $I$, then a process $p$ observes $C_p|a\neq 
\emptyset$ from the end of round $a+D$ on iff $p\in R$. 

\begin{lemma} \label{lem:root2Cpr}
Consider an interval of rounds $I=[a,b]$, such that there is a \goodD\ 
$I$-vertex-stable root component $R^I$ and assume $|I|=b-a+1 >D$. 
Then, from the end of round $a+D$ onwards, we have $C_p|a=R^I$,
     for every process in $p\in R^I$.
\end{lemma}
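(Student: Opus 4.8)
The plan is to squeeze $A_p|a$ between two containments and identify it exactly with $\G^a_R$, the subgraph of $\G^a$ induced by the vertex set $R$ of the VSRC $R^I$. First I would show that the \emph{under-approximation never leaves $R$}: by induction on $r$ with $a\le r\le b$, at the end of round $r$ we have $A_q|a\subseteq\G^a_R$ for every $q\in R$. The base $r<a$ is the trivial invariant (used already for \cref{lem:AsubsetG}) that $A_q|a=\li{\{q\},\emptyset}$. At $r=a$, process $q$ records only edges $(v\edge{}q)$ with $v\in\N_q^a$; since $a\in I$ and $R^a$ is a root component with vertex set $R$, all these $v$ lie in $R$, and the neighbour estimates received in round $a$ were emitted before any label-$a$ edge existed and hence contribute nothing to $A_q|a$. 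For $r\to r+1$ with $r+1\le b$: vertex-stability forces $\N_q^{r+1}\subseteq R$, so $q$ fuses only estimates $A_{q'}$ with $q'\in R$, which satisfy $A_{q'}|a\subseteq\G^a_R$ by the induction hypothesis, while the edges added to $A_q$ in round $r+1$ carry label $r+1\neq a$ and so do not affect $A_q|a$. Thus $A_q|a\subseteq\G^a_R$, and in particular $V(A_q|a)\subseteq R$.

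Second, I would show \emph{completeness by round $a+D$}: for every $p\in R$, every edge of $\G^a_R$ is in $A_p|a$ by the end of round $a+D$. Fix $(v\edge{}w)\in\G^a_R$. Process $w$ records $(v\edge{\{a\}}w)$ at the end of round $a$, so this edge is in $w$'s estimate at the beginning of round $a+1$. Since $|I|>D$ gives $D\le b-a$, i.e.\ $a+1\le b-D+1$, the $D$-bounded property of $R^I$ (\cref{def:D-bounded-VSRC}) yields $\cdiam^{a+1}(R^I)\le D$, hence $cd^{a+1}(w,p)\le D$. Along any causal influence chain of length $k$ starting from $w$ in round $a+1$, the fusion step of \cref{alg:approx} carries every recorded label-$a$ edge forward: whenever a link of the chain is an edge $p_i\edge{}p_{i+1}$ in round $a+i$, process $p_{i+1}$ receives and merges $A_{p_i}$, and estimates are monotone (edges are never removed). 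Hence $p$ has $(v\edge{}w)$ in $A_p|a$ by the end of round $(a+1)+k-1\le a+D$ and keeps it thereafter; as $(v\edge{}w)$ was arbitrary, $\G^a_R\subseteq A_p|a$ from the end of round $a+D$ on.

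Combining the two steps with monotonicity gives $A_p|a=\G^a_R$ for every $p\in R$ and every round in $[a+D,b]$, i.e.\ throughout the remainder of $I$. Since $R^a$ is a root component, and therefore an SCC of $\G^a$ (\cref{def:RC}), the induced graph $\G^a_R$ is strongly connected, so in \cref{line:Cps} the strong-connectivity test succeeds and $C_p|a=A_p|a$ with $V(C_p|a)=R$; that is, $C_p|a=R^I$.

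The step I expect to be the main obstacle is the completeness argument: one has to turn the slogan ``the network-approximation algorithm floods along causal chains'' into the precise statement that the set of label-$a$ edges known to $w$ at the start of round $a+1$ is contained in the set of label-$a$ edges known to $p$ by round $a+\cdiam^{a+1}(R^I)$, and to keep the round bookkeeping straight — in particular why the chain must be started at round $a+1$ rather than $a$ (so that $D$-boundedness still applies with horizon $b$, which is exactly where the hypothesis $a+1\le b-D+1$, i.e.\ $|I|>D$, is used) and why this lands the latency at $a+D$ and not $a+D\pm1$.
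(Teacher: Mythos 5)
Your proof is correct and follows essentially the same route as the paper's: each member of $R$ records its round-$a$ in-edges at the end of round $a$, and $D$-boundedness applied from round $a+1$ (using $a+1\le b-D+1$, i.e.\ $|I|>D$) floods these labels to every $p\in R$ by the end of round $a+D$, after which strong connectivity of the graph induced on $R$ yields $C_p|a=R^I$. The only difference is that you make explicit, via the induction showing $A_p|a\subseteq\G^a_R$ throughout $I$, why no round-$a$ edges outside $R$ can enter the approximation — a point the paper leaves implicit in its appeal to $R^I$ having no in-edges from outside — so your write-up is, if anything, more detailed than the original.
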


\begin{proof}
Consider any $q\in R^I$. At the beginning of round $a+1$, $q$ has
     an edge $(q'\edge{T}q)$ in its approximation graph $A_q$ with $a\in T$ iff
     $q'\in\N_q^a$. 
Since processes always merge all graph information from other
     processes into their own graph approximation, it follows from the definition of
     a \goodD\ $I$-vertex-stable root component (\cref{def:D-bounded-VSRC}) 
in conjunction with the fact that
     $a+1\leq b-D+1$ that every $p\in R^I$ has these
     in-edges of $q$ in its graph approximation by the end of round $a+1+D-1$. Since $R^I$
is a vertex-stable root-component, it is strongly connected without in-edges from
processes outside $R^I$. Hence $C_p|a=R^I$ from the end of round $a+D$ on, as asserted.
\end{proof}

This immediately gives us the following \cref{cor:allrootdec}, which ensures
that in a sufficiently long $I$-VSRC $R^I$, with $I=[a,b]$ and member set $R$, 
every $p\in R$ detects its membership in the $J$-VSRC
$R^J$, $J=[a,b-D] \subseteq I$, with a latency of at most $D$ rounds.

\begin{corollary}\label{cor:allrootdec}
Consider an interval of rounds $I=[a,b]$, with $|I|=b-a+1 > D$,
such that there is a \goodD\ vertex-stable root component $R^I$.
Then, from the end of round $b$ on, a call to $\stableSCC([a,b-D])$
returns $R$ at every process in $R$.
\end{corollary}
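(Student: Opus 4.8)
The plan is to derive this corollary directly from \cref{lem:root2Cpr}, applied not only to the full interval $I=[a,b]$ but separately to each suffix $[t,b]$ with $t\in[a,b-D]$. First I would check that, for every such $t$, the restriction of $R^I$ to $[t,b]$ is again a \goodD\ vertex-stable root component: vertex-stability is inherited because every $\G^x$ with $x\in[t,b]\subseteq I$ still has a root component on the same node set $R$; the bound on the dynamic causal diameter is inherited because $\cdiam^x(\cdot)$ depends only on the graphs $\G^x,\G^{x+1},\dots$ (hence is the same whether the sequence is considered to start at $a$ or at $t\le x$), and the range $[t,b-D+1]$ on which $\cdiam^x\le D$ is required by \cref{def:D-bounded-VSRC} is contained in $[a,b-D+1]$, where it holds by assumption. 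Moreover $|[t,b]|=b-t+1\ge D+1>D$ since $t\le b-D$, so the length hypothesis of \cref{lem:root2Cpr} is satisfied.

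Next I would invoke \cref{lem:root2Cpr} on the interval $[t,b]$: it gives that from the end of round $t+D$ onwards, $V(C_p|t)=R$ for every $p\in R$. Since $t\le b-D$ implies $t+D\le b$, this already holds from the end of round $b$ on. Letting $t$ range over all of $[a,b-D]$, I obtain that at the end of round $b$ (and in every later round), every $p\in R$ has $V(C_p|t)=R\neq\emptyset$ simultaneously for all $t\in[a,b-D]$.

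Finally I would feed this into the definition of the externally callable function in \cref{alg:approx}: the guard $\forall t_1,t_2\in[a,b-D]\colon V(C_p|t_1)=V(C_p|t_2)\neq\emptyset$ is met, with the common value $R$, so the call $\stableSCC([a,b-D])$ returns $R$ at every $p\in R$, which is exactly the claim. I do not expect a genuine obstacle here; the only point needing a moment of care is the bookkeeping that each suffix $[t,b]$ still carries a \goodD\ VSRC and that the detection latency $t+D$ never exceeds $b$ — everything else is just unwinding \cref{lem:root2Cpr} together with the semantics of $\stableSCC$.
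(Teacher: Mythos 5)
Your proof is correct and matches the paper's (implicit) argument: the paper treats the corollary as an immediate consequence of \cref{lem:root2Cpr}, applied to each starting round $t\in[a,b-D]$ exactly as you do, with the same checks that each suffix $[t,b]$ still carries a \goodD\ VSRC of length $>D$ and that the latency $t+D$ does not exceed $b$, followed by the semantics of $\stableSCC$.
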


Together, \cref{cor:stable2root,cor:allrootdec} reveal that $\stableSCC(.)$
\emph{precisely} characterizes the caller's actual membership in the
$[a,b-D]$-VSRC $R^I$ in the communication graphs from the end of round $b$ on.

\subsection{Core consensus algorithm for \MAd{2D+2\nwbound+2}}
\label{sec:consensuscore}

\begin{algorithm}
  \footnotesize
\caption{Solving Consensus; code for process $p_i$}
\label{alg:consensus}
\setlinenosize{\footnotesize}
\setlinenofont{\tt}
\begin{algorithmic}[1]
\setcounter{ALC@unique}{0}
\STATE {Simultaneously run \cref{alg:approx}.}
\EMPTY
\item[] {\bf Variables and Initialization:}

\STATE $x_{p_i} \in \mathbb{N}$, initially own input value
\STATE $locked_{p_i}, decided_{p_i} \in \set{\texttt{false},\texttt{true}}$ initially
  $\texttt{false}$ 
\STATE $lockRound_{p_i} \in \mathbb{Z}$ initially $0$  
\EMPTY

\item[] {\bf\boldmath Emit round $r$ messages:}
\IF{$decided_{p_i}$}
\STATE send $\msg{\textsc{decide},x_{p_i}}$ to all neighbors         \label{line:sendDec}
\ELSE
  \STATE send $\msg{lockRound_{p_i},x_{p_i}}$ to all neighbors
\ENDIF
\EMPTY

\item[] {\bf\boldmath Round $r$ computation:} %
\IF{\textbf{not} $decided_{p_i}$} \label{line:secondIf}
  \IF{received $\msg{\textsc{decide},x_{q}}$ from any neighbor $q$}  \label{line:firstIf}
    \STATE $x_{p_i} \la x_{q}$
    \STATE decide on $x_{p_i}$ and set $decided_{p_i} \la \texttt{true}$\label{line:decOther}
  \EMPTY

  \ELSE[{$p_i$ only received $\msg{lock_{q},x_{q}}$ messages (if any)}:]
  \STATE $(lockRound_{p_i},x_{p_i}) \la \max\set{ (lock_{q},x_{q}) \mid q \in
    \Timely_{p_i}^r \cup \{p_i\}}$ \COMMENT{lexical order in $\max$} \label{line:updateXp}
  \IF{$\stableSCC([r-D-1,r-D]) \neq \emptyset$\label{line:if-G}}
    \IF{$(\text{\textbf{not} $locked_{p_i}$})$}
      \STATE $locked_{p_i}\la \texttt{true}$
      \STATE $lockRound_{p_i} \la r$ \label{line:lock}
    \ELSE 
      \IF{$\stableSCC([lockRound_{p_i},lockRound_{p_i}+\nwbound]) \neq \emptyset$\label{line:if-lock-heard}}
        \STATE decide on $x_{p_i}$ and set $decided_{p_i} \la \texttt{true}$
      \label{line:decideOwn} \label{line:decOwn}
      \ENDIF
    \ENDIF
  \ELSE[{$\stableSCC([r-D-1,r-D])$ returned $\emptyset$}]
    \STATE $locked_{p_i}\la \texttt{false}$
  \ENDIF
  \ENDIF
\ENDIF
\end{algorithmic}%
\end{algorithm}

As explained in \cref{sec:consensus}, the core consensus algorithm 
stated in \cref{alg:consensus}
builds upon the network approximation algorithm  given as \cref{alg:approx}: 
Relying on \cref{cor:stable2root},
every process uses $\stableSCC$ provided by \cref{alg:approx} to 
detect whether it has been in the vertex-stable root component of some past round(s).
Since \cref{cor:allrootdec} reveals that
$\stableSCC$ has a latency of up to $D \leq \nwbound$ rounds for reliably
detecting that a process is in the vertex-stable root component of some
(interval of) rounds, our algorithm (conservatively)
looks back $D$ rounds in the past when locking a value.

In more detail, \cref{alg:consensus} proceeds as follows: 
Initially, no process has locked a value, that is, $locked_p=\false$ and
     $lockRound_p=0$. 
Processes try to detect whether they are privileged by evaluating the
     condition in \cref{line:if-G}. 
When this condition is true in some round $\ell$, they lock the
     current value (by setting $locked_p=\true$ and $lockRound$ to the
     current round), unless $locked_p$ is already $\true$. 
Note that our locking mechanism does not actually protect the value
     against being overwritten by a larger value being also locked in $\ell$;
it locks out only those values that have older locks $l<\ell$.

When the process $m$ that had the largest value in the root component
     of round $\ell$ detects that it has been in a vertex-stable root component
     in all rounds $\ell$ to $\ell+\nwbound$ (\cref{line:if-lock-heard}), it
     can decide on its current value. 
As all other processes in that root component must have had $m$'s value imposed
     on them, they can decide as well.
     After deciding, a process stops participating in the
     flooding of locked values, but rather (\cref{line:sendDec})
     floods the network with $\msg{\textsc{decide},x}$. 
Since the stability window guaranteed by
     \cref{ass:window} with $d=2D+2\nwbound+2$ is large enough to allow every process
     to receive this message, all processes will eventually decide.

Before we turn our attention to the correctness proof of \cref{alg:consensus}, we need to
define how the network approximation algorithm and the core consensus
algorithm are combined to form a joint algorithm in our computation model.
Informally, we assume that (i) the complete round $r$ computing step
of the network approximation algorithm is executed just
before the round $r$ computing step of the consensus algorithm,
and that (ii) the round~$r$ message of the former is piggybacked
on the round~$r$ message of the latter. Consequently, the round $r$ computing
step of the consensus core algorithm, which terminates round~$r$,
can already access the \emph{result}
of the round $r$ computation of the network approximation algorithm,
i.e., its state at the \emph{end} of round $r$. Consequently,
\cref{cor:stable2root,cor:allrootdec} reveal that a call to 
$\stableSCC(I)$ with $I=[a,b-D]$ by $p$ in the transition function of 
round $b$ (or later) returns $\neq \emptyset$ \emph{precisely} when a VSRC $R^I$ containing
$p$ existed.

Formally, let $\S_p^N$, $\M^N$,
$T_p^N$, $M_p^N$ be the set of states, message alphabet, transition function, 
and message sending function of the network approximation algorithm,
with $S_p^{N,r} \in \S_p^N$, $m_p^{N,r} \in \M^N$ and $\mu_p^{N,r}$ denoting its state 
at the beginning of round $r$, the message sent in round $r$, and
the set of messages received in round $r$. 
Analogously, let $\S_p^C$, $\M^C$, $T_p^C$, $M_p^C$, $S_p^{C,r}$, $m_p^{C,r}$ and
$\mu_p^{C,r}$ be the corresponding entities for the core consensus
algorithm; note that $\S_p^N \cap \S_p^C = \{A_p\}$, albeit the core
consensus algorithm only reads (but never writes) the graph approximation $A_p$ (when
calling $\stableSCC$).

For the joint algorithm, we define the joint state space as
$\S_p^J = \S_p^N \cup \S_p^C$ and the joint message alphabet as
$\M^J = \M^N \times \M^C$. We assume that there are projection
functions $\pi^N: \S_p^J \to \S_p^N$ resp.\  $\pi^C: \S_p^J \to \S_p^C$
which, given $S_p^{J,r}$, can be used to obtain the
corresponding $S_p^{N,r}=\pi^N(S_p^{J,r})$ resp.\ $S_p^{C,r}=\pi^C(S_p^{J,r})$.
The joint message sending function $M_p^J: \S_p^J \to \M^J$
just computes the pair of messages $(m_p^{N,r},m_p^{C,r})$ via
$m_p^{N,r}=M_p^N(S_p^{N,r})$ and $m_p^{C,r}=M_p^C(S_p^{C,r})$.
The joint transition function 
$T_p^J: \S_p^J \times 2^{\Pi \times \M^J} \to \S_p^J$ first
applies $T_p^N$ to $S_p^{N,r}$ to compute (i) $S_p^{N,r+1}$
and (ii) an intermediate state $S_p^{C,r+}$ that is identical
to $S_p^{C,r}$ except that $A_p^r$ is replaced by the newly
computed $A_p^{r+1}$. $T_p^C$ is then applied to 
$S_p^{C,r+}$ to compute the state $S_p^{C,r+1}$, which finally
results in $S_p^{J,r+1}= S_p^{N,r+1} \cup S_p^{C,r+1}$. All this happens atomically and instantaneously at the round switching 
time.

\medskip

Our correctness proof starts with the validity property of consensus 
according to \cref{def:consensus}.

\begin{lemma}[Validity] \label{lem:validity}
  Every decision value is the input value of some process.
\end{lemma}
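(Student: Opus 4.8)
The plan is to show that the value a process decides on is always the input value of \emph{some} process, by tracking how the variable $x_{p}$ evolves throughout the run. First I would observe that initially $x_{p_i}$ is set to $p_i$'s own input value (line~2 of \cref{alg:consensus}), so initially every $x_{p_i}$ is trivially an input value of some process. The key claim is the invariant: at every round $r$ and for every process $p$, the current value of $x_p^r$ equals $x_q$ for some process $q$, i.e., it is an input value of some process. I would prove this by induction on the round number $r$.

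For the induction step, I would inspect every line of \cref{alg:consensus} that modifies $x_{p_i}$. There are exactly two such places in the round~$r$ computation. The first is line~\ref{line:decOther}, where upon receiving a $\msg{\textsc{decide},x_q}$ message, $p_i$ sets $x_{p_i} \la x_q$; since $x_q$ was (by the induction hypothesis applied to the sender's state, which by the message sending function in line~\ref{line:sendDec} carries $x_q$) an input value of some process, so is the new $x_{p_i}$. The second is line~\ref{line:updateXp}, where $x_{p_i}$ is set to the second component of $\max\set{(lock_q,x_q) \mid q\in\Timely_{p_i}^r \cup\{p_i\}}$; each $x_q$ in this set is either $p_i$'s own current value or a value received in a $\msg{lock_q,x_q}$ message from a neighbour, and in both cases the induction hypothesis guarantees it is some process's input value, hence so is the selected maximum. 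No other line writes $x_{p_i}$ (the locking machinery only touches $locked_{p_i}$ and $lockRound_{p_i}$). This establishes the invariant for round $r+1$.

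Finally, I would note that a process decides only in line~\ref{line:decOther} or line~\ref{line:decOwn}, and in both cases it decides on its current value $x_{p_i}$, which by the invariant just proved is the input value of some process. This gives validity. The argument is essentially a routine structural induction over the algorithm's assignments; I do not expect any real obstacle, the only mild subtlety being the bookkeeping that a received $\msg{lock_q,x_q}$ or $\msg{\textsc{decide},x_q}$ message faithfully carries a value that satisfied the invariant in the sender's state at the start of round $r$ — which follows directly from the message sending functions in lines~\ref{line:sendDec}--9 and the induction hypothesis.
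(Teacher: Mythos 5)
Your proof is correct and is in substance the paper's own argument: the paper establishes validity by tracing a decision value backwards through the only two assignments to $x_{p}$ (the adoption of a received $\msg{\textsc{decide},x_q}$ value and the maximum taken in \cref{line:updateXp}) until an initial value is reached, which is exactly the backward-chaining formulation of your forward inductive invariant that every $x_p^r$ is some process's input value. No gap.
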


\begin{proof}
Processes decide either in \cref{line:decOther} or in
     \cref{line:decOwn}. 
When a process decides via the former case, it has received a
     $\msg{\textsc{decide},x_q}$ message, which is sent by $q$ iff $q$
     has decided on $x_q$ in an earlier round. 
In order to prove validity, it is thus sufficient to show that
     processes can only decide on some process' input value when they
     decide in \cref{line:decOwn}, where they decide on their
     current estimate $x_p$. 
Let the round of this decision be $r$. 
The estimate $x_p$ is either $p$'s initial value, or was updated in some round
     $r'\le r$ in \cref{line:updateXp} from a value received by
     way of one of its neighbors' $\msg{lockRound,x}$ message. 
In order to send such a message, $q$ must have had $x_q=x$ at the
     beginning of round $r'$, which in turn means that $x_q$ was
     either $q$'s initial value, or $q$ has updated $x_q$ after
     receiving a message in some round $r_q<r$.  
By repeating this argument, we will eventually reach a process
that sent its initial value, since no process can have updated its
     decision estimate prior to the first round. \qed
\end{proof}

The following \cref{lem:assorted-properties} states a number of 
properties maintained by our algorithm when the first process $p$ has 
decided.
Essentially, they say that there has been a vertex-stable root component 
in the interval $I=[\ell-D-1,\ell+\nwbound]$ centered around the lock 
round $\ell$ (but not earlier),
and asserts that all processes in that root component chose the same lock round $\ell$.

\begin{lemma}\label{lem:assorted-properties}
Suppose that process $p$ decides in round $r$, no decisions
     occurred before $r$, and $\ell=lockRound_p^r$,
then 
     \begin{itemize}
     \item[(i)] $p$ is
     in the vertex-stable root component $R^I$ with
     $I=[\ell-D-1,\ell+\nwbound]$ and member set $R$, 
     \item[(ii)] $\ell+\nwbound\le r\le\ell+\nwbound+D$,  
     \item[(iii)] $R\ne R'$, where $R'$ is the members set of the VSRC $R^{\ell-\nwbound-2}$, and  
     \item[(iv)] all processes in $R$
     executed \cref{line:lock} in round $\ell$, and no process
in $\Pi\setminus{R}$ can have executed \cref{line:lock} in
a round $\geq \ell$.
     \end{itemize}
\end{lemma}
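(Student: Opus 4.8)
The plan is to read off everything the lemma asserts from the conditions under which $p$ can take the decision step in round $r$, and then argue about the locking dynamics inside the root component. Since no process has decided before $r$, no $\msg{\textsc{decide},\cdot}$ message was sent in any round $\le r$, so $p$ cannot decide in \cref{line:decOther}; hence $p$ decides in \cref{line:decOwn}. This means that in round $r$ the guard of \cref{line:if-G} held, $locked_p$ was already $\true$ at the start of round $r$, and the guard of \cref{line:if-lock-heard} held with $\ell=lockRound_p$. As $locked_p$ is cleared only in the $\stableSCC=\emptyset$ branch, $p$ executed \cref{line:lock} in some round $\le r$ and was never reset afterwards, and after that $lockRound_p$ changes only through the $\max$ in \cref{line:updateXp}. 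One then argues that $p$'s last execution of \cref{line:lock} was in round $\ell$ itself: a strictly larger value adopted via \cref{line:updateXp} would have to propagate along a chain of $\msg{lock_q,x_q}$ messages back to some process executing \cref{line:lock} in a round $>\ell$, which — as (iv) shows — lies in the root component we are about to identify and therefore cannot have had $locked=\false$ in that round. So (i) and (iv) are really one statement, and I would prove them together by a single induction on the round number; the rest of this plan explains how.

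For the extent of the root (parts (i) and (iv)), I would feed the two guards into the $\stableSCC$ characterisation. From the guard of \cref{line:if-lock-heard} holding for $p$ in round $r$ with parameter $\ell$, \cref{cor:stable2root} gives $p\in R^x$ and a common member set $R=V(C_p|x)$ for every $x\in[\ell,\ell+\nwbound]$, covering the upper half of $I$. From the guard of \cref{line:if-G} holding for $p$ in the \emph{locking} round $\ell$, i.e.\ $\stableSCC([\ell-D-1,\ell-D])\neq\emptyset$, I would invoke the tight converse of \cref{lem:root2Cpr} (through \cref{lem:Cpr2root} and \cref{lem:AsubsetG}): $p$ can assemble a strongly connected estimate of round $\ell-D$ within only $D$ rounds only if the root has been vertex-stable with the same member set throughout $[\ell-D,\ell]$, since otherwise some in-edge of a member of $R^{\ell-D}$ is never reported back to $p$ in time. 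This yields $R^x=R$ for $x\in[\ell-D-1,\ell]$ as well; the two ranges overlap in round $\ell$, so $R^I$ with $I=[\ell-D-1,\ell+\nwbound]$ is a vertex-stable root component and $p\in R$, which is (i). For (iv), \cref{cor:allrootdec} applied to $R^I$ shows that every member of $R$ has $\stableSCC([\ell-D-1,\ell-D])\neq\emptyset$ by the end of round $\ell$, so every member still carrying $locked=\false$ then executes \cref{line:lock} in round $\ell$; one rules out a member with $locked=\true$ at the start of round $\ell$, since it would carry an older $lockRound<\ell$ whose decide-guard \cref{line:if-lock-heard} cannot yet have fired (by (iii)) while $locked$ is not cleared without a $\stableSCC=\emptyset$. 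Finally a process outside $R$ is, by \cref{ass:window}(i) and \cref{cor:stable2root}, never in $R^x$ for $x\in I$, so it cannot satisfy the guard of \cref{line:if-G} in any round $s$ with $s-D\in I$; by (ii) this covers all rounds in $[\ell,\ell+\nwbound+D]\supseteq[\ell,r]$, and later rounds are past $p$'s decision.

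For (ii), the lower bound $\ell+\nwbound\le r$ is immediate: the guard of \cref{line:if-lock-heard} in round $r$ needs $\stableSCC([\ell,\ell+\nwbound])\neq\emptyset$, so $A_p$ must already contain an estimate of round $\ell+\nwbound$ and that round has occurred. For the upper bound $r\le\ell+\nwbound+D$: once $R^I$ exists, the stability window guaranteed by \cref{ass:window} keeps the root equal to $R$ at least until round $\ell+\nwbound+D$, so by \cref{cor:allrootdec} $p$ sees $\stableSCC([\ell,\ell+\nwbound])\neq\emptyset$ from the end of round $\ell+\nwbound+D$ on; since $p$ still has $locked_p$ and $lockRound_p=\ell$ then (all its in-neighbours are root members that locked in round $\ell$, and the root has no in-edges from outside), it would decide by round $\ell+\nwbound+D$, forcing $r\le\ell+\nwbound+D$. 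For (iii), I would argue by contradiction: because $p$ freshly executes \cref{line:lock} in round $\ell$, it had $locked_p=\false$ at the start of round $\ell$, and an inspection of the branch that last wrote $locked_p$ shows that $\stableSCC$ of the interval just preceding round $\ell-D$ returned $\emptyset$ for $p$; if $R^{\ell-\nwbound-2}=R$, then together with (i) the root would have had member set $R$ over a stretch reaching from round $\ell-\nwbound-2$ into $I$, and \cref{cor:allrootdec} would then have made that earlier $\stableSCC$ query return $R\neq\emptyset$ — a contradiction.

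The main obstacle is the tightly interlocked proof of (i) and (iv): establishing that inside $R^I$ \emph{every} process executes \cref{line:lock} exactly in round $\ell$ and never afterwards adopts a larger lock round, which must be carried out as a joint induction on the round number using the tight $\stableSCC$ characterisation; doing this while simultaneously getting the off-by-$D$ offsets right — so that the window comes out as $[\ell-D-1,\ell+\nwbound]$, so that (iii) really pins down round $\ell-\nwbound-2$, and so that the timing is exactly $\ell+\nwbound\le r\le\ell+\nwbound+D$ — is where the real care is needed.
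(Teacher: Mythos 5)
Your overall strategy (reading everything off the two guards and the $\stableSCC$ characterisation, as the paper does) is the right one, but several of your individual steps would fail. First, for the lower half of (i) you rely on a purported ``tight converse of \cref{lem:root2Cpr}'': that $\stableSCC([\ell-D-1,\ell-D])\neq\emptyset$ at $p$ in round $\ell$ forces the root to have been vertex-stable with member set $R$ throughout $[\ell-D,\ell]$. No such converse is proved in the paper and it is false in general: $p$ can learn the round-$(\ell-D)$ in-edges of the members of $R^{\ell-D}$ along causal chains that exist even if the root changes arbitrarily in rounds $\ell-D+1,\dots,\ell$; \cref{lem:Cpr2root} only yields $p$'s membership in the rounds actually queried. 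The paper instead gets the whole window from the fact that, since $locked_p$ is never cleared between $\ell$ and $r$, \cref{line:if-G} holds in \emph{every} round $s\in[\ell,r]$, so the queried round pairs $\{s-D-1,s-D\}$ overlap and, together with the guard of \cref{line:if-lock-heard}, cover all of $[\ell-D-1,\ell+\nwbound]$ (you mention this continuity but never use it for (i)). Second, your upper bound in (ii) asserts that ``the stability window guaranteed by \cref{ass:window} keeps the root equal to $R$ at least until round $\ell+\nwbound+D$''. That is not guaranteed: \cref{ass:window} promises one eventual window $J$ somewhere in the run, and the VSRC of (i) may be an earlier ``spurious'' one ending exactly at round $\ell+\nwbound$ (the paper explicitly allows for this, cf.\ the proof of \cref{lem:highlander}). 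The paper's argument is different: by \cref{lem:root2Cpr}, and because \cref{line:if-G} keeps holding while $p$ is still locked and undecided, the requirement of \cref{line:if-lock-heard} cannot be fulfilled for the \emph{first} time strictly after round $\ell+\nwbound+D$.

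Third, your proof of the first half of (iv) does not rule out a member $q\in R$ that enters round $\ell$ with $locked_q=\true$ and an older lock round; such a $q$ simply would \emph{not} execute \cref{line:lock} in round $\ell$, and the facts you list (its decide guard has not yet fired, its lock was never cleared) describe exactly this scenario rather than contradicting it. This is precisely where the paper uses (iii): because the root of round $\ell-D-2$ differs from $R$, the query of \cref{line:if-G} in round $\ell-1$ (which spans rounds $\ell-D-2$ and $\ell-D-1$) fails at every process, so all of $R$ enters round $\ell$ unlocked and then locks in round $\ell$ by \cref{cor:allrootdec}. Relatedly, your derivation of (iii) from $R^{\ell-\nwbound-2}=R$ does not go through: together with (i) this only gives stability on $\{\ell-\nwbound-2\}\cup[\ell-D-1,\ell+\nwbound]$, which is not a contiguous interval when $\nwbound>D$, so \cref{cor:allrootdec} cannot be applied to the earlier query; the round that actually matters (and that the paper's argument is about) is $\ell-D-2$, the one inspected by the guard in round $\ell-1$.
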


\begin{proof}
Item (i) follows since \cref{line:if-G} has been continuously
      $\true$ since round $\ell$ and from \cref{lem:Cpr2root}.
As for item (ii), $\ell+\nwbound\le r$ follows from the requirement of
      \cref{line:if-lock-heard}, while $r\le\ell+\nwbound+D$ follows from
      (i) and the fact that by \cref{lem:root2Cpr} the
      requirement of \cref{line:if-lock-heard} cannot be, for the
      first time, fullfilled strictly after round $\ell+\nwbound+D$. 
From \cref{lem:root2Cpr}, it also follows that if
      $R=R'$, then the condition in \cref{line:if-G}
      would return true already in round $\ell-1$, thus locking would occur already in round
      $\ell-1$. 
Since $p$ did not lock in round $\ell-1$, (iii) must hold. Finally, 
from (i), (iii), and \cref{lem:root2Cpr}, it follows that
      every other process in $R$ also has
      $\stableSCC([\ell-D-1,\ell-D])=\true$ in round $\ell$. 
Moreover, due to (iii), $\stableSCC([\ell-1-D-1,\ell-1-D])=\false$ 
in round $\ell-1$, which causes
      all the processes in $R$ (as well as those in $\Pi\setminus{R}$)
to set $lockRound$ to 0. 
Since $\stableSCC([\ell'-D-1,\ell'-D])$ cannot become true for any 
$\ell'\geq \ell$ at a process $q\in \Pi\setminus R$, as $C_q|r=\emptyset$
for any $r\in I$ by \cref{cor:stable2root}, (iv) also holds. \qed
\end{proof}

The following \cref{lem:highlander} asserts that if a process
decides, then it has successfully imposed its proposal value on all other 
processes.

\begin{lemma}[Agreement] \label{lem:highlander}
Suppose that process $p$ decides in \cref{line:decideOwn} in round $r$ and
that no other process has executed \cref{line:decideOwn} before $r$.
Then, for all $q$, it holds that $x_q^r = x_p^r$.
\end{lemma}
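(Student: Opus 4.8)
The plan is to identify the decision value $v := x_p^r$ and show it is shared by, or forced upon, \emph{every} process by round $r$; the workhorse is \cref{lem:assorted-properties}. Write $\ell := lockRound_p^r$. That lemma gives: $p$ lies in a vertex-stable root component $R^I$ with member set $R$ and $I \supseteq [\ell-D-1,\ell+\nwbound]$; \emph{every} process of $R$ executed \cref{line:lock} in round $\ell$, while \emph{no} process outside $R$ did so in any round $\geq\ell$, so $\ell$ is the unique largest value of $lockRound$ present anywhere in the system from round $\ell$ on; and $\ell+\nwbound\leq r\leq\ell+\nwbound+D$. Since $p$ locked (after the $D$-round detection latency), then had \cref{line:if-G} continuously \true, and finally passed the $\stableSCC([\ell,\ell+\nwbound])\neq\emptyset$ test of \cref{line:if-lock-heard}, the stability window of length $d = 2D+2\nwbound+2$ guaranteed by \cref{ass:window}(iii) places around round $\ell$ an interval $I'$, with $R$ vertex-stable throughout, that is long enough for the two phases below.

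Step 1 (convergence inside $R$). As $R$ is the root component in every round of $I'$ and, by hypothesis, no process decided before $r$, every $q\in R$ receives only $\msg{lockRound,x}$-messages in rounds $[\ell,r]$, all from members of $R$, and executes \cref{line:updateXp} each round; \cref{cor:allrootdec} applied to sub-intervals of $I'$ keeps \cref{line:if-G} \true\ at every $q\in R$, so none of them unlocks, and by \cref{lem:assorted-properties}(iv) all of them carry $lockRound=\ell$ from the end of round $\ell$ on. Hence from round $\ell+1$ on the lexicographic $\max$ in \cref{line:updateXp}, restricted to $R$, degenerates to plain max-flooding of the $x$-values; since $R^{I'}$ is \goodD, $\cdiam^{\ell+1}(R^{I'})\leq D$, and a routine induction on the round number shows that all members of $R$ share one value $v^\star$ from round $\ell+D$ on, and that this value never changes thereafter (it is the $R$-maximum and $R$ receives only from $R$). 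As $p\in R$ and $\ell+D\leq\ell+\nwbound\leq r$, this yields $v=x_p^r=v^\star$ and $x_q^r=v^\star$ for every $q\in R$.

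Step 2 (imposition on $\Pi\setminus R$). For $q\notin R$ I would combine: (a) once it holds, the pair $(\ell,v^\star)$ is lexicographically maximal system-wide and hence cannot be overwritten by \cref{line:updateXp}; (b) $\nwbound$-network-boundedness of $R^{I'}$: starting from a round of $I'$ at which all of $R$ already carries $(\ell,v^\star)$, some member of $R$ causally reaches $q$ within $\nwbound$ rounds, so $q$ adopts $v^\star$; and (c) once $q$ holds $v^\star$ it keeps it, and should $q$ instead first receive a $\msg{\textsc{decide},\cdot}$-message, that message can only encode $p$'s decision value $v^\star$, since no process decided before $r$. The choice $d = 2D+2\nwbound+2$ in \cref{ass:window} is precisely what guarantees that Step 1 ($\leq D$ rounds), followed by the $\leq\nwbound$-round dissemination of Step 2 — together with the $\leq D$-round $\stableSCC$-latency that pushed $p$'s decision out to round $r$ — all fit before round $r$, so no process still holds a stale value at round $r$. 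Combining the two steps, $x_q^r=v^\star=x_p^r$ for all $q$.

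The main obstacle is the timing/window argument used in Step 2: one must check carefully that the interval over which $R$ stays the unique root component is long enough to accommodate intra-$R$ convergence \emph{and} the subsequent network-wide dissemination of the dominating pair $(\ell,v^\star)$, finishing no later than round $r$, and simultaneously rule out — via \cref{lem:assorted-properties}(iv) and the ``no decision before $r$'' hypothesis — any competing lock round $>\ell$ or premature $\msg{\textsc{decide}}$-message that could disturb $(\ell,v^\star)$ along the way.
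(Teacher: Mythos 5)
Your use of \cref{lem:assorted-properties} and the lexicographic-max propagation idea match the paper, but the timing foundation of your argument has a genuine gap. You place the adversary's guaranteed stability window of length $2D+2\nwbound+2$ ``around round $\ell$''. That is not available: the first decision need not be triggered by the eventually guaranteed VSRC of \cref{ass:window}(iii) — it can be caused by an earlier ``spurious'' VSRC, and all that \cref{lem:assorted-properties}(i) gives you is vertex-stability on $I=[\ell-D-1,\ell+\nwbound]$, i.e., only $\nwbound$ rounds beyond $\ell$ (the paper's proof explicitly flags this, and it is precisely why item~(ii) of \cref{ass:window}, requiring \emph{every} VSRC to be \goodDE, is needed). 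Your two-phase schedule does not fit into this budget: intra-$R$ convergence takes up to $D$ rounds after $\ell$, and only then do you start the $\nwbound$-round network-wide dissemination, so all of $\Pi$ holds $v^\star$ only by about round $\ell+D+\nwbound$; but by \cref{lem:assorted-properties}(ii) $r$ can be as small as $\ell+\nwbound$, and the root component is only guaranteed stable up to round $\ell+\nwbound$, so neither the claimed equality $x_q^r=x_p^r$ at round $r$ nor the invariants of your Step~1 survive to the point where you need them. Moreover, \cref{def:E-network-vertex-stable-roots} bounds $\cheight^x$ only for starting rounds $x\le b-\nwbound+1=\ell+1$ of the interval $[\ell-D-1,\ell+\nwbound]$, so you cannot invoke $\nwbound$-network-boundedness from a round $\geq\ell+D$ ``at which all of $R$ already carries $(\ell,v^\star)$''. (A smaller issue of the same kind: Step~1 appeals to $\cdiam^{\ell+1}\le D$, i.e.\ \goodD ness, whereas \cref{ass:window}(ii) only grants \goodDE.)

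The paper avoids your first phase entirely: it picks the member $m\in R$ holding the largest proposal $x_{max}$ at round $\ell$, uses the \goodDE\ property of the (possibly spurious) VSRC on $[\ell-D-1,\ell+\nwbound]$ to obtain a causal chain of length at most $\nwbound$ from the root component to every $q\in\Pi$ starting right after round $\ell$, and argues that, since no \textsc{decide} messages circulate in $[\ell,\ell+\nwbound]$ and the pair $(\ell,x_{max})$ is lexicographically maximal system-wide by \cref{lem:assorted-properties}(iv), the update rule of \cref{line:updateXp} carries $x_{max}$ along that chain, so every process holds $x_{max}$ by round $\ell+\nwbound\le r$. To repair your proof you would have to complete the dissemination within $[\ell,\ell+\nwbound]$ using only the stability actually guaranteed there, which your two-phase decomposition does not achieve.
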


\begin{proof}
Using items (i) and (iv) in \cref{lem:assorted-properties}, we can
     conclude that $p$ was in the vertex-stable root component of
     rounds $\ell=lockRound_p^r$ to $\ell+\nwbound$ and that all processes  in
     it member set $R$ have locked in round $\ell$. 
Therefore, in the interval $[\ell,\ell+\nwbound]$, $\ell$ is the maximal
     value of $lockRound$. 
More specifically, all processes $q$ in $R$ have $lockRound_q=\ell$,
     whereas all processes $s$ in $\Pi\setminus R$ have
     $lockRound_s<\ell$ during these rounds by \cref{lem:assorted-properties}.(iv).
Let $m\in R$ have the largest proposal value
     $x_m^\ell=x_{max}$ among all processes in $R$. 
Since $m$ is in $R$, there is a causal chain of length at most $\nwbound$
     from $m$ to any $q\in\Pi$. Note carefully that guaranteeing 
this property requires item~(ii)
of \cref{ass:window}, as the first decision (in round $r$) need not occur in the 
eventually guaranteed $2D+2\nwbound+2$-VSRC but already in some earlier ``spurious'' 
VSRC.

Since no process executed \cref{line:decideOwn} before round $r$, no
     process will send $\textsc{decide}$ messages in $[\ell,\ell+\nwbound]$. 
Thus, all processes continue to execute the update rule of
     \cref{line:updateXp}, which implies that $x_{max}$ will
     propagate along the aforementioned causal path to $q$. \qed
\end{proof}

\begin{theorem}[Consensus under \MAd{2D+2\nwbound+2}] \label{thm:consensus}
  Let $r_{ST}$ be the beginning of the stability window guaranteed by the message adversary
\MAd{2D+2\nwbound+2} given in \cref{ass:window}. Then, \cref{alg:consensus} in conjunction with 
\cref{alg:approx} solves consensus by the end of round $r_{ST}+2D+2\nwbound+1$.
\end{theorem}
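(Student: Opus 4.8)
The plan is to establish the three properties of \cref{def:consensus} together with the stated $r_{ST}+2D+2\nwbound+1$ round bound on termination. \emph{Validity} needs no new argument — it is exactly \cref{lem:validity}. So the work is \emph{Agreement} (all decisions equal one common value $w$) and timely \emph{Termination}, and I would use throughout the joint semantics of \cref{alg:approx} and \cref{alg:consensus} fixed just above, by which a call $\stableSCC([a,b-D])$ issued in round $b$ or later sees the end-of-round network estimate and thus, by \cref{cor:stable2root,cor:allrootdec}, detects VSRC membership exactly.

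For \emph{Agreement}, I would first observe that the chronologically first decision of any run is necessarily taken in \cref{line:decOwn}: a decision in \cref{line:decOther} requires an incoming $\msg{\textsc{decide},\cdot}$ message, and such messages (\cref{line:sendDec}) are emitted only by an already-decided process in a strictly later round. Let $p^{*}$ take this first decision, in round $r^{*}$, and set $w:=x_{p^{*}}^{r^{*}}$. \cref{lem:highlander} — whose hypothesis (no earlier \cref{line:decOwn}) holds by the choice of $r^{*}$ — then yields $x_q^{r^{*}}=w$ for all $q$, and re-applying it to any other process that also decides in \cref{line:decOwn} in round $r^{*}$ shows it decides $w$ as well. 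I would then check that ``$x_q=w$ for all $q$'' is an invariant from round $r^{*}$ onwards: a decided process never changes $x$; an undecided process either copies the payload of a $\msg{\textsc{decide},\cdot}$ message — which is $w$, since inductively every decided process holds $w$ — or evaluates the lexicographic maximum of \cref{line:updateXp} over estimates that are all $w$. Hence every decision in the run is on $w$, which by \cref{lem:validity} is an input value; Agreement follows.

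For \emph{Termination}, let $\rho:=r_{ST}+2D+2\nwbound+1$, take the \goodDE\ $J$-VSRC $R^J$ with vertex set $R$ guaranteed by item~(iii) of \cref{ass:window} on $J=[r_{ST},\rho]$, and let $r_0\le r_{ST}$ be the smallest round with $R=R^{t}$ for all $t\in[r_0,\rho]$. I would argue in three steps. (a) For every $p\in R$, \cref{cor:allrootdec} makes the test $\stableSCC([r-D-1,r-D])$ of \cref{line:if-G} return $R$ at every round $r\in[r_0+D+1,\rho]$, while at round $r_0+D$ it returns $\emptyset$ (since $R^{r_0-1}\ne R$ by minimality of $r_0$, and \cref{lem:AsubsetG,lem:Cpr2root} preclude $C_p|(r_0-1)=R$); as the in-neighbours of any $p\in R$ all lie in $R$, all members of $R$ thus simultaneously clear their lock at round $r_0+D$ and re-lock in round $r_0+D+1$ with the common value $lockRound=r_0+D+1$, which \cref{line:updateXp} preserves through $\rho$. (b) Applying \cref{cor:allrootdec} to $[r_0+D+1,\;r_0+2D+\nwbound+1]\subseteq[r_0,\rho]$ (length $\nwbound+D+1>D$ since $\nwbound\ge D$), the test $\stableSCC([lockRound,lockRound+\nwbound])$ of \cref{line:if-lock-heard} returns $R$ by round $r_0+2D+\nwbound+1\le r_{ST}+2D+\nwbound+1$, so every $p\in R$ has decided (in \cref{line:decOwn} or earlier) by then. (c) All of $R$ being decided, they broadcast $\msg{\textsc{decide},\cdot}$ from round $r_{ST}+2D+\nwbound+2$ on; since $R^J$ is \goodDE, $\cheight^{x}(R^J)\le\nwbound$ for $x$ up to $\rho-\nwbound+1$, so a causal chain of length at most $\nwbound$ from some member of $R$ relays the $\msg{\textsc{decide},\cdot}$ message to every $q\in\Pi$, which makes $q$ decide (\cref{line:decOther}) by the end of round $\rho$. (Should the run's first decision occur ``spuriously'' before $r_{ST}$, Agreement already makes it consistent and steps (a)--(c) still drive the remaining processes to decide within $J$.)

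The hard part is the bookkeeping of steps (a)--(b): against arbitrary earlier spurious VSRCs and the $D$-round detection latency of \cref{cor:allrootdec}, one must pin down a single fresh lock round $\ell=r_0+D+1$ common to all of $R$ that is late enough to sit strictly inside the reliably-detectable stable window yet early enough that $\ell+\nwbound+D\le\rho$ — this is exactly where the window length $2D+2\nwbound+2$ and the relation $D\le\nwbound$ are consumed — and one must verify for each invocation that the interval handed to $\stableSCC$ has length strictly greater than $D$ and lies wholly inside a stable stretch, so that \cref{cor:stable2root,cor:allrootdec} genuinely apply. The remaining ingredients — Validity, ``the first decision is a \cref{line:decOwn} decision'', and the $x_q=w$ invariant — are routine.
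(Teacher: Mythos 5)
Your proposal is correct and takes essentially the same route as the paper's proof: Validity via \cref{lem:validity}, Agreement via \cref{lem:highlander} applied to the (necessarily \cref{line:decOwn}) first decision, and Termination by having every member of $R$ lock around round $r_{ST}+D+1$, decide by $r_{ST}+2D+\nwbound+1$ using \cref{cor:allrootdec}, and then flood $\msg{\textsc{decide},\cdot}$ to all of $\Pi$ within $\nwbound$ further rounds via the \goodDE\ property — exactly the paper's arithmetic. Your $r_0$-based synchronized unlock/relock is merely a more careful rendering of the paper's terser ``thus locking in that round'' (covering processes that enter the stable window already locked), not a different approach.
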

\begin{proof}
  Validity holds by \cref{lem:validity}. Considering \cref{lem:highlander}, 
  we immediately get agreement: Since the first process $p$ that decides must
  do so via \cref{line:decOwn}, there are no other proposal values left
  in the system.

  Observe that, so far, we have not used the liveness part of 
  \cref{ass:window}.  In
  fact, \cref{alg:consensus} is always safe in the sense that agreement
  and validity are not violated, even if there is no vertex-stable root
  component. 

  We now show the termination property. By
  \cref{cor:allrootdec}, we know that every process in $p\in R$
  evaluates the predicate $\stableSCC([r_{ST},r_{ST}+1])=\true$ in
  round $\ell=r_{ST}+D+1$, thus locking in that round.
  Furthermore, \cref{ass:window} and \cref{cor:allrootdec}
  imply that at the latest in round $d=\ell+D+\nwbound$ every process $p\in R$ will evaluate the
  condition of \cref{line:if-lock-heard} to $\true$ and thus
  decide using \cref{line:decOwn}.
  Thus, every such process $p$ will send out a message
  $m=\msg{\textsc{decide},x_p}$. 
By \cref{def:E-network-vertex-stable-roots} and \cref{ass:window}, we know that
every $q\in\Pi$ will receive a $\textsc{decide}$ message at the latest in round $d+\nwbound=\ell+D+2\nwbound=r_{ST}+2D+2\nwbound+1$ and decide by the end of this round. \qed
\end{proof}

\section{Impossibilities and Lower Bounds for $k$-Set Agreement }
\label{sec:impossibility-proofs}

In this section, we will turn our attention from consensus to
general \ksa{} and prove related impossibility results and lower bounds.
We will accomplish this by showing that certain ``natural'' message 
adversaries do not allow to solve \ksa{}. For example, as
excessive partitioning of the system into more than $k$ root 
components makes $k$-set agreement trivially impossible,
one natural assumption is to restrict the maximum number of 
root components per round in our system to $k$. 

\cref{ass:inter} below defines the generic message adversary \MAkd{d}, which allows
at most $k$ VSRCs per round and guarantees a common window of
vertex stability of duration at least $d$. Note that it implicitly
involves both the dynamic causal diameter $D$ and the dynamic 
network causal diameter $\nwbound \geq D$ according to \cref{def:D-bounded-VSRC} and
\cref{def:E-network-vertex-stable-roots} (that have be enforced by
the message adversary).

\begin{definition}[Message adversary \MAkd{d}] \label{ass:inter}
The message adversary \MAkd{d} is the set of all sequences of 
communication graphs $(\Gr)_{r>0}$, where
\begin{enumerate}
\item[(i)] for every round $r$, $\Gr$ contains at most $k$ root components,
\item[(ii)] all vertex-stable root components 
occurring in any $(\Gr)_{r>0}$ are \goodD,
\item[(iii)] for each $(\Gr)_{r>0}$, there exists some $r_{ST}>0$ and an interval 
of rounds $J=[r_{ST},r_{ST}+d-1]$ where $1\leq \ell \leq k$ \goodDE\ vertex-stable root components
$R_1^J,\dots,R_{\ell}^J$ exist simultaneously.
\end{enumerate}
\end{definition}
Like for \cref{ass:window}, item~(ii) has only been added for the sake
of the $k$-set agreement algorithm (\cref{alg:ksa}); the impossibility
results and lower bounds also hold when (ii) is dropped or replaced
by something that does not affect item~(iii). Observe
that \MAdv{1}{d} is the same as \MAd{d} except that item~(ii) requires
all VSRCs to be \goodD\ instead of \goodDE.
Note also that the message adversary \MAkd{1} guarantees at most 
$k$ VSRCs in every $\Gr$, $r>0$.

We will now prove that it is impossible to solve \ksa{} for $1\leq k < n-1$ under
the message adversary
\MAkd{\min\{n-k,\nwbound\}-1}, even under the slightly weaker 
version of this message adversary stated in \cref{thm:D-static-roots} below.
We will use the generic impossibility theorem provided in 
\cite[Thm.~1]{BRS11:OPODIS} for this purpose. In a nutshell, the latter exploits
the fact that $k$-set agreement is impossible if $k$ sufficiently
disconnected components may occur and consensus cannot be solved in some component. 

We first introduce the required definitions:
Two executions of an algorithm $\alpha, \beta$ are
\emph{indistinguishable} (until decision)
for a set of processes $\D$, denoted $\alpha \indist{\D} \beta$, if for any $p
\in \D$ it holds that $p$ executes the same state transitions in $\alpha$ and
in $\beta$ (until it decides).
Now consider a model of a distributed system $\M = \li{\Pi}$ that consists of
the
set of processes $\Pi$ and a \emph{restricted model} $\M' = \li{\D}$ that is
computationally
compatible to $\M$ (i.e., an algorithm designed for a process in $\M$ can be
executed on a process in $\M'$) and consists of the set of processes
$\D \subseteq \Pi$.
Let $\A$ be an algorithm that works in system $\M = \li{\Pi}$, where
$\M_\A$ denotes the set of runs of algorithm $\A$ on $\M$, and let $\D
\subseteq \Pi$ be a nonempty set of processes.
Given any restricted system $\M'
= \li{\D}$, the \emph{restricted algorithm} $\ARes$ for system $\M'$ is
constructed by dropping all messages sent to processes outside $\D$ in the
message sending function of $\A$.
We also need the following similarity relation between runs in
computationally compatible systems (cf.\ \cite[Definition 3]{BRS11:OPODIS}):
Let $\R$ and $\R'$ be sets of runs, and $\D$ be a non-empty set of processes.
We say that \emph{runs $\R'$ are compatible with runs $\R$ for processes in
$\D$}, denoted by $\R' \compat{\D} \R$, if $\forall \alpha \in \R'~\exists \beta
\in \R \colon \alpha \indist{\D} \beta$.

\begin{theorem}[$k$-Set Agreement Impossibility {\cite[Thm.~1]{BRS11:OPODIS}}]
\label{thm:impossibility}
Let $\Mod=\li{\Pi}$ be a system model and consider the runs $\M_A$ that are
generated
  by some fixed algorithm $A$ in $\Mod$, where every process starts
  with a distinct input value.
Fix some nonempty and pairwise disjoint sets of processes
$\D_1,\dots,\D_{k-1}$, and a
      set of distinct decision values $\set{v_1,\dots,v_{k-1}}$.
Moreover, let $\D=\bigcup_{1\le i< k}\D_i$ and $\PiD=\Pi\setminus \D$.
Consider the following two properties:
\begin{itemize}
\item[\boldmath\bf(dec-$\D$)]
For every set $\D_i$, value $v_i$ was proposed by
  some $p\in \D$, and there is some $q\in \D_i$ that decides $v_i$.
\item[\boldmath\bf(dec-$\PiD$)]
If $p_j \in \PiD$ then $p_j$ receives
  no messages from any process in $\D$ until every process in $\PiD$ has decided.
\end{itemize}
Let $\Rii\subseteq \M_A$ and $\Riii\subseteq \M_A$ be the sets
  of runs of $A$ where (dec-$\PiD$) respectively both, (dec-$\D$) and
(dec-$\PiD$), hold.\footnote{Note that
  $\Rii$ is by definition compatible with the runs of the restricted
algorithm $\APiD$.}
Suppose that the following conditions are satisfied:
\begin{description}
\item[\bf\textup\condNonempty]
$\Rii$ is nonempty.
\item[\bf\textup\condRiiRiii]
$\Rii\subruns[\PiD]\Riii$.
\end{description}
In addition, consider a restricted model $\Mod'=\li{\PiD}$ such that the
following properties hold:
\begin{description}
\item[\bf\textup\condNocons]
There is no algorithm that solves consensus in $\Mod'$.
\item[\bf\textup\condMADMA]
$\MAD'\subruns \M_A$.
\end{description}
Then, $A$ does not solve $k$-set agreement in $\Mod$.
\end{theorem}

The proof of \cref{thm:D-static-roots} below
utilizes \cref{thm:impossibility} in conjunction with the 
impossibility of consensus under \MAd{\nwbound-1} established
in \cref{thm:consensusimp}.

\begin{theorem}[Impossibility of \ksa{} under \MAkd{\min\{n-k,\nwbound\}-1}] \label{thm:D-static-roots}
There is no algorithm that solves \ksa{} with $n > k+1$ processes under 
the message adversary \MAkd{\min\{n-k,\nwbound\}-1} stated in \cref{ass:inter}, 
for any $1 \leq k < n-1$, even if there are $k-1$
root components $R_1,\dots,R_{k-1}$ that are vertex-stable all the time, i.e., in $[1,\infty]$ 
(and only root component $R_k$ is vertex-stable for at most $\min\{n-k,\nwbound\}-1$ rounds).
\end{theorem}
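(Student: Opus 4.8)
The plan is to apply the generic $k$-set agreement impossibility theorem (\cref{thm:impossibility}) with the restricted model $\Mod'$ being a single-root-component model that is too weak for consensus, namely the one underlying \MAd{\nwbound-1} from \cref{thm:consensusimp}. Concretely, I would reserve $k-1$ disjoint singleton or small process sets $\D_1,\dots,\D_{k-1}$ that will each form a permanently vertex-stable root component $R_1,\dots,R_{k-1}$ with a distinct input value, isolate them entirely from the remaining processes $\PiD=\Pi\setminus\D$ (no outgoing edges from $\D_i$ to $\PiD$, which is allowed since each $R_i$ is a root component), and let the adversary restricted to $\PiD$ behave exactly like \MAdp{\nwbound-1} (equivalently, $\MAd{\nwbound-1}$) on $n'=|\PiD|=n-(k-1)$ processes. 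The window bound $\min\{n-k,\nwbound\}-1$ is exactly what is needed so that the too-short stability interval $J$ for the last root component $R_k$ inside $\PiD$ fits: a VSRC on $n'$ processes has dynamic network causal diameter at most $n'-1=n-k$ by \cref{lem:bncd}, so a stable interval of length $\min\{n-k,\nwbound\}-1$ is genuinely one short of what \cref{def:E-network-vertex-stable-roots} would require, matching the hypothesis of \cref{thm:consensusimp}.

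**Key steps.** First I would verify computational compatibility and condition \condMADMA: the restricted algorithm $\MAD$ run on $\PiD$ under the \MAdp{\nwbound-1}-style adversary produces runs that are also runs of $A$ in $\Mod$ — simply embed each such $\PiD$-run into a full run by adjoining the static isolated components $R_1,\dots,R_{k-1}$, which is legal under \MAkd{\min\{n-k,\nwbound\}-1} because it permits up to $k$ root components and only requires a short common stable window (provided by $R_k$'s window inside $\PiD$, while $R_1,\dots,R_{k-1}$ are vertex-stable forever). Second, \condNocons is exactly \cref{thm:consensusimp} applied to the $n'$-process system $\PiD$: there is no consensus algorithm under \MAdp{\nwbound-1}. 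Third, \condNonempty (nonemptiness of $\Rii$): I need a run where the $\PiD$-processes decide without ever hearing from $\D$; this is immediate since the adversary can keep $\D$ isolated forever, and $A$ must still terminate on $\PiD$ (it is a correct $k$-set agreement algorithm, so every process decides). Fourth, \condRiiRiii ($\Rii \subruns[\PiD] \Riii$): given any run in $\Rii$ where $\PiD$ decides in isolation, I extend it to a run in $\Riii$ by having each $R_i$ decide its own distinct input value $v_i$ (which it can, being an isolated root component that must terminate) — this run is indistinguishable to $\PiD$ from the original, and the $v_i$ are fresh values proposed inside $\D$, giving $k-1$ extra decision values. Together with at least one decision value inside $\PiD$ (forced), that is $k$ distinct decisions, contradicting $k$-agreement.

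**Main obstacle.** The delicate point is arranging the timing so that the short stability window required by \MAkd{\min\{n-k,\nwbound\}-1} is simultaneously (a) satisfied by the construction — i.e., there really is a common interval $J$ of the required length where $\ell\le k$ VSRCs coexist — and (b) still too short for the $\PiD$-subsystem to solve consensus in the sense of \cref{thm:consensusimp}. This forces the two-sided bound $\min\{n-k,\nwbound\}-1$: if $\nwbound\le n-k$ the obstruction is the \MAdp{\nwbound-1} argument directly; if $\nwbound>n-k$ one uses that on $n'=n-k+1$ processes the effective network diameter is at most $n-k$, so a window of $n-k-1$ is already one short. I would also need to double-check the "$D$-bounded vs.\ $\nwbound$-bounded" bookkeeping: item (ii) of \cref{ass:inter} only asks VSRCs to be \goodD, and the permanently-stable $R_1,\dots,R_{k-1}$ can be made trivially \goodD\ (e.g.\ complete in every round, or of size one), so no extra constraint is introduced. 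The rest is routine verification that each of \condNonempty,\condRiiRiii,\condNocons,\condMADMA holds, after which \cref{thm:impossibility} yields that $A$ cannot solve $k$-set agreement.
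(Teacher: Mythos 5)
Your proposal is correct and follows essentially the same route as the paper: isolate $k-1$ singleton, permanently vertex-stable root components with distinct inputs, let the remaining $n-k+1$ processes $\PiD$ run under the too-short-stability adversary of \cref{thm:consensusimp}, and verify conditions \condNonempty--\condMADMA\ of the generic \cref{thm:impossibility} exactly as you describe (including the observation that $|\PiD|=n-k+1$ makes a window of $\min\{n-k,\nwbound\}-1$ one round too short). The only detail the paper treats separately, which you leave implicit, is the degenerate case $k=1$, where $\D=\emptyset$ and the claim follows directly from \cref{thm:consensusimp} rather than via the generic theorem.
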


\begin{proof}
Suppose that there is a $k$-set algorithm $\A$ that works correctly under the
assumptions of our theorem.
For $k=1$, \cref{thm:D-static-roots} is implied by \cref{thm:consensusimp},
since \MAdv{1}{H-1} is the same as \MAd{H-1} if item~(ii) is dropped in
both definitions. 

To prove the theorem for $k>1$,
we will show that the conditions of the generic \cref{thm:impossibility} are
satisfied, thereby providing a contradiction to the assumption that $\A$ exists.
Let $\D_i = \{p_i\}$ for $0 < i \leq k-1$
and let
$\D = \bigcup_{i=1}^{k-1} \D_i$.
Consequently, $\PiD$ = $\left\lbrace
p_k, p_{k+1}, \ldots, p_n \right\rbrace$ and $|\PiD|\geq 2$.

\noindent
\textbf{(A)} The set of runs $\mathbf{\Rii}$ of $\A$ where no process in $\PiD$
receives any message from $\D$ before it dedices is nonempty:
We choose the communication graph in every round to be such that $\PiD$ has 
no incoming links from $\D$ until every process in $\PiD$ has decided.
Since any such sequence of communication graphs satisfies the assumptions 
of our theorem, $\mathbf{\Rii} \neq \emptyset$.

\noindent
\textbf{(B)} The set of runs $\mathbf{\Riii}$ of $\A$ where both (i)
some process in every $D_i$ decides $v_i$ and (ii) no process in $\PiD$
receives any message from $\D$ before it decides satisfies
$\mathbf{\Rii \compat{\PiD} \Riii}$:
Let $\mH$ be the set of runs where processes $p_i$ have unique input values
$x_i = i$, $0 < i < k$,
the communication graph in every round is
such that $p_1, \ldots, p_{k-1}$ are isolated,
and $p_k, \ldots, p_n$ are weakly connected (with a single root)
until every process has
decided.
By the assumptions of our theorem, $\mH$ is non-empty.
Since (i) the processes in $\PiD$ never receive a message from a process in $\D$ in
both $\Rii$ and $\mH$,
and (ii) the initial values of the processes in $\PiD$ are not restricted in $\mH$ in any
way, it is easy to find, for any run $\rho \in \Rii$, a run $\rho' \in \mH$
such that $\rho \indist{\PiD} \rho'$.
Because obviously $\mH \subseteq \Riii$,
we have established $\Rii \compat{\PiD} \Riii$.

\noindent
\textbf{(C)} Consensus is impossible in $\mathbf{\M' = \left\langle \PiD
\right\rangle }$: Let
$\PiD$ be the partition containing the $k$\textsuperscript{th} root
component $R_k$, which is
perpetually changing in every round, except for some interval of rounds
$I = \left[  r_{ST}, r_{ST}+\ell-1 \right]$, where $\ell = \min\{n-k,\nwbound\}-1$,
for some fixed $r_{ST}$.
During this interval, let the topology of $\PiD$ be such that there exists some
$p \in R_k$ and some $q \in \PiD$ with $\dist_{r_{ST}}(p,q) = \ell+1$.
Since $|\PiD| = n-k+1$, such a topology
(e.g.\ a chain with head $p$ and tail $q$) can be created by the message
adversary \MAd{\nwbound-1} underlying \cref{thm:consensusimp} exists.
Hence, consensus is impossible in $\PiD$.

\noindent
\textbf{(D)} $\mathbf{\M'_{\A_{|\PiD}} \compat{\PiD} \M_\A}$:
Fix any run $\rho' \in \M'_{\A_{|\PiD}}$ and consider a run $\rho \in \M_\A$,
where every process in $\PiD$ has the same sequence of state transitions in $\rho$ as in
$\rho'$. Such a run $\rho$ exists, since the processes in $\PiD$ can be 
disconnected from $\D$ in every round in $\M_\A$, so $\rho \indist{\PiD} \rho'$. \qed
\end{proof}

Since \cref{thm:D-static-roots} tells us that
no $k$-set agreement algorithm (for $1 \leq k < n-1$) can
\emph{terminate} with insufficient concurrent
stability of the at most $k$ root components in the system, it is tempting to
assume that $k$-set agreement becomes solvable
if a round exists after which all communication graphs
remain the same.
However, we will prove in
\cref{thm:rGST-impossibility} below that this is
not the case for any $1 < k \leq n-1$. 
We will again use the generic \cref{thm:impossibility},
this time in conjunction with the variant of the well-known impossibility of consensus 
with lossy links \cite{SW89,SWK09} provided in \cref{lem:lossylink-simulation}, 
to prove that ensuring at most
$k$ different decision values is impossible here, as too many
decision values may originate from the unstable period.

\begin{lemma} \label{lem:lossylink-simulation}
Let $\M'=\li{p,q}$ be a two-processor subsystem of our system $\M=\li{\Pi}$.
If the sequence of communication graphs $\Gr$, $r>0$, of 
$\M$ are restricted by the existence of a round
$r'>0$ such that
(i) for $r < r'$, $(p \rightarrow q) \in \Gr$ and/or $(q \rightarrow p) \in \Gr$,
         and no other edges incident with $p$ or $q$ are in $\Gr$, and (ii)
for $r \geq r'$, there are no edges incident with $p$ and $q$ at all
         in $\Gr$,
then consensus is impossible in $\M'$.
\end{lemma}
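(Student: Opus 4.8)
The plan is an indistinguishability argument in the spirit of the lossy-link / coordinated-attack impossibility \cite{SW89,SWK09}. Assume for contradiction that some algorithm $\A$ solves consensus (\cref{def:consensus}) in $\M'=\li{p,q}$. The first ingredient is a consequence of validity together with termination: \emph{in any feasible run of $\A$ in which a process never receives a message, that process decides its own input value}. To see this, take the feasible sequence with $r'=1$, so that no edge incident with $p$ or $q$ ever occurs, and inputs $x_p=x_q=v$; by validity the only admissible decision is $v$ and by termination both processes decide, so $p$ decides $v$. Since in this run $p$'s sequence of states is a function of $x_p=v$ alone (it receives nothing), $p$ passes through exactly the same states, and hence reaches the same decision $v$, in \emph{every} feasible run in which it has input $v$ and receives no message; the claim for $q$ is symmetric.

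With this in hand the impossibility is already immediate for the all-silent feasible run $\rho$ (again $r'=1$) with $x_p=0$ and $x_q=1$: neither process ever receives anything, so $p$ decides $0$ and $q$ decides $1$, violating agreement. For the stronger ``genuinely lossy-link'' reading — where the adversary is obliged to keep delivering messages over some prefix of rounds before the link falls silent — one instead starts from a \emph{fully connected} run $\rho_0$ with $x_p=0$, $x_q=1$, in which both edges $(p\ra q)$ and $(q\ra p)$ are present in rounds $1,\dots,T$ and absent afterwards, where $T$ is large enough that both processes have decided by round $T$ (finite by termination; put $r'=T+1$). By agreement both decide the same value $w\in\{0,1\}$ in $\rho_0$.

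From $\rho_0$ we then peel off one edge at a time in reverse chronological order: delete $(q\ra p)$ of round $T$, then $(p\ra q)$ of round $T$, then $(q\ra p)$ of round $T-1$, and so on, ending in the all-silent run $\rho_{2T}$. Every intermediate sequence still lies in $\M'$ (its edges occupy a prefix of rounds and then stop, with the appropriate $r'$), and any two consecutive sequences $\rho_i,\rho_{i+1}$ differ in exactly one edge, which is a message \emph{sent} by one of the two processes in the last round that still carries any message; since no message is delivered in any later round, that sender's incoming messages are unchanged between $\rho_i$ and $\rho_{i+1}$, so it cannot distinguish the two runs and decides the same value in both. Agreement then forces the other process to decide the same value as well, so by induction both decide $w$ in $\rho_{2T}$. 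But $\rho_{2T}$ is all-silent, so by the first paragraph $p$ decides $0$ and $q$ decides $1$, i.e. $w=0=1$, a contradiction; hence no such $\A$ exists. The only delicate point is the bookkeeping in this last step — verifying that each stripped sequence remains feasible for $\M'$ and that ``the sender does not distinguish consecutive runs'' genuinely holds — which is precisely why the deletions must proceed from the last round backwards; the rest is routine.
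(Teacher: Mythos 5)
Your first two paragraphs are a correct proof of the lemma \emph{exactly as phrased}: since $r'$ is only required to exist, taking $r'=1$ makes condition (i) vacuous, the totally silent sequence is feasible, and validity plus termination plus determinism force $p$ to decide $0$ and $q$ to decide $1$ in the silent run with inputs $0,1$. But this exploits a degenerate instantiation that the paper neither intends nor can use: its own proof consists of invoking the two-processor impossibility for a lossy but at least unidirectional link (\cite[Lemma~3]{SWK09} together with \cite{SW89}), i.e., the setting where at least one of $(p\ra q),(q\ra p)$ is delivered in \emph{every} round before $r'$, and both applications (\cref{thm:rGST-impossibility,thm:strong-rGST-impossibility}) need exactly that version: a round in which neither $p$ nor $q$ receives anything would make both of them root components and violate the bound on the number of root components, so the witness runs there must never contain a silent round before $r'$ (the proof of \cref{thm:strong-rGST-impossibility} explicitly relies on the ``forever bivalent run'' of \cite{SWK09}).

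Your fallback argument for that stronger reading has a genuine gap: the backward peeling deletes \emph{both} messages of a round before moving to the previous one, so the chain passes through sequences containing completely silent rounds and terminates at the all-silent run $\rho_{2T}$ --- precisely the sequences that the stronger reading declares infeasible, so agreement and termination cannot be invoked on them (and the auxiliary claim of your first paragraph is itself derived from a silent run, although it could be re-derived from the feasible run in which only $(p\ra q)$ is ever delivered). Under the constraint ``at least one delivery per round before $r'$'' the two-generals-style peeling does not work off the shelf: as soon as some later round still carries a delivery, changing an earlier reception of the ``sender'' alters the content of a message that \emph{is} subsequently delivered, so consecutive hybrids are no longer indistinguishable to one of the two processes; this is why \cite{SWK09} uses a bivalence construction rather than a finite chain. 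To establish the statement in the form the paper actually uses, you would need either that bivalence argument (or a considerably more carefully ordered chain that never empties a round, e.g.\ one connecting the ``only $(p\ra q)$ ever'' run, which must decide $x_p$, with the ``only $(q\ra p)$ ever'' run, which must decide $x_q$), or simply the citations to \cite[Lemma~3]{SWK09} and \cite{SW89} that constitute the paper's proof.
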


\begin{proof}
Up to $r'$, this is ensured by the impossibility of 2-processor
consensus with a lossy but at least unidirectional link established in
\cite[Lemma 3]{SWK09}. After $r'$, this result continues to hold (and
is even ensured by the classic lossy link impossibility \cite{SW89}).
Hence, consensus is indeed impossible in $\M'$. \qed
\end{proof}

\begin{theorem} \label{thm:rGST-impossibility}
There is no algorithm that solves \ksa{} for $n \geq k+1$ processes under 
the message adversary \MAkd{\infty}, for every $1<k<n$.
\end{theorem}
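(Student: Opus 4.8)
The plan is to invoke the generic reduction \cref{thm:impossibility} together with the lossy-link impossibility \cref{lem:lossylink-simulation}. I fix $\PiD=\{p_{n-1},p_n\}$ as a two-processor subsystem — legitimate since $n\geq k+1$ forces $|\Pi\setminus\PiD|=n-2\geq k-1$ — partition $\D=\Pi\setminus\PiD$ into nonempty sets $\D_1,\dots,\D_{k-1}$, and take the restricted model $\Mod'=\li{\PiD}$ to be the two-processor lossy-link model of \cref{lem:lossylink-simulation}. Condition~\condNocons\ (no algorithm solves consensus in $\Mod'$) is then immediate from that lemma, so it remains to verify \condNonempty, \condRiiRiii\ and \condMADMA.

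For \condNonempty\ I would exhibit one feasible run of $\A$ under $\MAkd{\infty}$ obeying (dec-$\PiD$): keep the edge $p_{n-1}\to p_n$ present in every round (so $\PiD$ forms one weakly connected component with the single root $\{p_{n-1}\}$), let $\D$ be a single static clique, and add edges from $\D$ to all of $\PiD$ only after $\PiD$ has decided, turning $\D$ into the sole \goodD\ VSRC reaching every process as demanded by item~(iii) of $\MAkd{\infty}$; at most two root components ever occur, so $\le k$. Condition \condMADMA\ is the core construction: given a run of $\APiD$ on $\Mod'$ with disconnection round $r'$, I lift it by copying $\PiD$'s communication graphs verbatim and, while the lossy link persists (rounds $<r'$, during which $\PiD$ is one root), making $\D_1,\dots,\D_{k-1}$ isolated static cliques, so $\D$ contributes $k-1$ roots and the per-round total is $k$; at round $r'$, when $\{p_{n-1}\}$ and $\{p_n\}$ become two separate roots, I add a single edge $\D_2\to\D_1$ so $\D$ drops to $k-2$ roots and the total stays at $(k-2)+2=k$; once the processes of $\PiD$ have decided I reconnect everything under one clique-VSRC exactly as in \condNonempty. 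Since $\PiD$ never hears from $\D$ before deciding — all that (dec-$\PiD$) demands — the lifted run agrees with the given one on $\PiD$.

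Condition \condRiiRiii, $\Rii\subruns[\PiD]\Riii$, is where the root-component budget is tightest, because a run $\alpha\in\Rii$ may keep $p_{n-1}$ and $p_n$ mutually disconnected — hence each a root — from the very first round, leaving only $k-2$ root ``slots'' for the $k-1$ groups. To build $\beta\in\Riii$ with $\alpha\indist{\PiD}\beta$, I copy $\alpha$'s $\PiD$-graphs (so $\PiD$'s processes, receiving nothing from $\D$, make exactly the same transitions and decide at the same rounds as in $\alpha$), make $\D_1,\dots,\D_{k-2}$ isolated static cliques with pairwise distinct inputs $v_1,\dots,v_{k-2}$, and make $\D_{k-1}$ a static clique that is \emph{perpetually downstream of} $p_{n-1}$ through a permanent edge $p_{n-1}\to\D_{k-1}$ — which is legal, as (dec-$\PiD$) forbids only edges \emph{from} $\D$ \emph{to} $\PiD$ and leaves $p_{n-1}$'s received messages, hence its transitions, unchanged — and give the processes of $\D_{k-1}$ the value $v_{k-1}$ with which $p_{n-1}$ starts in $\alpha$; since $|\V|>k$ I can pick $v_1,\dots,v_{k-2}$ distinct from one another and from $v_{k-1}$. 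Then each group $\D_i$ ever sees only the single value $v_i$ and therefore decides $v_i$ by Validity, which gives (dec-$\D$) with $k-1$ distinct values; $\D$ now contributes only $k-2$ root components, so the per-round count never exceeds $(k-2)+2=k$; and once every process has decided I reconnect everything under one clique-VSRC, making $\beta$ feasible for $\MAkd{\infty}$.

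With \condNonempty--\condMADMA\ in place, \cref{thm:impossibility} yields that $\A$ does not solve $k$-set agreement under $\MAkd{\infty}$. I expect the main obstacle to be precisely this root-component accounting: ``at most $k$ root components per round'' must coexist with forcing $k-1$ distinct decisions inside $\D$ and two (necessarily inconsistent) decisions inside $\PiD$, and once the lossy link disappears $p_{n-1}$ and $p_n$ occupy two of the $k$ slots. The two devices that make the count work are parking the $(k-1)$-th group permanently downstream of $p_{n-1}$ while keeping its decision correct by handing it $p_{n-1}$'s own input, and collapsing two root components into one the instant the link vanishes in the \condMADMA-construction. A secondary point — producing the eventual infinite window of at most $k$ VSRCs reaching every process required by $\MAkd{\infty}$ — is handled uniformly by postponing any reconnection of the network into a single clique-VSRC until after all processes have irrevocably decided.
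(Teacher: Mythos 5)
Your overall strategy (the generic reduction of \cref{thm:impossibility} combined with the lossy-link impossibility of \cref{lem:lossylink-simulation}) is the same as the paper's, but the way you manage the root-component budget creates a genuine gap in condition \condRiiRiii. Parking $\D_{k-1}$ permanently downstream of $p_{n-1}$ does keep the per-round root count at $k$, but it breaks the claim you then rely on: $\D_{k-1}$ does \emph{not} ``ever see only the single value $v_{k-1}$''. Since $p_{n-1}$ may receive from $p_n$ in $\alpha$ (only edges from $\D$ to $\PiD$ are excluded by (dec-$\PiD$)), the messages $p_{n-1}$ forwards to $\D_{k-1}$ can carry $p_n$'s input, so the processes of $\D_{k-1}$ may well decide $x_{p_n}$ rather than $v_{k-1}$; hence (dec-$\D$) for the a-priori fixed value $v_{k-1}$ is not guaranteed and $\beta\in\Riii$ is not established. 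Moreover, setting $v_{k-1}$ equal to $p_{n-1}$'s input (and giving it to every member of $\D_{k-1}$) conflicts with the setup of \cref{thm:impossibility}, which considers runs in which every process starts with a distinct input, and it collapses exactly the counting the reduction needs: the two incompatible decisions the lossy-link pair is meant to contribute come from $\set{x_{p_{n-1}},x_{p_n}}$, so with $v_{k-1}=x_{p_{n-1}}$ you can no longer exceed $k$ distinct values.

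The paper avoids all of this by not fighting the full adversary in condition (B): it restricts attention to a sub-adversary in which the two processes of $\PiD$ are connected by a lossy link (hence occupy a single root slot) until a switch round $\rGST$, and only after $\rGST$ are they split into two roots, while an edge from $p_1$ into $\D$ simultaneously removes one of $\D$'s roots, keeping the count at $k$; crucially, $\rGST$ is chosen \emph{after} all processes of $\D$ have already unilaterally decided, so no information ever flows from $\PiD$ into $\D$ before $\D$'s decisions are fixed, and \cref{lem:lossylink-simulation} is then applied with $r'=\rGST$. Two smaller points in your write-up also need repair: your \condMADMA-construction frees a root slot after $r'$ via an edge $\D_2\to\D_1$, which does not exist when $k=2$ (use an edge from $p_{n-1}$ into $\D_1$ instead, which is precisely the paper's trick and does not affect $\PiD$'s view); and for non-terminating runs of the restricted algorithm---the runs the impossibility actually hinges on---your ``reconnect after everyone has decided'' justification of feasibility never triggers, so you would have to argue directly that the static $k$-root configuration with the extra edge already satisfies item~(iii) of \cref{ass:inter} with $d=\infty$ (it does, but your proposal does not say so).
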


\begin{proof}
Suppose again that there is a $k$-set algorithm $\A$ that works correctly under the
assumptions of our theorem. We restrict our attention to runs of $\M_\A$ where,
until $\rGST$, (i) the same set of $k-1$ root components
$\set{\D_1, \ldots, \D_{k-1}}$ with $\D = \bigcup_{i=1}^{k-1} \D_i$
exists in every round, and (ii) two remaining processes 
$\PiD = \Pi \setminus \D = \set{p_1, p_2}$ exist, which are 
(possibly only uni-directionally, i.e., via a lossy link) 
connected in every round, without additional edges to or from $\D$.
After $\rGST$, the communication graph remains the same, except that
the processes in $\PiD$ are disconnected from each other and there is
an edge from, say, $p_1$ to some process in $\D$ in every round. Note
that these runs satisfy \cref{ass:inter} for $d=\infty$, as the number
of root components never exceeds $k$.

Moreover, we let the adversary choose $\rGST$ sufficiently large such that the processes in $\D$ 
have decided. Since the processes in $\D_i$ ($i<0<k$) never receive 
a message from the remaining system before $\rGST$, in which case
they must eventually unilaterally decide, we can safely assume this.

We can now again employ the generic impossibility
\iftoggle{journal}{
\cref{thm:impossibility}
}{
from \cite[Theorem 1]{BRS11:OPODIS}
}
in this modified setting. The proofs of properties (A), (B) and
(D) remain essentially the same as in \cref{thm:D-static-roots}.
It hence only remains to prove:
\begin{compactdesc}
   \item[(C)] Consensus is impossible in $\mathbf{\M' = \left\langle \PiD
   \right\rangle }$: This follows immediately from
      \cref{lem:lossylink-simulation} with
      $r' = \rGST$. \qed
\end{compactdesc}
\end{proof}

The following \cref{thm:strong-rGST-impossibility} reveals that even
(considerably) less than $k$ root components per round before 
stabilization and a single perpetually stable root component
after stabilization are not sufficient for solving \ksa{}.

\newcommand\floor[1]{\left\lfloor #1 \right\rfloor}
\newcommand\ceil[1]{\left\lceil #1 \right\rceil}

\begin{theorem} \label{thm:strong-rGST-impossibility}
There is no algorithm that solves \ksa{} for $n \geq k+1$ processes under 
the message adversary \MAdv{\ceil{k/2} + 1}{\infty}, for every $1<k<n$, even if
$\G^{r} = \G$, $r\geq \rGST$, where $\G$ contains only a single root component.
\end{theorem}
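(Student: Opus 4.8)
The plan is to reduce to the generic $k$-set impossibility \cref{thm:impossibility}, exploiting the observation that a \emph{lossy pair} --- two processes joined only by a unidirectional link whose direction is flipped from round to round, as in \cref{lem:lossylink-simulation} --- counts as a \emph{single} root component in every round, yet, since consensus is impossible in it, can be driven to \emph{two} distinct decision values. So I would put $j:=\floor{(k-1)/2}$ such pairs inside $\D$, so that their $2j$ processes form $2j$ of the $k-1$ required pairwise disjoint sets $\D_1,\dots,\D_{k-1}$, the remaining $k-1-2j\in\{0,1\}$ of these sets being genuinely isolated singleton processes; and I would take $\PiD$ to be one further lossy pair $\{p_1,q_1\}$, together with the $n-(k+1)$ superfluous processes hung downstream of $q_1$. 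The number of root components per round is then $1+j+(k-1-2j)=k-\floor{(k-1)/2}$, which is $\ceil{k/2}+1$ when $k$ is even and $\ceil{k/2}$ when $k$ is odd, hence always at most $\ceil{k/2}+1$; and the number of decision values produced is $k-1$ (from $\D$) plus $2$ (from $\PiD$), i.e.\ $k+1>k$.

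Concretely I would build the run family as follows. Give the $k+1$ \emph{special} processes (the $k-1$ in $\D$ and $p_1,q_1$) pairwise distinct inputs. Before $\rGST$, inside every pair only a unidirectional link is present, with direction flipped each round; no edge runs between distinct pairs, between a pair and a singleton, or out of a singleton; each superfluous process has a fixed incoming edge from $q_1$. Thus every round has at most $\ceil{k/2}+1$ root components, and the only vertex-stable root components living longer than one round are the trivially \goodD\ isolated singletons. From round $\rGST$ on, set $\G^r=\G$ for a fixed out-tree $\G$ rooted at $p_1$ that reaches everybody but has \emph{no} edge leading back into $\{p_1,q_1\}$; then $\G$ has the unique root component $\{p_1\}$, which is vertex-stable, \goodD\ and \goodDE\ on $[\rGST,\infty]$. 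This sequence is feasible for \MAdv{\ceil{k/2}+1}{\infty} and satisfies the theorem's extra hypothesis that $\G^r=\G$ has a single root component for $r\ge\rGST$.

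With this family I would invoke \cref{thm:impossibility} with the $\D_j$ and $\PiD$ above. Condition \condNocons\ holds because consensus is impossible in $\M'=\li{\PiD}$: consensus is already impossible in the two-processor subsystem $\{p_1,q_1\}$ by \cref{lem:lossylink-simulation} with $r'=\rGST$, and the downstream superfluous processes, having only incoming edges from $q_1$, relay nothing back to $\{p_1,q_1\}$, so they cannot enable agreement. Conditions \condNonempty, \condRiiRiii\ and \condMADMA\ go through as in the proofs of \cref{thm:D-static-roots} and \cref{thm:rGST-impossibility}: $\Rii$ is nonempty because the chosen $\G$ keeps $\PiD$ from ever receiving a message from $\D$; the compatibility $\MAD'\subruns\M_A$ holds because $\PiD$ can always be severed from $\D$ in a full run; and $\Rii\subruns\Riii$ holds because the $\D$-side graphs do not affect $\PiD$, so one may replace them --- by the all-isolated sequence for the lone singleton (which then decides its own input by Termination and Validity) and, for each $\D$-pair, by a sequence along which its two processes decide their two distinct inputs. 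Such a disagreeing sequence for one pair exists because $\A$ restricted to that pair always terminates but, by \cref{lem:lossylink-simulation}, cannot also always agree; and since the pairs never communicate, these choices are made independently of one another.

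The heart of the matter --- and the reason the bound improves from the $k$ root components of \cref{thm:rGST-impossibility} to $\ceil{k/2}+1$ here --- is exactly this last step: that a single feasible run can realize disagreement simultaneously in all $j=\floor{(k-1)/2}$ pairs of $\D$ (and, via condition \condNocons, in the $\PiD$-pair as well), so that $k+1$ distinct values get decided. I expect this to be the main obstacle, although once the run family is arranged so that the pairs are mutually isolated and the post-$\rGST$ collapse contaminates no special process before it has decided, it reduces to applying the single-pair impossibility \cref{lem:lossylink-simulation} once per pair; the remaining work --- the root-component accounting and the check that all constructed communication-graph sequences are feasible for \MAdv{\ceil{k/2}+1}{\infty} --- is routine but must be carried out with care.
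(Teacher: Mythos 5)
Your overall insight is the paper's: a lossy pair counts as a single root component per round yet can be driven to two distinct decisions, and that is exactly how the bound drops from $k$ to $\ceil{k/2}+1$. But there is a genuine gap at precisely the point you flag as the heart of the matter. In your construction the $\D$-pairs keep their flipping unidirectional link right up to $\rGST$ and are then absorbed into $p_1$'s out-tree; nothing in this run family forces the two members of a pair ever to decide \emph{before} they start hearing from outside, let alone to decide on distinct values. Your justification --- ``$\A$ restricted to that pair always terminates'' --- does not follow from the correctness of $\A$: the pair-system runs you need (lossy link forever, or both members isolated forever as in \cref{lem:lossylink-simulation}) do \emph{not} embed into runs feasible for \MAdv{\ceil{k/2}+1}{\infty} with an eventually fixed single-root graph, so $\A$'s termination guarantee says nothing about them. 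Indeed a reasonable algorithm (e.g.\ one that decides only after detecting a long-enough vertex-stable root) simply never decides inside a pair whose root flips every round, and after $\rGST$ both members are influenced by $p_1$ and need not disagree. Consequently your runs in $\Riii$ (each $\D$-pair deciding two distinct values, simultaneously for all pairs, in a feasible sequence) are not shown to exist, and condition \condRiiRiii\ of \cref{thm:impossibility} is unsupported.

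The paper's proof supplies exactly the mechanism you are missing, and it is not routine bookkeeping: the adversary works in \emph{phases}, handling one pair at a time. For the current pair it first fixes a bivalence-preserving lossy pattern (so that, since $\A_{|\D_i}$ cannot solve consensus there, agreement must fail), then \emph{isolates} the two members; each isolated member's view is indistinguishable from a feasible run in which it becomes the single root after $\rGST$, and \emph{that} is what forces it to decide in finite time. Only after both (necessarily distinct) decisions are collected does the adversary reconnect them and split the next pair. This staggering is also what saves the root-component count: splitting a pair temporarily turns one root into two, so if all pairs were split at once (or if a pair were split before a previously decided component is given an incoming edge) the $\ceil{k/2}+1$ bound would be violated --- your static count of $1+j+(k-1-2j)$ ignores the splitting step entirely because your construction has no splitting step. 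A secondary, fixable, inaccuracy: you invoke \cref{lem:lossylink-simulation} with $r'=\rGST$ for $\{p_1,q_1\}$ although your post-$\rGST$ graph keeps edges incident to $p_1$ and $q_1$ (the out-tree, and $q_1$'s downstream edges before $\rGST$), so the lemma does not apply literally; one must argue via the restricted algorithm on the two processes, as the paper does by choosing $\PiD=\{p_1,p_2\}$ exactly in \cref{thm:rGST-impossibility}.
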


\begin{proof}
We show that, under the assumption that $\A$ exists, there is a sequence of
communication graphs that is feasible for our message adversary that
leads to a contradiction. We choose $x_i = i$ for all $p_i \in \Pi$ and let
$\D_i = \set{p_{1+2i}, p_{2+2i}}$ for $0 \leq i < \ceil{k/2}-1$.
If $k$ is even, let $\D_{k/2-1}=\set{p_{k-1}, p_k}$;
if $k$ is odd, let $\D_{\ceil{k/2-1}}=\set{p_{k}}$.
In any case, let $\D_{\ceil{k/2}} = \{p_{k+1}\}$.
Finally, let $\PiD = \set{p_{k+2}, \ldots, p_n}$.
Note that $\PiD$ may be empty, while all $\D_i$ are guaranteed to contain at least
one process since $n>k$. For all rounds, the 
processes in $\PiD$ have an incoming edge from a process in one of the $\D_i$.

We split the description of the adversarial strategy into $\ceil{k/2}+1$ phases
in each of which we will force some $\D_i$ to take $|\D_i|$ decisions. 
To keep processes $p,q \in \D_i$ with $|\D_i|=2$ from deciding on the same value before 
their respective phase $i$, the adversary restricts $\Gr$ such that
(i) there are no links to $\D_i$ from any other $\D_j$ and (ii) 
either the edge $(p \rightarrow q)$ or $(p \leftarrow q)$ or both 
are in $\Gr$, in a way that causes \cref{lem:lossylink-simulation}
to apply. Note carefully that any such $\Gr$ indeed has no more than 
$\ceil{k/2} + 1$ root components.

In the initial phase, $\D_{\ceil{k/2}}$ is forced to decide: 
Since $p_{k+1}$ has no incoming edges from another node in $\Gr$,
this situation is indistinguishable from a run where $p_{k+1}$ became 
the single root after $\rGST$.
Thus, by the correctness of $\A$, $p_{k+1}$ must eventually decide on $x_{k+1}=k+1$.
At this point, the initial phase ends, and we can safely allow the 
adversary to modify $\Gr$ in such a way that $p_{k+1}$ has an incoming
edge from some other process.

We now proceed with $\ceil{k/2}-1$ phases: In the $i$\textsuperscript{th} phase,
$0 \leq i < \ceil{k/2}-1$, the adversary drops any link between 
the processes $p, q \in \D_i$ (and does not provide an incoming link from 
any other process, as before) in any $\Gr$.
Since, for both $p$ and $q$, this is again indistinguishable from the situation
where they become the single root after $\rGST$,
both will eventually decide in some future round (if they have not already decided).
Since the adversary may have chosen a link failure pattern in
earlier phases that causes the impossibility (= forever bivalent run) of 
\cref{lem:lossylink-simulation} to apply, as $\mathbf{\M'_{\A_{|\D_i}} \compat{\D_i} \M_\A}$, 
it follows that $\A$ and hence $\A_{|\D_i}$ cannot
have solved consensus in $\D_i$. Since $\A$ solves \ksa{}, $p$ and $q$ must hence
decide on two \emph{different}
values. Moreover, since neither $p$ nor $q$ ever received a message from a process not in $\D_i$,
their decision values must be different from the ones in all former phases.

Finally, after $p$ and $q$ have made their decisions, the adversary may again modify 
$\Gr$ such that they have an incoming edge from some other process, 
thereby reducing the number of root components by two and preserving the
maximum number $\ceil{k/2} + 1$  of root components, and continue with the next phase.

If $k$ is even, then the final phase $\ceil{k/2}-1$ 
forces two more decisions just 
as described above; otherwise, $p_k$ provides one additional decision value
(which happens concurrently with the initial 
phase here). In either case, we have shown that all $p_i$ with $1\le i\le k+1$ have 
decided on different values, which contradicts the assumption that a correct 
algorithm $\A$ exists. \qed
\end{proof}

Note that \cref{thm:strong-rGST-impossibility} reveals an interesting gap between
$2$-set agreement and 1-set agreement, i.e., consensus:
It shows that $2$-set agreement is impossible 
with $\lceil k/2\rceil+1=2$ root components per round before
and a single fixed root component after stabilization.
By contrast, if we reduce the number of root components per round
to a single one before stabilization (and still consider a single fixed root thereafter), 
even $1$-set agreement becomes solvable~\cite{BRS12:sirocco}.

\section{Algorithms for $k$-Set Agreement}
\label{sec:sufficiency}

In this section, we will provide a message adversary \MAJINF{k} 
(\cref{ass:stablestream}) that is sufficiently weak for solving \ksa{} 
if combined with \MAdv{n}{3D+\nwbound} (\cref{ass:inter}). 
Although we can of course not claim that it
is a strongest one in terms of problem solvability 
(we did not even define what this means), we
have some indications that it is close to the solvability/impossibility 
border.

\subsection{Set agreement}

To illustrate some of the ideas that will be used in our message adversary
for general \ksa{}, we start with the simple case of $n-1$-set 
agreement (also called \emph{set agreement}) first. Note that
\cref{thm:D-static-roots} does not apply here. To circumvent the
impossibility result of \cref{thm:strong-rGST-impossibility}, it suffices to
strengthen the assumption of at most $n-1$ root components
in every round such that the generation
of too many decision values during the unstable period is ruled out. 
A straightforward way to achieve this is to just forbid $n$ different
decisions obtained in root components consisting of a single
process. Achieving this is easy under the $\Sigma_{n-1}$-influence message
adversary given in \cref{asm:influenceset}, the name of which has
been inspired by the $\Sigma_{n-1}$ failure detector \cite{BR11:TCS}.

\begin{definition}[$\Sigma_{n-1}$-influence message adversary]\label{asm:influenceset}
The message adversary \MASigma\ is the set of all 
sequences of communication graphs $(\Gr)_{r>0}$, where in any set
$R_1^{I_1},\dots,R_n^{I_n}$ of $n$ root components consisting of
single processes $R_i=\{p_i\}$, $1\leq i \leq n$, occuring in any run
the following holds: There are two indices $i$, $j\neq i$ such 
that $R_i^{I_i}$ \emph{influences} $R_j^{I_j}$,
denoted $\influence{R_i^{I_i}}{R_j^{I_j}}$, in the sense that
there exists a causal chain starting after $I_i$ that ends before
or at the beginning of $I_j$.
\end{definition}

It is easy to devise a set agreement algorithm that works correctly
in a dynamic network under \cref{asm:influenceset}, provided
(a bound on) $n$ is known: In \cref{alg:setagreement}, process
$p_i$ maintains a proposal value $v_i$, initially $x_i$, and
a decision value $y_i$, initially $\bot$, which are broadcast in every 
round. If $p_i$ receives no message from any other process in a round,
it decides by setting $y_i=v_i$. If $p_i$ receives a message from some
$p_j$ that has already decided ($y_j\neq \bot$), it sets $y_i=y_j$. 
Otherwise, it updates $v_i$ to the maximum of $v_i$ and all 
received values $v_j$. At the end of round $n$, a process that has 
not yet decided sets $y_i:=v_i$, and all processes terminate.

\begin{algorithm}[h]

\begin{algorithmic}[1]
\scriptsize
\setcounter{ALC@unique}{0}
\setlinenosize{\scriptsize}
\setlinenofont{\tt}
\item[] {\bf Set agreement algorithm, code for process $p_i$:}
\STATE $v_i := x_i \in V$ // initial value
\STATE $y_i:=\bot$
\item[] {\bf\boldmath Emit round $r$ messages:}
\STATE send $\li{v_i,y_i}$ to all
\item[] {\bf\boldmath Receive round $r$ messages:}
\STATE receive $\li{v_j,y_j}$ from all current neighbors
\item[] {\bf\boldmath Round $r$: computation:}
	\STATE $v_i := \max\{v_i, v_j: j \in \N_{p_i}\}$
	\IF{$\exists j: (y_j \neq \bot) \wedge (y_i= \bot)$}
		\STATE $y_i:=y_j$
        \ENDIF
	\IF{$(\N_{p_i}=\emptyset) \wedge (y_i= \bot)$}
		\STATE $y_i:=v_i$
        \ENDIF
	\IF{$(r=n) \wedge (y_i= \bot)$}
		\STATE $y_i:=v_i$; terminate
        \ENDIF
\end{algorithmic}%

\caption{Set agreement algorithm for message adversary \MASigma.}\label{alg:setagreement}

\end{algorithm}

\begin{theorem}[Correctness \cref{alg:setagreement}] \label{thm:setagreement}
\cref{alg:setagreement} solves $n-1$-set agreement in a
dynamic network under message adversary \MASigma\ given in \cref{asm:influenceset}.
\end{theorem}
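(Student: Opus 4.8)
The plan is to establish the three requirements of \cref{def:ksa} for $k=n-1$ in turn. \emph{Validity} I would get by a routine induction: the only values ever stored in $v_i$ are $x_i$ or maxima of received $v_j$'s, so $v_i$ is always some process' input; a decision in the isolation line or the round-$n$ line is such a $v_i$, and a decision via the copy line equals a decision value $y_q$, which by induction on time is an input value. \emph{Termination} is immediate, since at round $n$ every still-undecided process sets $y_i:=v_i$ and all processes terminate. The substance is \emph{$(n-1)$-agreement}, which I would prove by contradiction: assume some run $\rho$ feasible for $\MASigma$ decides $n$ pairwise distinct values.

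Since $|\Pi|=n$ and, by validity, every decision value is an input value, all $n$ inputs are pairwise distinct and there is a permutation $\tau$ with $y_i=x_{\tau(i)}$. I would first note that \emph{no process ever executes the copy line}: as $y_i$ is never overwritten once set, a copy would make two processes share a decision value. Second, \emph{every process decides its own input}, $y_i=x_i$: if $\tau$ had a nontrivial cycle $i_1\!\mapsto\! i_2\!\mapsto\!\cdots\!\mapsto\! i_1$, then $y_{i_\ell}=x_{i_{\ell+1}}$ would put $p_{i_{\ell+1}}$ into the causal past of $p_{i_\ell}$ up to its decision round, so $x_{i_{\ell+1}}=y_{i_\ell}\ge x_{i_\ell}$ ($p_{i_\ell}$ lies in its own causal past), hence $x_{i_{\ell+1}}>x_{i_\ell}$ by distinctness, and going around the cycle yields $x_{i_1}>x_{i_1}$.

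The core step is then \emph{every process $p_i$ is a singleton root component of $\G^{r_i}$}, where $r_i$ denotes $p_i$'s decision round. I would prove, by induction on $p_i$'s rank in the increasing ordering of the inputs, that $r_i\le\text{rank}(p_i)$ and $\N_{p_i}^{r_i}=\emptyset$. If $p_i$ has decided before round $\text{rank}(p_i)$ we are done, since then $r_i<n$ and $p_i$ can only have decided via the isolation line, which forces $\N_{p_i}^{r_i}=\emptyset$. Otherwise any in-neighbour of $p_i$ in round $\text{rank}(p_i)\le r_i$ causally influences $p_i$ up to its decision round, so by $y_i=x_i$ it has strictly smaller input, hence strictly smaller rank, hence (induction hypothesis) has already decided — which makes $p_i$ execute the copy line, impossible. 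Thus $p_i$ has no in-neighbour in round $\text{rank}(p_i)$ and decides there, via the isolation line, or, when $p_i$ has the largest input, possibly via the round-$n$ line (and then $\N_{p_i}^{r_i}=\emptyset$ as well, since by that round every other process has decided and a remaining in-neighbour would again trigger the copy line). In all cases $\N_{p_i}^{r_i}=\emptyset$, so $\{p_i\}$ is a root component of $\G^{r_i}$.

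Finally I would apply \cref{asm:influenceset} to the $n$ singleton root components $\{p_1\}^{[r_1,r_1]},\dots,\{p_n\}^{[r_n,r_n]}$: there is $i\ne j$ with $\influence{\{p_i\}^{[r_i,r_i]}}{\{p_j\}^{[r_j,r_j]}}$, i.e.\ a causal chain from $p_i$ whose first step lies in a round $>r_i$ and which reaches $p_j$ in some round $\le r_j$. Since $p_i$ has decided by round $r_i$, it forwards a non-$\bot$ decision field along this chain; using the no-copy property, an induction along the chain (collapsing stutter steps, and truncating at the first occurrence of $p_j$) shows that every process on the chain has itself already decided when it relays the chain, so the last hop delivers a non-$\bot$ decision field into $p_j$ in a round $\le r_j$ — forcing $p_j$ either to execute the copy line (if that round is $<r_j$) or to have an incoming edge in $\G^{r_j}$ (if it equals $r_j$), both impossible. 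This contradicts feasibility of $\rho$ for $\MASigma$, so at most $n-1$ distinct values are decided. I expect the singleton-root lemma to be the main obstacle, because a process may reach the round-$n$ line while \emph{a priori} still having incoming edges, so excluding that needs the rank induction together with the no-copy property; a lesser subtlety is fixing the precise meaning of ``$R_i^{I_i}$ influences $R_j^{I_j}$'' in \cref{asm:influenceset} and the within-round ordering of message emission versus the transition step.
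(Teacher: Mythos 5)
Your proof is correct and takes essentially the same route as the paper's: both reduce to the case of pairwise distinct inputs and induct over the processes in order of increasing input value to show that, in a run with $n$ distinct decisions, every process must decide its own input as an isolated singleton root component by the round equal to its rank, and then invoke \MASigma{} on these $n$ singleton decision-round roots to rule such a run out. The difference is only organizational: the paper runs the induction forward (in each step either a value disappears from the system, in which case $n-1$-agreement already holds, or $p_i$ is forced to decide in isolation in round $i$), whereas you argue by contradiction and make explicit the final step --- that any influence chain between two decision-round roots would force either a forbidden copy of a decision value or an in-edge in the decision round --- which the paper only asserts with ``no two of which influence each other.''
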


\begin{proof}
Termination (after $n$ rounds) and also validity are obvious, 
so it only remains to show $n-1$-agreement. 
Assume, w.l.o.g., that the processes $p_1,p_2,\dots$ are ordered according to
their initial values $x_1\leq x_2 \leq \dots$, and let $S^k$ be the 
set of different values (in $y_i$ or, if still $y_i=\bot$, in $v_i$) 
present in the system at the 
beginning of round $k\geq 1$; $S^1=\{x_1,\dots,x_n\}$ is the set of 
initial values. Obviously, $S^1 \supseteq S^2 \supseteq \dots$, and
since $n-1$-agreement is fulfilled if $|S^{n+1}|<n$, we only need to
consider the case where all $x_i$ are different.

Consider process $p_1$: If $p_1$ gets a message from some other process 
$p_j$ in round 1, $x_1 \not\in S^2$ as (i) $p_1$ does not decide on its
own value and sets $v_1\geq v_j \geq x_j > x_1$ and (ii) no process that 
receives a message containing $x_1$ from $p_1$ takes on this value.
Hence, $n-1$-set agreement will be achieved in this case. Otherwise, $p_1$ does not
get any message in round 1 and hence decides on $x_1$.

Proceeding inductively, assume that $p_\ell \in P^{i-1}=\{p_1,\dots,p_{i-1}\}$ 
has decided on $x_\ell$ by round $k\leq\ell$, and received only messages
from processes with smaller index in rounds $1,\dots,k-1$ and
no message in round $k$. Now consider
process $p_i$: If $p_i$ gets a message from some process $p_j$ with $j>i$
in some round $k\leq i$, with minimal $k$, before it decides, then $x_i \not\in S^{k+1}$ as 
(i) $p_i$ does not decide on its own value and sets $v_i\geq v_j \geq x_j > x_i$,
(ii) $p_i$ did not send its value to any process in $P^{i-1}$ before their
decisions, and (iii) no process with index larger than $i$ that receives a 
message containing $x_i$ from $p_i$ takes on this value. 
Hence, $n-1$-set agreement will be achieved in this case. Otherwise, if $p_i$ gets a
message from some process $p_\ell\in P^{i-1}$ in round $i$, it will 
decide on $p_\ell$'s decision value $x_\ell$ and hence also cause 
$x_i \not\in S^{i+1}$. In the only remaining case, $p_i$ does not
get any message in round $i$ and hence decides on $x_i$, which completes
the inductive construction of $P^{i}=\{p_1,\dots,p_{i}\}$ for $i<n$.

Now consider $p_n$ in round $n$ in the above construction of $P^n$: 
\cref{asm:influenceset} prohibits
the only case where $n-1$-agreement could possibly be violated, namely, when $p_n$
also decides on $x_n$: During the first $n$ rounds, we would have obtained
$n$ single-node root components no two of which influence each
other in this case. 
Thus, we cannot extend the inductive construction of $P^i$ to $i=n$,
as the resulting execution would be infeasible. \qed
\end{proof}

\subsection{A message adversary for general \ksa{}}

Whereas the set agreement solution introduced in the previous subsection
is simple, it is apparent that \cref{asm:influenceset}
is quite demanding. In particular, it requires explicit knowledge of 
(a bound on) $n$. 
We will now provide a message adversary \MAJINF{k} (\cref{ass:stablestream}), 
which is sufficient for general $k$-set agreement if combined with
\MAdv{n}{3D+\nwbound} (\cref{ass:inter}). We obtained this combination
by adding some additional properties to the necessary
network conditions implied by our impossibility
\cref{thm:D-static-roots,thm:strong-rGST-impossibility}.\footnote{An alternative 
way to derive sufficient network assumptions
for, e.g., $n-2$-set agreement could be to generalize \cref{asm:influenceset}: One
could e.g.\ assume that at least two out of every set of $n-1$ different root components 
consisting of 1 or 2 processes are influenced by a common predecessor root component. 
Whereas this assumption does not require vertex stability of root components, it
effectively ensures that information propagates not slower as in VSRCs. Owing to this fact, 
it also prohibits the existence of the node $q$ in \cref{ass:too-short} with causal 
distance $D$ from $p$ in the root component, thereby causing the proof of 
\cref{thm:D-static-roots} to fail. Working out the details may turn out difficult,
though: After all, unlike single-process roots, larger root components suffer from
the problem that its members cannot always determine whether the root was a VSRC
or not. Influence must hence be conservative, in the sense that it involves
even \emph{potential} 2-process roots.}

To avoid non-terminating (i.e., forever undecided) executions
as predicted by \cref{thm:D-static-roots}, we require the \emph{stable 
interval} constraint guaranteed by the message adversary \MAdv{n}{3D+\nwbound} 
to hold.
The parameter $D$, which can always be safely set to $D=n-1$ 
according to \cref{lem:infprop}, 
allows to adapt the message adversary to the actual dynamic causal diameter 
guaranteed in the VSRCs of a given dynamic network. Note that, since $D>0$, rounds
where no message is received are not forbidden here (in contrast to 
\cref{asm:influenceset}).

In order to also circumvent executions violating
the $k$-agreement property established by \cref{thm:strong-rGST-impossibility},
we introduce the \emph{majority
 influence} constraint guaranteed by the message adversary \MAJINF{k}
given in \cref{ass:stablestream} below.
Like \cref{asm:influenceset} for set agreement,
it guarantees some (minimal) information
flow between sufficiently long-lasting vertex-stable root
components that exist at different
times. As visualized in \cref{fig:model_majorinf}, it implies that the information
available in any such VSRC originates in at most $k$ ``initial'' VSRCs. 
Thereby, it enhances the very limited information propagation
that could occur in our model solely under \MAkd{3D+\nwbound}, which
is too strong for solving $k$-agreement. 

\begin{figure}

  \begin{minipage}[c]{0.5\textwidth}
\centering
\begin{tikzpicture}[->,>=stealth',shorten >=1pt,auto,node distance=2.8cm,
                    semithick,scale=0.25]
  \tikzstyle{state}=[circle,draw,text=black,inner sep=3mm,scale=0.6]
  \tikzstyle{mystate}=[circle,fill=black!20,draw=black,text=black,inner sep=3mm,scale=0.6]
  \tikzstyle{myarrows}=[line width=0.8mm,draw=black]

  \node[state]	       (A)                 {$R_1^{I_1}$};
  \node[mystate,fill=black!40]         (B) [right of=A]       {$R_2^{I_2}$};
  \node[mystate]         (C) [right of=B]       {$R_3^{I_3}$};
  \node[mystate]         (D) [right of=C]       {$R_4^{I_4}$};

  \node[state]         (I) [below left of=B]       {$R_5^{I_5}$};
  \node[state]         (K) [below right of=B]       {$R_6^{I_6}$};

  \node[mystate,fill=black!40]         (F) [below left of=K]       {$R_8^{I_8}$};
  \node[mystate]         (G) [right of=F]       {$R_9^{I_9}$};
  \node[mystate]         (H) [right of=G]       {$R_{10}^{I_{10}}$};
  \node[state]	       (E) [left of=F,]      {$R_{7}^{I_7}$};

  \path (A) edge              node {} (B)
        (B) edge[myarrows]    node {} (C)
        (C) edge[myarrows]    node {} (D)

        (I) edge              node {} (F)
        (I) edge              node {} (B)
        (K) edge              node {} (D)

        (E) edge              node {} (F)
        (F) edge[myarrows]    node {} (G)
	(F) edge              node {} (K)
        (G) edge[myarrows]    node {} (H);

\draw[->,draw=black,line width=0.3mm] ($ (E) + (-2,12) $) -- ($ (E) + (23,12) $)node[below right] {$t$};

\end{tikzpicture}

  \end{minipage}
\hspace*{1cm}%
 \begin{minipage}[c]{0.4\textwidth}
\centering
\caption{\small
VSRCs influencing each other in a run, for $k=2$. Time progresses from left
to right; all \textcolor{black!100}{shaded} nodes are stable for more than $2D$
rounds, white nodes are stable between $D+1$ and $2D$ rounds. Thick arrows
represent majority influence, thin arrows represent (weak) influence. 
At most two \textcolor{black!100}{shaded}
nodes, depicted \textcolor{black!100}{darkly shaded}, 
may exist that are not majority-influenced by another shaded node.}
\label{fig:model_majorinf}
  \end{minipage}
\end{figure}
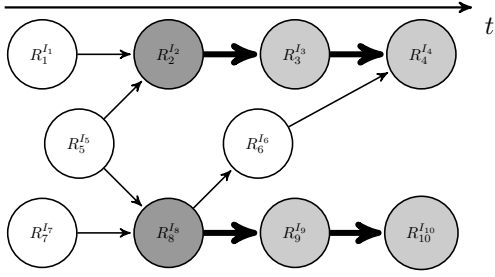

Formally, given some run $\rho$, we denote by $\mathbb{V}_d$ the set of all root components
that are vertex-stable for at least $d$ consecutive rounds in $\rho$.
Let $R_{cur} \in \mathbb{V}_1$ be vertex-stable in $I_{cur}=[r_{cur},s_{cur}]$
and $R_{suc} \in \mathbb{V}_1$ be vertex-stable in $I_{suc}=[r_{suc},s_{suc}]$
with $r_{suc} > s_{cur}$; note that $\mathbb{V}_d \subseteq \mathbb{V}_1$ for every $d\geq 1$.

\begin{definition}[(Weak) Influence]\label{def:influence}
Given any two $R_{cur}^{I_{cur}}, R_{suc}^{I_{suc}} \in \mathbb{V}_1$, we say that
some process $p \in R_{cur}^{I_{cur}}$ \emph{influences}
some process $q \in R_{suc}^{I_{suc}}$ and write
$\influence{p}{q}$ with $\influence{}{} \subseteq \Pi^2$ iff
there exists a causal chain from $p$ to $q$ starting after $I_{cur}$ that ends before
or at the beginning of $I_{suc}$, i.e.,
$cd^{s_{cur}+1}(p, q) \leq r_{suc} - s_{cur}$. In this case,
we also say that $R_{cur}^{I_{cur}}$ (weakly) influences $R_{suc}^{I_{suc}}$ and write
$\influence{R_{cur}^{I_{cur}}}{R_{suc}^{I_{suc}}}$, using the relation
$\influence{}{} \subseteq \mathbb{V}_1^2$ here.
\end{definition}

We will also need stronger notions of influence, which are based on the following
\cref{def:influence-set}:

\newcommand{\IS}{\mbox{IS}}
\begin{definition}[Influence Sets]\label{def:influence-set}
Given any two $R_{cur}, R_{suc} \in \mathbb{V}_1$, their influence set is
$\IS(R_{cur},R_{suc}):=\set{q \in R_{suc} \mid \exists p \in R_{cur} \colon \influence{p}{q}}$.
\end{definition}

The \emph{majority influence} between the nodes in $R_{cur}$ and $R_{suc}$ guarantees that
 $R_{cur}$ influences a set of nodes in $R_{suc}$, which is greater than any
set influenced by VSRCs not already known by the processes in $R_{cur}$
(and greater than or equal to any set influenced by VSRCs already known by the
processes in $R_{cur}$). Majority influence is hence a very natural way to 
discriminate between strong and weak influence between VSRCs, see
\cref{def:stronginfluence} below.

\begin{definition}[Majority influence]\label{def:majorityInfluence}
We say that a VSRC
$R_{cur} \in \mathbb{V}_{2D+1}$ exercises a \emph{majority influence} on a
VSRC $R_{suc} \in \mathbb{V}_{2D+1}$,
denoted $\majinf{R_{cur}}{R_{suc}}$ with
$\majinf{}{} \subseteq \mathbb{V}_{2D+1}^2$, iff
$\forall R \in \mathbb{V}_{D+1}$ with
$\IS(R, R_{cur}) = \emptyset$ it holds that
$|\IS(R_{cur}, R_{suc})| > |\IS(R, R_{suc})|$ and
$\forall R \in \mathbb{V}_{D+1}$ with $\IS(R, R_{cur}) \neq \emptyset$ it holds
that
$|\IS(R_{cur}, R_{suc})| \geq |\IS(R, R_{suc})|$.
\end{definition}

\iftoggle{journal}{
The relation $\majinf{}{}$ has the following properties:

\begin{lemma}[Properties $\majinf{}{}$]\label{lem:propmaj}
 The majority influence relation is
antisymmetric, acyclic and intransitive.
\end{lemma}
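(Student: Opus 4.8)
The plan is to derive antisymmetry and acyclicity from a single structural observation, and to settle intransitivity by exhibiting an explicit run. The observation is that $\majinf{R}{R'}$ can hold only if the stability interval $I_R=[r_R,s_R]$ of $R$ ends strictly before the stability interval $I_{R'}=[r_{R'},s_{R'}]$ of $R'$ begins, i.e.\ $s_R<r_{R'}$. This is immediate from \cref{def:influence} and \cref{def:influence-set}: the influence relation---and consequently the influence sets $\IS(\cdot,\cdot)$ and the relation $\majinfsymb$---is defined only for pairs $R_{cur}^{I_{cur}},R_{suc}^{I_{suc}}$ with $r_{suc}>s_{cur}$, so $\majinf{R}{R'}$ presupposes $s_R<r_{R'}$. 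I would record this as a one-line preliminary claim.

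Given the observation, \textbf{antisymmetry} is vacuous: if both $\majinf{R}{R'}$ and $\majinf{R'}{R}$ held we would get $s_R<r_{R'}\le s_{R'}<r_R\le s_R$, a contradiction; the same shows $\majinf{R}{R}$ never holds, so $\majinfsymb$ is in fact irreflexive and asymmetric. \textbf{Acyclicity} follows by the same chaining: a putative cycle $\majinf{R_1}{R_2},\majinf{R_2}{R_3},\dots,\majinf{R_m}{R_1}$ would yield $s_{R_1}<r_{R_2}\le s_{R_2}<\dots<r_{R_m}\le s_{R_m}<r_{R_1}\le s_{R_1}$, again impossible (the case $m=1$ being the irreflexivity just noted).

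For \textbf{intransitivity} I would construct a feasible run containing three VSRCs $R_1,R_2,R_3\in\mathbb{V}_{2D+1}$ with $\majinf{R_1}{R_2}$ and $\majinf{R_2}{R_3}$ but $\not\majinf{R_1}{R_3}$. The intuition is that the intermediate component $R_2$ can absorb further information between being majority-influenced by $R_1$ and majority-influencing $R_3$, so that $R_1$'s influence on $R_3$ no longer \emph{strictly} dominates every competitor. Concretely I would choose communication graphs in which $R_1$ temporally precedes $R_2$ which temporally precedes $R_3$; all three are vertex-stable for at least $2D+1$ rounds; $R_1$ reaches a strict majority of $R_2$ and $R_2$ reaches a strict majority of $R_3$, so that $\majinf{R_1}{R_2}$ and $\majinf{R_2}{R_3}$ hold; and every causal chain leaving the rounds of $R_1$ and entering the rounds of $R_3$ passes through $R_2$, whence $\IS(R_1,R_3)=\IS(R_2,R_3)$ (``$\subseteq$'' because $R_2$, being a long enough VSRC, propagates $R_1$'s information to all of its members within $D$ rounds by \cref{lem:2}; ``$\supseteq$'' because there is no bypass). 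Then $R_2\in\mathbb{V}_{D+1}$ is itself a comparison component with $\IS(R_2,R_1)=\emptyset$ (it cannot reach the earlier $R_1$) and $|\IS(R_2,R_3)|=|\IS(R_1,R_3)|$, which violates the strict inequality demanded by the first clause of \cref{def:majorityInfluence} for $\majinf{R_1}{R_3}$; hence $\not\majinf{R_1}{R_3}$.

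The main obstacle is making this last part rigorous: one must check that the constructed graph sequence is admissible for the model (the right root structure per round, stability windows of the required length, consistency with $D$ being the dynamic causal diameter), and, crucially, that the majority conditions for $\majinf{R_1}{R_2}$ and $\majinf{R_2}{R_3}$ are verified against \emph{all} VSRCs occurring in the run, not merely $R_1,R_2,R_3$. The cleanest way to control this is to keep the run as sparse as possible, so that the only relevant comparison components are the three VSRCs together with a single trivial short-lived component playing the tie-breaking role above. Once admissibility and this short list of $|\IS(\cdot,\cdot)|$ comparisons are confirmed, all three properties follow as sketched.
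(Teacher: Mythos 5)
Your handling of antisymmetry and acyclicity matches the paper's argument: $\majinf{}{}$ is only defined on temporally ordered pairs of VSRCs, so any $\majinfsymb$-chain is strictly increasing in time and can neither be symmetric nor close into a cycle. That part is fine.

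The gap is in the intransitivity claim. The lemma, as the paper proves it, asserts the \emph{universal} (antitransitive) property: for every triple, $\majinf{R}{\overline{R}}$ and $\majinf{\overline{R}}{\hat{R}}$ together already preclude $\majinf{R}{\hat{R}}$. You read ``intransitive'' as ``not transitive'' and plan to exhibit a single feasible run witnessing a failure of transitivity --- a strictly weaker statement, and one you do not actually complete (you yourself flag the admissibility of the graph sequence and the quantification of the majority conditions over \emph{all} VSRCs of the run as open obstacles). The missing idea is that no construction is needed at all: the strict/non-strict asymmetry between the two clauses of \cref{def:majorityInfluence}, combined with the temporal-ordering observation you already made, settles every triple. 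Indeed, since $R$ precedes $\overline{R}$, we have $\IS(\overline{R},R)=\emptyset$, so if $\majinf{R}{\hat{R}}$ held, its first (strict) clause applied to the comparison component $\overline{R}\in\mathbb{V}_{D+1}$ would force $|\IS(R,\hat{R})| > |\IS(\overline{R},\hat{R})|$; on the other hand, $\majinf{\overline{R}}{\hat{R}}$ applied to the comparison component $R\in\mathbb{V}_{D+1}$ yields $|\IS(\overline{R},\hat{R})| \geq |\IS(R,\hat{R})|$ no matter which of its two clauses $R$ falls under --- a contradiction. Your sketch never exploits this clause structure, which is precisely why it drifts into an (unfinished and unnecessary) counterexample construction instead of the short direct argument, and why, even if completed, it would establish less than the lemma states.
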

\begin{pf}
Let $R$, $\overline{R}$, and $\hat{R}$ be three different VSRCs stable in the intervals
$I$, $\overline{I}$, and $\hat{I}$, resp.
Since the VSRCs $R$ and $\overline{R}\neq R$
are ordered in time according to their round intervals $I$ and $\overline{I}$,
which must be disjoint, no process in $R$ can be influenced by any process in
$\overline{R}$ if $\majinf{R}{\overline{R}}$. 
Hence, $\majinf{\overline{R}}{R}$ cannot hold,
which implies both antisymmetry and, by a transitive application of
this argument, acyclicity.
To prove intransitivity, observe that $\majinf{R}{\overline{R}}$
and $\majinf{\overline{R}}{\hat{R}}$ would imply
$\IS(R,\hat{R}) > \IS(\overline{R},\hat{R})$ if 
$\majinf{R}{\hat{R}}$ also held,
since no process in $R$ can be influenced by any process 
in $\overline{R}$. This contradicts $\IS(\overline{R},\hat{R}) \geq \IS(R,\hat{R})$
required by $\majinf{\overline{R}}{\hat{R}}$,
however.
\end{pf}

\begin{definition}[Strong Influence]\label{def:stronginfluence}
We say that $R_{cur} \in \mathbb{V}_{2D+1}$ \emph{strongly influences} 
$R_{suc} \in \mathbb{V}_{2D+1}$ and write
$\stronginf{R_{cur}}{R_{suc}}$,
where $\stronginf{}{}\subseteq \mathbb{V}_{2D+1}^2$ is the transitive closure of $\majinf{}{}$.
\end{definition}
Note carefully that $\stronginf{}{}$ is antisymmetric by \cref{lem:propmaj}.
}{}

With these preparations, we are now ready to specify a message adversary 
\MAJINF{k} given in \cref{ass:stablestream}. 

\begin{definition}[$k$-majority influence message adversary] \label{ass:stablestream}
The message adversary \MAJINF{k} is the set of all 
sequences of communication graphs $(\Gr)_{r>0}$, where in every run
$\exists K \subseteq \mathbb{V}_{2D+1}$ with $|K| \leq k $ s.t.\
$\forall \overline{R} \in \mathbb{V}_{2D+1}\setminus K ~ \exists R \in \mathbb{V}_{2D+1}$ with $\majinf{R}{\overline{R}}$. 
\end{definition}

Informally speaking, \cref{ass:stablestream} ensures that all but at most $k$
``initial'' VSRCs in $\mathbb{V}_{2D+1}$ are majority-influenced by some earlier VSRC in $\mathbb{V}_{2D+1}$ (see \cref{fig:model_majorinf}). Note carefully, though, that \cref{ass:stablestream}
neither prohibits partitioning of the system in more than $k$ simultaneous VSRCs
nor directly exhibits a $k$-quorum property, cf.\ the well-known quorum failure detector $\Sigma_k$ \cite{BR11:TCS}
that is known to be necessary (but not sufficient!) for solving $k$-set agreement: 
After all, one could e.g.\ choose $k+1=3$
VSRC's $R_2^{I_2}$, $R_4^{I_4}$ and $R_7^{I_7}$ in \cref{fig:model_majorinf} without
finding any pair among those which are majority-influenced by a common predecessor VSRC.
Therefore, \MAJINF{k} alone is too strong for solving \ksa{}. The same is true 
for an alternative to \cref{ass:stablestream} that just ensures a $k$-quorum
(unless acyclicity could be guaranteed as well).

Conversely, if majority influence was replaced by strong influence according to 
\cref{def:stronginfluence}, a quorum property could be easily established: Starting
out from an arbitrary set of $k+1$ $2D+1$-VSRCs, we could go back along the (acyclic)
majority influence relation until we end up in the set $K$ guaranteed by 
\cref{ass:stablestream}. If a $k$-set agreement algorithm 
relied on $2D+1$-VSRCs for decisions, this would guarantee that no more than
$k$ decision values (possibly fabricated in the ``initial'' $2D+1$-VSRCs) can be produced. 
A message adversary equivalent to \cref{ass:stablestream} with strong majority
would be fairly weak, however.

These observations indicate that \MAdvI{k}{3D+\nwbound} is indeed reasonably close 
to the $k$-set agreement solvability border. 

\smallskip

We conclude this section with some straightforward stronger assumptions,
which also imply \cref{ass:stablestream} and can hence be handled by
the algorithm introduced in \cref{sec:algorithms}:
\begin{compactenum}
\item[(i)] Replacing majority influence %
in \cref{def:majorityInfluence} by majority intersection
$\lvert R_{suc} \cap R \rvert < \lvert R_{suc} \cap R_{cur} \rvert$,
which is obviously the strongest form of influence.
\item[(ii)] Requiring $\lvert R_{suc} \cap R_{cur} \rvert >
\lvert R_{suc} \rvert/2$, i.e., a majority intersection
with respect to the number of processes in $R_{suc}$.
This could be interpreted as a changing VSRC, in the
sense of ``$R_{suc}$ is the result
of changing a minority of processes in $R_{cur}$''. Although
this restricts the rate of growth of VSRCs in a run,
it would apply, for example, in case of random graphs where the
giant component has formed \cite{DMS01,JKLP93}.
\end{compactenum}

\subsection{Gracefully degrading consensus/\ksa{}}
\label{sec:algorithms}

In this section,
we provide a $k$-set agreement algorithm and prove that
it works correctly under the message adversary \MAkI, i.e., the conjunction of 
\cref{ass:inter,ass:stablestream}.
Note that the algorithm needs to know $D$, but neither $n$ nor $\nwbound$. It
consists of a ``generic'' $k$-set agreement algorithm, which
relies on the network approximation algorithm of \cref{sec:approxalgo} for locally
detecting vertex-stable root components and a function $\lambdaabstraction$
that extracts candidate decision values from history
information. Our implementation of $\lambdaabstraction$ uses a vector-clock-like
mechanism for maintaining ``causally consistent'' history information, which
can be guaranteed to lead to proper candidate values thanks to \MAkI.

In sharp contrast to classic $k$-set agreement algorithms, the algorithm is
\emph{$k$-uniform}, i.e., the parameter $k$ does not appear in its code.
Rather, the number of system-wide decision values is determined by the number of
(certain) $2D+1$-VSRCs occurring in the particular run.
As a consequence, if the network partitions into $k$ weakly connected components, 
for example,\footnote{It 
is important to note that the network properties
required by our algorithm to reach $k$ decision values
need \emph{not} involve $k$ isolated partitions: Obviously, $k$ isolated partitions
in the communication graph also imply $k$ root components, but $k$ root
components do not imply a partitioning of the communication graph
into $k$ weakly connected components --- one process may still be 
connected to several components.}
all processes in a component obtain the same
decision value. On the other hand,
if the network remains well-connected, the algorithm
guarantees a unique decision value system-wide. 

Our algorithm is in fact not only $k$-uniform but even worst-case $k$-optimal,
in the sense that (i) it provides at most $k$ decisions system-wide in all
runs that are feasible for \MAkI, and (ii) that there is at least one feasible 
run under \MAkI\ where no correct \ksa{} can guarantee less than $k$ decisions.
(i) will be proved in \cref{sec:corproof}, and (ii) follows immediately from
the fact that a run consisting of $k$ isolated partitions is also feasible
for \MAkI. Our algorithm can hence indeed be viewed as a
consensus algorithm that degrades gracefully to $k$-set agreement, for some $k$
determined by the actual network properties.

\def\lock{\ensuremath{\mbox{\tt lock}}}
\def\term{\ensuremath{\mbox{\tt decide}}}
\def\r0{1}
\def\r-1{0}
\def\rs{rs}
\def\lockround{\ensuremath{\ell}}
\def\maxID{\ensuremath{\texttt{maxID}}}
\def\NrS{\ensuremath{\#_{\texttt{Supt}}}} %
\def\SS{\ensuremath{\texttt{Supt}}} %
\def\redate{\ensuremath{\texttt{redate}}}
\def\last{\ensuremath{\texttt{ last}}}
\def\maj{\ensuremath{\mbox{mfrq}}}
\def\myRoot{\texttt{myRoot}}
\def\rStart{\ensuremath{r_{\mbox{\textit{start}}}}}
\def\hot{\tau_{\mathtt{create}}}
\def\newLock{\mathtt{newLock}}
\def\hotmaj{\ensuremath{\maj_{\mathtt{latest}}}}
\def\decision{\mathtt{decision}}

\medskip

Like the consensus algorithm in \cref{sec:consensus},
our \ksa{} algorithm consists of two reasonably independent parts, the 
network approximation algorithm \cref{alg:approx}
and the $k$-set agreement core algorithm given in \cref{alg:ksa}.
As in \cref{sec:consensuscore}, we assume that the complete round $r$ computing step
of the network approximation algorithm is executed just
before the round $r$ computing step of the $k$-set algorithm,
and that the round~$r$ message of the former is piggybacked
on the round~$r$ message of the latter.
Recall that this implies that the round $r$ computing
step of the $k$-set core algorithm, which terminates round~$r$,
can already access the result
of the round $r$ computation of the network approximation algorithm,
i.e., its state at the end of round $r$.

\begin{algorithm}[h]
\begin{algorithmic}[1]
\setcounter{ALC@unique}{0}
\scriptsize
\setlinenosize{\scriptsize}
\setlinenofont{\tt}
\item[] {\textbf{Variables and Initialization:}}
\STATE $\hist_i[*][*] := \emptyset$ /* $\hist_i[j][r]$ holds $p_i$'s estimate 
of the locks learned by $p_j$ in round $r$ */
\STATE $\hist_i[i][0] := \set{(\set{p_i}, x_i, \r-1)}$ /* virtual first lock 
$(V(R):=\set{p_i}, v:=x_i,
\hot:=0)$ at $p_i$ */
\label{line:initHist}
\STATE $\lockround := \bot$ // most recent lock round, $\bot$ if none
\STATE $\decision_i := \bot$ // $p_i$'s decision, $\bot$ if undecided

\item[] {\textbf{ Emit round r messages:}}
\STATE send $\li{\hist_i, \decision_i}$ to all neighbors \label{line:sendHistory}

\item[] {\textbf{ Receive round r messages:}}
\STATE for all $p_j$ in  $p_i$'s neighborhood $\N^r_{p_i}$, receive
$\li{\hist_j, \decision_j}$

\item[] {\textbf{ Round r computation:}}
\IF {$\decision_i = \bot$}
\IF {received any message $m$ containing $m.\decision \neq \bot$}
  \STATE decide $m.\decision$ and set $\decision_i := m.\decision$ \label{line:adoptdec}
\ELSE
\item[] // update $\hist_i$ with $\hist_j$ received  from neighbors
\FOR{$p_j \in \N^r_{p_i}$, where $p_j$ sent $\hist_j$}
  \STATE $\hist'_i := \hist_i$ // remember current history
  \FOR{all non-empty entries $\hist_j[x][r']$ of $\hist_j$, $x \neq i$}
    \STATE $\hist_i[x][r'] := \hist_i[x][r'] \cup \hist_j[x][r']$ \label{line:remoteHistoryUpdate}
  \ENDFOR
  \item[] // locally add all newly learned locks:
  \STATE $\hist_i[i] := \hist_i \setminus \hist'_i$ \label{line:learnLock}
\ENDFOR

\item[] // perform state transitions (undecided, locked, decided):
\STATE $\myRoot := \stableSCC(r-2D, r-D)$
\IF {$\lockround = \bot$ and $\myRoot \neq \emptyset$} \label{line:lock?}
    \STATE $\lockround := r-2D$
    \STATE $\lock := \getred(\myRoot, \lockround)$ \label{line:setLock}
    \STATE $\hist_i[i][r] := \hist_i[i][r] \cup \lock$ // create new lock
             \label{line:addNewLock}
\ELSIF {$\lockround \neq \bot$ and $\myRoot = \emptyset$}
    \STATE $\lockround := \bot$ // release unsuccessful lock \label{line:releaseLock}
\ELSIF {$\lockround \neq \bot$ and
  $\stableSCC{[\lockround, \lockround +2D] \neq \emptyset}$}
  \label{line:decide?}
    \STATE decide $\lock.v$ and set $\decision_i := \lock.v$ \label{line:setDecision}
\ENDIF
\ENDIF
\ENDIF

\EMPTY
\FUNC{$\getred(R, r')$}
  \STATE Let $S$ be the multiset $\bigcup_{p_j \in R, r'' \le r'} \hist_i[j][r'']$ \\
         Let $\maj(S)$ be the set of the most frequent elements in $S$
         \label{line:bounded-history}
  \STATE Let
      $\hotmaj(S) := \set{x \in \maj(S) \mid \forall y \neq x \in \maj(S) \colon x.\hot > y.\hot}$
  \IF{$|\hotmaj(S)| = 1$} \label{line:hottestMajority?}
      \STATE Let $v$ be $s.v$ of the single element $s \in \hotmaj(S)$
      \STATE $\newLock := (R, v, r)$
          \label{line:createLockMajority}
  \ELSE \label{line:ambiguousMajority?}
     \STATE $\newLock := (R, \max_{s \in S}\set{s.v}, r)$ \label{line:createLockMaximum} // deterministic choice
  \ENDIF
  \RETURN $\newLock$ \label{line:returnNewLock}
\ENDFUNC

\end{algorithmic}

\caption{$k$-uniform $k$-set agreement algorithm, code for process $p_i$}\label{alg:ksa}

\end{algorithm}

The general idea of our core $k$-set agreement algorithm in
\cref{alg:ksa} is to generate new decision values
only at members of $2D+1$-VSRCs, and to disseminate those values throughout
the remaining network. 
Using the network approximation $A_{p_i}$, our algorithm causes process $p_i$
to make a transition from the initially \emph{undecided} state to a \emph{locked} state
when it detects some minimal ``stability of its surroundings'', namely, its
membership in some $D+1$-VSRC $D$ rounds in the past (\cref{line:lock?}).
Note that the latency of $D$ rounds is inevitable here, since information propagation
within a $D+1$-VSRC may take up to $D$ rounds due to $D$-boundedness, as 
guaranteed by item~(ii) in \cref{ass:inter}.
If process $p_i$, while in the locked state, observes some period of stability
that is sufficient for locally inferring a consistent view among \emph{all}
VSRC members (which occurs when the $D+1$-VSRC has
actually extended to a $2D+1$-VSRC), $p_i$ can safely make a transition to the
\emph{decided} state (\cref{line:setDecision}). The decision value is
then broadcast in all subsequent rounds, and adopted by any not-yet decided
process in the system that receives it later on (\cref{line:adoptdec}).
Note that \MAdv{n}{3D+\nwbound} (\cref{ass:inter}) guarantees that
this will eventually happen.

Since locking is done optimistically, however, it may also happen that the $D+1$-VSRC
does not extend to a $2D+1$-VSRC (or, even worse, is not recognized to have done so by
some members) later on. In this case, $p_i$ makes a transition
from the locked state back to the undecided state (\cref{line:releaseLock}).
Unfortunately, this possibility has severe consequences:
Meachanisms are required that, despite possibly inconsistently perceived unsuccessful
locks, ensure both (a) an \emph{identical} decision value among all members of a $2D+1$-VSRC
who successfully detect this $2D+1$-VSRC and thus reach the decided state, and (b) no
more than $k$ different decision values originating from different $2D+1$-VSRCs.

Both goals are accomplished by a particular selection of the
decision values (using function $\getred$), which ultimately relies on an intricate utilization the network properties guaranteed by our message adversary \MAkI (\cref{ass:inter,ass:stablestream}): Our algorithm uses a suitable
\emph{lock
history} data structure for this purpose, which is continuously exchanged and updated among
all reachable processes. It is used to store sets of \emph{locks} $L=(R, v, \hot)$,
which are created by every process that enters the locked state: $R$ is the vertex-set of the
detected $D+1$-VSRC, $v$ is a certain proposal value (determined as explained below),
and $\hot$ is the round when the lock is created.

In more detail, the lock history at process $p_i$ consists of an array $\hist_i[j][r]$ that holds $p_i$'s (under)approximation of the locks process $p_j$ got to know in round $r$.
It is maintained using the following simple update rules:
\begin{compactitem}
\item[(i)] \emph{Local lock creation:} Apart from the single \emph{virtual} lock
$(\set{p_i}, x_i, 0)$ created initially by $p_i$ in \cref{line:initHist} (which
guarantees a non-empty lock history right from the beginning),
all regular locks created upon $p_i$'s transition from the undecided to the locked
state are computed by the function $\getred$ in \cref{line:setLock}.
Any lock locally created at $p_i$ in round $r$ (that is, in the round~$r$ computing step of
the core $k$-set agreement algorithm that terminates round $r$) is of course put
into $\hist_i[i][r]$.

\item[(ii)] \emph{Remote lock learning:} Since all processes exchange their lock histories,
$p_i$ may learn about some lock $L$ created by process $p_x$ in round $r'$ from
the lock history $\hist_j[x][r']$ received from some $p_j$ later on. In this case, $L$ is
just added to $\hist_i[x][r']$ (\cref{line:remoteHistoryUpdate}).

\item[(iii)] \emph{Local lock learning:} In order to ensure that the lock histories
of all members of a $2D+1$-VSRC are eventually consistent, which will finally ensure
identical decision values, \emph{every} newly
learned remote lock $L \in \hist_i[x][r']$ obtained in (ii) is also added to
$\hist_i[i][r]$.
\end{compactitem}
Note that the update rules (i)+(ii) resemble the ones of vector clocks \cite{Mat89}.

Clearly, $\hist_i[i][r']$ will always be accurate for current and past rounds
$r'\leq r$, while $\hist_i[j][r']$ may not always be up-to date, i.e., may lack some
locks that are present in $\hist_j[j][r']$.
Nevertheless, if $p_i$ and $p_j$ are members of the same $2D+1$-VSRC $R^I$ with $I=[r-2D,r]$,
\cref{def:D-bounded-VSRC} ensures that $p_i$ and $p_j$ have consistent histories
$\hist_i[j][r']$ and $\hist_j[i][r']$ at latest by (the end of) round $r'+D$,
for any $r' \in [r-2D, r-D]$.
Hence, if $p_i$ creates a new lock $L$ when it detects, in its round $r$ computing step,
that it was part of a $D+1$-VSRC that was stable from $r-2D$ to $r-D$, it is
ascertained that any other member $p_j$ will have locally learned
the same lock $L$ in the same round $r$, provided that the $D+1$-VSRC in fact extended
to a $2D+1$-VSRC.

The resulting consistency of the histories is finally exploited by the function $\getred(R,\ell)$, which
computes (the value of) a new local lock (\cref{line:setLock}) created in round $r$. As its
input parameters, it is provided with the members $R$ of the detected $D+1$-VSRC and its starting
round $\ell=r-2D$.
$\getred$ first determines a multiset $S$, which contains all locks locally known to the
members $p_j \in R$ by round $r-2D$ (\cref{line:bounded-history}). Note that the multiplicity
of some lock $L=(R',v,r')$ in $S$ is just the number of members of $R$ who got to know $L$
by round $r-2D$, which is just $|\IS(R',R)|$ according to \cref{def:influence-set}.
In order to determine a proper value for the new lock to be computed by $\getred$, we
exploit the fact that \MAJINF{k} (given in \cref{ass:stablestream}) ensures 
majority influence according to \cref{def:majorityInfluence}:
If the set $\hotmaj(S)$, containing the most frequent locks in $S$ with the same
maximal lock creation round, contains a single lock $L$ only, its value $L.v$
is used. Note that the restriction to the maximal lock creation date automatically filters unwanted,
outdated locks that have merely been disseminated in preceding $2D+1$-VSRCs, see (1) below.
Otherwise, i.e., if $\hotmaj(S)$ contains multiple candidate locks,
a consistent deterministic choice, namely, the maximum
among all lock values in $S$, is used (\cref{line:createLockMaximum}). As a consequence,
at most $k$ different decision values will be generated system-wide.

\smallskip

Given the various mechanisms employed in our algorithm and their complex interplay,
the question about a more light-weight alternative solution that omits some of these
mechanisms might arise. We will proceed with some informal arguments that support
the necessity some of the pillars of our solution, namely, (1) the preference of most recently
created locks in $\getred$, (2) the creation of a new lock at every transition to the
locked state, and finally (3) the usage of an a priori unbounded
data structure $\hist_i$.
Although these arguments are also ``embedded'' in the correctness proof
in the following section, they do not immediately leap to the eye and are
hence provided explicitly here.
\begin{compactitem}
  \item[(1)] The preference of most recently created lock in $\getred$, which is
done by selecting the set $\hotmaj(S)$ in \cref{line:hottestMajority?}, defeats
the inevitable ``amplification'' of the number of processes that got to know some
``old'' lock: All members of a $2D+1$-VSRC have finally learned \emph{all} ``old''
locks that were only known to \emph{some} of its members at the starting round of
the VSRC initially. In terms of multiplicity in $S$, this would falsely make any
such old lock a preferable alternative to the most recently created lock.

 \item[(2)] Instead of creating new locks at every newly detected $D+1$-VSRC,
       it might seem sufficient to simply update
       the creation time of an old lock that (dominantly) influences a newly detected
VSRC. This is not the case, however:
       Consider a hypothesized algorithm where new locks are only generated if no
       suitable old
       locks can be found in the current history, and assume
       a run where two VSRCs with vertex sets
       $R_1 = \set{p_1,p_2}$ and
       $R_3 = \set{p_1,p_2}$ that
       are both stable for $D+1$ rounds and two root components $R_2 = \set{p_1,p_3}$
       and $R_4 = \set{p_1,p_3}$ that are stable for $2D+1$ rounds are formed.
       Let these VSRCs be such that $R_i$ is formed before $R_j$ if $i<j$ and
       let there be no influence among the processes of $\set{p_1,p_2,p_3}$, apart
       from their influence on each other when they are members of the same VSRC.
       First, let the processes of $R_1$ lock on some old lock $L'$.
       Then, assume that the processes of $R_2$ lock
       on some lock\footnote{This could occur, e.g., because $L$ is known to
       $p_3$ and has a more
       recent creation time than $L'$} $L \neq L'$,
       a lock not known in $R_1$.
       Since $R_3 = \set{p_1,p_2}$, if $R_3$ is sufficiently well connected,
       $p_1$ might lock on
       $L'$ in $R_3$, because $L'$ is known to both $p_1$ and $p_2$ while $L$ is
       known merely to $p_1$ at the start of $R_3$.
       Subsequently, this results in the situation in $R_4$ where there is
       neither a clear majority ($L'$ and $L$ are known to both members of $R_4$)
       nor a clear most recently adopted lock (for $p_1$, it seems that $L'$ is the most
       recent lock, while for $p_3$, it seems that $L$ is more recent).
       Consequently, in $R_4$, it is not clear whether to lock on $L.v$ or on $L'.v$.
       Nevertheless, the processes of $R_4$ should be able to determine that they
       must lock on $L$ and not on $L'$, since $\majinf{R_2}{R_4}$ holds in our
       example: $|\IS(R_1,R_2)| = 1, |\IS(R_1,R_4)|=2, |\IS(R_2,R_4)|=2$ and
       $|\IS(R_3, R_4)| = 1$.
       We can therefore conclude that merely adopting old locks is insufficient.

 \item[(3)] Since the stabilization round $\rGST$, as implied by
        \cref{ass:inter}, may be delayed arbitrarily, an unbounded
number of $2D+1$-VSRCs can occur before $\rGST$. Since any of those
might produce a critical lock, in the sense of exercising a majority
influence upon some later $2D+1$-VSRC, no such lock can safely be deleted
from $\hist_i$ of any $p_i$ after bounded time.

\end{compactitem}

   \subsection{Correctness Proof}\label{sec:corproof}
   
In this final subsection, we will prove the following \cref{thm:kset}:

\begin{theorem} \label{thm:kset}
\cref{alg:ksa} solves $k$-uniform $k$-set agreement in a dynamic network under
the message adversary \MAkI, which is the conjunction of \cref{ass:inter}
and \cref{ass:stablestream}.
\end{theorem}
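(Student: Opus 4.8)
The plan is to establish, in turn, the three properties of \cref{def:ksa} (validity, termination, $k$-agreement), working throughout with the joint-execution convention of \cref{sec:consensuscore}, so that \cref{cor:stable2root} and \cref{cor:allrootdec} are available verbatim. \emph{Validity} is immediate by induction on the round in which a lock is created: the only ``primitive'' locks are the virtual ones $(\set{p_i},x_i,0)$ of \cref{line:initHist}, and $\getred$ always returns a lock whose value is either $s.v$ for some $s\in S$ or $\max_{s\in S}\set{s.v}$, hence again the value of an already-existing lock; since a process sets $\decision_i$ only to $\lock.v$ (\cref{line:setDecision}) or to a received decision (\cref{line:adoptdec}), every decision value is some $x_j$.

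For \emph{termination} I would use the window $J=[r_{ST},r_{ST}+3D+\nwbound-1]$ of \cref{ass:inter}. Fix a member $p$ of one of its (at most $k$) vertex-stable root components $R_i^J$. By \cref{cor:allrootdec} applied to a sub-interval of $J$ of length exceeding $D$, $p$ eventually evaluates $\myRoot\neq\emptyset$, moves to the \emph{locked} state with $\lockround$ set to the start of that stable interval, never takes the release branch for the rest of $J$, and --- once $\stableSCC([\lockround,\lockround+2D])$ is detected non-empty, again by \cref{cor:allrootdec} --- decides in \cref{line:setDecision} while still inside $J$; the round count is as in the proof of \cref{thm:consensus}. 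From then on $p$ broadcasts its decision forever, and by $\nwbound$-network-boundedness of $S^J$ (\cref{def:E-network-vertex-stable-roots}) every process is causally reached by some member of $\bigcup_i R_i^J$ within $\nwbound$ rounds and adopts a decision in \cref{line:adoptdec}; a process that had already decided earlier in a spurious root component is of course done too.

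\emph{$k$-agreement} is the substance of the theorem, and I would organise it around three claims. (i) \emph{Decisions come from $2D+1$-VSRCs:} a transition into the decided state at $p$ via \cref{line:setDecision} forces $\stableSCC([\lockround,\lockround+2D])\neq\emptyset$, which by \cref{lem:Cpr2root} and the definition of $\stableSCC$ witnesses a genuine vertex-stable root component $R\in\mathbb{V}_{2D+1}$ over $[\lockround,\lockround+2D]$ with $p\in R$, and the value decided is the value of the lock $\getred(R,\lockround)$, built from $\mathbb{V}_{D+1}$-locks. (ii) \emph{Intra-VSRC consistency:} $D$-boundedness (item~(ii) of \cref{ass:inter}) makes the part of the $\hist$-arrays that $\getred$ reads (entries for rounds $\le\lockround$) ground-truth, hence identical at all members of $R$ by the round $\getred$ runs, so all members that decide produce the same value, and every other member is reached by their decision message and adopts it; thus each $2D+1$-VSRC contributes exactly one value. (iii) \emph{Inter-VSRC inheritance:} if $\majinf{R_{cur}}{R_{suc}}$ with $R_{cur},R_{suc}\in\mathbb{V}_{2D+1}$, the value produced for $R_{suc}$ equals that produced for $R_{cur}$. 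Granting (i)--(iii), $k$-agreement follows from \cref{ass:stablestream}: chasing the $\majinf{}{}$-predecessor of each $2D+1$-VSRC outside the set $K$, we land --- since $\majinf{}{}$ is acyclic (its intervals are disjoint and time-ordered, \cref{lem:propmaj}) --- in $K$, so at most $|K|\le k$ distinct values are ever decided; validity already guarantees these are input values.

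The main obstacle is claim~(iii), where \cref{ass:stablestream} must be matched to the operational behaviour of $\getred$. The bridging fact is that, in the multiset $S$ computed by $\getred(R_{suc},\lockround_{suc})$, the lock created by any $R\in\mathbb{V}_{D+1}$ occurs with multiplicity $|\IS(R,R_{suc})|$; this rests on a $\hist$-invariant showing, via the ``remote'' and ``local'' lock-learning rules, that $\hist_i[j][\le\lockround_{suc}]$ records (once per influenced member) exactly the locks reaching $p_j\in R_{suc}$ by the start of $R_{suc}$, i.e.\ the processes of $\IS(R,R_{suc})$ in the sense of \cref{def:influence,def:influence-set}. With this, \cref{def:majorityInfluence} gives $|\IS(R_{cur},R_{suc})|>|\IS(R,R_{suc})|$ for every $R$ that does \emph{not} influence $R_{cur}$, and $\ge$ it for every $R$ that does; but a lock of the latter type was created strictly before $R_{cur}$'s lock (its interval precedes $R_{cur}$'s and was absorbed into every member of $R_{cur}$ by local lock learning before that lock was made), so its creation round $\hot$ is strictly smaller. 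Hence $\maj(S)$ equals $\set{L_{cur}}$ or contains, besides $L_{cur}$, only strictly older locks, whence $\hotmaj(S)=\set{L_{cur}}$ and \cref{line:createLockMajority} returns $L_{cur}.v$. Two subsidiary difficulties I anticipate are (a) making claim~(i) robust against \emph{optimistic} locks that are later released in \cref{line:releaseLock} --- one must argue that a released lock never pollutes a genuine $2D+1$-VSRC's computation --- and (b) the bookkeeping that the interval and vertex set recorded in any lock lie inside an actual vertex-stable interval with that vertex set, which is precisely what \cref{lem:Cpr2root} and \cref{lem:root2Cpr} provide.
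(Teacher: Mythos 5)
Your proposal is correct and takes essentially the same route as the paper's own proof: validity by tracing lock values back to initial values, termination via the stable window of \cref{ass:inter} plus $\nwbound$-network-bounded dissemination of decide messages, and $k$-agreement by combining intra-VSRC history consistency from $D$-boundedness (\cref{lem:D-consistent-history}), the influence-to-history bridging fact (\cref{lem:historyApproximation}), and the majority-influence/most-recent-creation-time argument for value inheritance along $\majinfsymb$, before chasing that acyclic relation back to the set $K$ of \cref{ass:stablestream}. The differences are only cosmetic (e.g.\ you phrase the bridging fact as the multiplicity of a lock in $S$ being exactly $|\IS(R,R_{suc})|$, where the paper only uses the corresponding one-sided bounds), so nothing substantive separates the two arguments.
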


The proof consists of a sequence of technical lemmas, which will
finally allow us to establish all the properties of $k$-set
agreement given in \cref{sec:model}. First,
validity according to \cref{def:ksa} is straightforward to see, as only the values of locks
are ever considered as decisions (\cref{line:setDecision}).
Values of locks, on the other hand, are initialized to the initial value of
a process (\cref{line:initHist}) and later on always have values of previous
locks assigned to them
(\cref{line:createLockMajority,line:createLockMaximum}).
Note that the claimed
$k$-uniformity is obvious, as the code of the algorithm does
not involve $k$.

To establish termination, we start with some simple properties
related to setting locks at all members of vertex stable root
components.

\begin{lemma}\label{lem:minlock}
Apart from processes adopting a decision sent by another process,
only processes part of a vertex stable root with interval length greater than
$D$ (resp.\ $2D$) lock (resp.\ decide).
\end{lemma}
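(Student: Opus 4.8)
The plan is to trace, in \cref{alg:ksa}, the three lines that may change a process' lock/decision status --- adopting a received decision (\cref{line:adoptdec}, which the lemma explicitly excludes), locking (\cref{line:setLock}), and deciding (\cref{line:setDecision}) --- and to reduce both remaining cases to a single auxiliary fact: \emph{if a call $\stableSCC(I)$ issued by $p$ in its round-$r$ computation returns a non-empty set $R$ and every round of $I$ is strictly smaller than $r$, then $R^I$ is an $I$-VSRC whose member set $R$ contains $p$}. Granting this, the lemma is immediate. Process $p$ executes \cref{line:setLock} in round $r$ only inside the branch guarded (\cref{line:lock?}) by $\myRoot=\stableSCC([r-2D,r-D])\neq\emptyset$; this interval has length $D+1>D$, and since $D>0$ all of its rounds are $<r$, so the auxiliary fact produces an $[r-2D,r-D]$-VSRC containing $p$. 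Likewise $p$ executes \cref{line:setDecision} only inside the branch guarded (\cref{line:decide?}) by $\lockround\neq\bot$ and $\stableSCC([\lockround,\lockround+2D])\neq\emptyset$; here $\lockround=r'-2D$ for the round $r'$ in which $p$ most recently executed \cref{line:setLock} without a subsequent \cref{line:releaseLock}, and $r'<r$ because the three-way conditional beginning at \cref{line:lock?} fires at most one of its branches per round. Thus $\lockround+2D=r'<r$, the queried interval $[\lockround,\lockround+2D]=[r'-2D,r']$ has length $2D+1>2D$ and lies entirely before $r$, and the auxiliary fact yields an $[r'-2D,r']$-VSRC containing $p$.

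It remains to establish the auxiliary fact. Suppose $\stableSCC(I)$ returns $R\neq\emptyset$ at $p$ in round $r$, with all of $I$ before $r$. By \cref{cor:stable2root}, $p\in R^t$ for every $t\in I$, so it suffices to show that the root components $R^t$, $t\in I$, all consist of the same vertex set. By construction of $\stableSCC$ (the test in \cref{line:Cps} and the following equality check), a non-empty return means that for every $t\in I$ the graph $C_p|t=A_p|t$ is a non-empty strongly connected subgraph of $\G^t$ (\cref{lem:AsubsetG}.(i)) and that $V(C_p|t)$ equals one and the same set $R$ for all $t\in I$. Fix $t\in I$: as already exploited in the proof of \cref{lem:Cpr2root}, \cref{lem:AsubsetG}.(ii) forces $V(C_p|t)$ to omit no incoming $\G^t$-edge at any of its vertices, i.e.\ to be closed under taking in-neighbours in $\G^t$; together with strong connectivity and $p\in R^t$ this pins $V(C_p|t)$ down to exactly $R^t$ (every node of the root $R^t$ reaches $p$ and is hence forced in by the closure property, while every node of $C_p|t$ reaches $p$ inside $\G^t$ and hence cannot lie outside the root). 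Consequently $V(R^t)=R$ for every $t\in I$, so by \cref{def:VSRC} there is an $I$-VSRC with member set $R\ni p$, as claimed.

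The only genuinely non-routine step is this last identification $V(C_p|t)=R^t$: it is precisely what upgrades ``$p$ sits in a root component for $D{+}1$ (resp.\ $2D{+}1$) consecutive rounds'' to ``$p$ sits in one \emph{vertex-stable} root component'', and it relies on both halves of \cref{lem:AsubsetG} --- part~(i) to keep the locally detected SCC inside $\G^t$, and part~(ii) to force it to be the full root rather than a proper sub-SCC --- since without it, equality of the observed vertex sets across $I$ would not transfer to the true root components. Everything else (which lines may write $\lockround$ and $\decision_i$, and the arithmetic $\lockround+2D=r'$) is bookkeeping.
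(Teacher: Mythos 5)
Your proof is correct and takes essentially the same route as the paper's: the paper's own (two-line) argument just observes that the guards in \cref{line:lock?} and \cref{line:decide?} can only hold when $\stableSCC$ returns a non-empty set over an interval of length $D+1$ resp.\ $2D+1$, and then invokes \cref{cor:stable2root} to conclude membership in a corresponding VSRC. Your extra steps---the interval arithmetic placing the queried rounds strictly before the current round, and the identification $V(C_p|t)=R^t$ that upgrades round-wise root membership to vertex-stability---merely spell out what the paper compresses into that single citation of \cref{cor:stable2root} (which rests on \cref{lem:AsubsetG,lem:Cpr2root}).
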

\begin{proof}
The if-statement in \cref{line:lock?} (resp. \cref{line:decide?}) is
evaluated to true only if $\stableSCC$ detects a stable member set $R$
in some interval $I$ of length $D+1$ (resp.\ of length $2D+1$) or larger, which implies by
\cref{cor:stable2root} that $R^I$ is indeed a $D+1$-VSRC (resp.\ $2D+1$-VSRC). \qed
\end{proof}

\begin{lemma}\label{lem:maxlock}
All processes part of a vertex stable root $R^{[a,b]}$
with interval length greater than $2D$, which did not start
already before $a$, lock, i.e. set $\ell:=a$, in round $a+2D$.
\end{lemma}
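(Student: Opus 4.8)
The plan is to read off, from the code of \cref{alg:ksa}, what the round-$(a+2D)$ computing step does at an arbitrary process $p\in R$, using the two detection guarantees for the network approximation algorithm established in \cref{cor:stable2root,cor:allrootdec}.

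The first step is to apply \cref{cor:allrootdec} to the subinterval $I'=[a,a+2D]$ of $[a,b]$: since $b-a+1>2D$ we have $I'\subseteq[a,b]$, so $R^{I'}$ is again a vertex-stable root component, it is \goodD\ by item~(ii) of \cref{ass:inter}, and $|I'|=2D+1>D$. Hence \cref{cor:allrootdec} gives that from the end of round $a+2D$ onwards a call to $\stableSCC([a,a+D])$ returns $R$ at every process of $R$. In particular, in its round-$(a+2D)$ computing step $p$ evaluates $\myRoot:=\stableSCC(r-2D,r-D)$ with $r=a+2D$, that is $\stableSCC([a,a+D])$, and obtains $\myRoot=R\neq\emptyset$.

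The second step is to show that $p$ is still undecided with $\lockround=\bot$ when round $a+2D$ begins; this is where the hypothesis that $R^{[a,b]}$ did not start before $a$ enters, which I take to mean that no member of $R$, and in particular not $p$, belongs to any root component $R^x$ with $x<a$. By \cref{cor:stable2root}, had $p$ executed the locking branch (\cref{line:lock?} and \cref{line:setLock}) in some round $r$, then $\stableSCC([r-2D,r-D])\neq\emptyset$ would have held at $p$ in round $r$, forcing $p\in R^x$ for every $x\in[r-2D,r-D]$; for $r<a+2D$ this interval contains the round $r-2D<a$, contradicting the hypothesis. Hence $p$ has not locked in any round $\le a+2D-1$, so $\lockround_p=\bot$ up to the start of round $a+2D$, and $p$ cannot have decided at \cref{line:decide?} either, since that branch requires $\lockround\neq\bot$. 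The only remaining way for $p$ to already be decided would be through an adopted decision (\cref{line:adoptdec}); this is excluded as in the termination proof, using \cref{lem:minlock} and the detection latency of \cref{cor:allrootdec} to see that no process can reach \cref{line:decide?}, and hence start disseminating a decision, before round $a+2D$; that is, the statement is to be understood for the first VSRC long enough to permit a decision.

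Combining the two steps, in its round-$(a+2D)$ step $p$ has $\decision_i=\bot$, receives no decision message, and finds $\lockround=\bot$ together with $\myRoot=R\neq\emptyset$; it therefore enters the branch of \cref{line:lock?} and sets $\lockround:=(a+2D)-2D=a$, that is $\ell:=a$, which is exactly the claimed locking in round $a+2D$. I expect the only real obstacle to be making the ``did not start before $a$'' clause precise enough to simultaneously rule out an earlier self-lock (via \cref{cor:stable2root}) and an earlier adopted decision (via \cref{lem:minlock}); the detection part is a direct instantiation of \cref{cor:allrootdec}.
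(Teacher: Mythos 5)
Your first step (detection) is essentially the paper's own argument: item~(ii) of \cref{ass:inter} gives $D$-boundedness, and \cref{cor:allrootdec} applied to the subinterval $[a,a+2D]$ yields $\myRoot=\stableSCC(a,a+D)=R\neq\emptyset$ in the round-$(a+2D)$ computing step of every member of $R$. The gap is in your second step. You read ``did not start already before $a$'' as ``no member of $R$ belongs to any root component $R^x$ with $x<a$,'' and use this to conclude that $p$ never executed the locking branch before round $a+2D$. But the hypothesis only says that the vertex-stability interval of \emph{this} root does not extend to the left of $a$, i.e., $R$ is not already the root component of round $a-1$; members of $R$ may well have belonged to other, earlier root components. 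This weaker reading is the one the paper needs when it applies the lemma (to the VSRC starting at $r_{ST}$ in the termination proof, and to arbitrary elements of $\mathbb{V}_{2D+1}$ in \cref{lem:consistent-decisions}), whose members have arbitrary prehistories. Under that reading your argument does not exclude that $p$ acquired a lock during such an earlier membership and still holds it at the start of round $a+2D$, in which case the branch at \cref{line:lock?} is not entered and $\ell:=a$ is never set.

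The missing ingredient is the release mechanism of \cref{line:releaseLock}: whatever happened earlier, in round $a+2D-1$ the call $\stableSCC(a-1,a+D-1)$ must return $\emptyset$ at $p$. Indeed, a nonempty return would mean that for every $x\in[a-1,a+D-1]$ the graph $C_p|x$ is strongly connected, contains $p$, and by \cref{lem:AsubsetG}(ii) is closed under in-edges of $\G^x$, hence is exactly the root component of round $x$ containing $p$, and all these root components share one common vertex set; since that set equals $R$ for $x=a$, the root of round $a-1$ would also be $R$, contradicting ``did not start before $a$.'' Consequently any stale lock is released in round $a+2D-1$ at the latest, so $\ell=\bot$ when round $a+2D$ begins; this is exactly what the paper's terse phrase ``it cannot have done so already in round $a+2D-1$, hence $\ell=\bot$'' is doing, and note that only round $a-1$ of the prehistory is needed, not all $x<a$. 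Your caveat about processes that have already adopted a decision points at a genuine imprecision in the lemma's statement that the paper's own proof silently ignores, so I do not hold that against you; but the step above must be repaired for the lemma to hold under the hypothesis as the paper uses it.
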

\begin{proof}
Because $R^{[a,b]}$ is $D$-bounded by \cref{ass:inter}, \cref{cor:allrootdec} guarantees
that $\stableSCC(a,a+D)$ returns $R$ from round $a+2D$ (of the $k$-set-algorithm)
on, and that it cannot have done so already in round $a+2D-1$.
Hence, $\ell=\bot$ in round $a+2D$,
the if-statement in \cref{line:lock?}
is entered
and $\ell:=a$ is set in \cref{line:setLock}. \qed
\end{proof}

\begin{lemma}\label{lem:maxdecide}
All processes part of a vertex stable root  $R^{[a,b]}$
with interval length greater than $3D$, which did not start
already before $a$, have decided by round $a+3D$.
\end{lemma}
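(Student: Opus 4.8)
The plan is to show, purely by tracking the state transitions of \cref{alg:ksa} that are triggered by the detection guarantees of the network approximation algorithm (\cref{alg:approx}), that every member $p\in R$ which did not start before $a$ locks at round $a+2D$, keeps that lock throughout $[a+2D,a+3D]$, and finally passes the decision test in \cref{line:decide?} no later than round $a+3D$. The value-selection machinery ($\getred$, the lock histories) is irrelevant for this lemma; it is only about the timing of the undecided $\to$ locked $\to$ decided transitions.

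First I would invoke \cref{lem:maxlock}, which applies because the interval length exceeds $3D>2D$: every $p\in R$ that did not start before $a$ sets $\lockround:=a$ in round $a+2D$ (entering the branch at \cref{line:lock?}, whose guard holds since $\lockround=\bot$ and $\myRoot=\stableSCC(a,a+D)=R\ne\emptyset$ there, and then executing \cref{line:setLock} and \cref{line:addNewLock}). Next I would show that this lock survives every round $r$ with $a+2D<r\le a+3D$. For such an $r$ the window $[r-2D,r-D]$ has length $D+1$ and, since $a\le r-2D$ and $r\le a+3D\le b$, lies inside $[a,b]$; applying \cref{cor:allrootdec} to the $D$-bounded $[r-2D,r]$-VSRC $R^{[r-2D,r]}$ (a sub-VSRC of $R^{[a,b]}$, with $|[r-2D,r]|=2D+1>D$) yields $\myRoot=\stableSCC(r-2D,r-D)=R\ne\emptyset$ at the end of round $r$. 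Hence in every such round the guard of \cref{line:releaseLock} is false, and the guard of \cref{line:lock?} is false since $\lockround\ne\bot$, so $p$ (if not already decided) reaches the decision test in \cref{line:decide?} with $\lockround$ still equal to $a$.

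It then remains to bound when the test $\stableSCC[\lockround,\lockround+2D]=\stableSCC[a,a+2D]\ne\emptyset$ first succeeds. Here I would apply \cref{cor:allrootdec} to the interval $I=[a,a+3D]\subseteq[a,b]$, a $D$-bounded VSRC with $|I|=3D+1>D$: from the end of round $a+3D$ on, $\stableSCC([a,(a+3D)-D])=\stableSCC([a,a+2D])$ returns $R\ne\emptyset$ at every member of $R$. Thus if $p\in R$ is still undecided at round $a+3D$, it has $\lockround=a\ne\bot$ together with a non-empty $\stableSCC[a,a+2D]$, enters \cref{line:decide?} and decides in \cref{line:setDecision}; and any $p$ that had already become decided in an earlier round (either via \cref{line:decide?} at some $r<a+3D$ where the test happened to succeed early, or by adopting a decision in \cref{line:adoptdec}) has decided a fortiori. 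Either way every such $p$ has decided by round $a+3D$, proving the lemma.

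The main obstacle I anticipate is the bookkeeping around the \textbf{if--elsif--elsif} structure of the round-$r$ computation: one must verify that in each round of $[a+2D,a+3D]$ the earlier branches are not taken, so that $\lockround$ stays equal to $a$ and the decision guard is actually evaluated rather than shadowed by a (re-)lock or a release. This in turn forces careful, repeated instantiation of \cref{cor:allrootdec} (equivalently \cref{lem:root2Cpr}) on the sub-intervals $[r-2D,r]$ and $[a,a+3D]$, each time checking that the interval length strictly exceeds $D$ and that it is contained in $[a,b]$ — which is exactly where the hypothesis $b\ge a+3D$ (``interval length greater than $3D$'') gets used.
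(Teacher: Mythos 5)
Your proposal is correct and follows essentially the same route as the paper's proof: invoke \cref{lem:maxlock} to get $\lockround=a$ at round $a+2D$, note that vertex-stability of $R$ through $[a+2D,a+3D]$ prevents the release in \cref{line:releaseLock}, and apply \cref{cor:allrootdec} so that the test in \cref{line:decide?} succeeds by round $a+3D$. Your version merely spells out in more detail (via the sub-interval instantiations of \cref{cor:allrootdec}) why the release branch is never taken, which the paper states more briefly.
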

\begin{proof}
It follows from \cref{lem:maxlock} that all members of the VSRC $R^{[a,b]}$
set $\ell:=a$ in round $a+2D$.
As the VSRC remains stable also in rounds
$a+2D,\dots,a+3D$, \cref{line:releaseLock} will not be executed in
these rounds, thus $\ell=a$ remains unchanged.
Consequently, due to \cref{cor:allrootdec}, the if-statement
in \cref{line:decide?} will evaluate to true at the latest in round
$\ell+3D=a+3D$, causing all the processes to decide via
\cref{line:setDecision} by round $a+3D$ as asserted. \qed
\end{proof}

\begin{lemma}\label{lem:termination}
The algorithm eventually terminates at all processes.
\end{lemma}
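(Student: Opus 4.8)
The plan is to derive termination from two facts that are already in place: first, that the members of a sufficiently long vertex-stable root component decide within a bounded number of rounds (\cref{lem:maxdecide}), and second, that once a process has decided it forever rebroadcasts its decision (\cref{line:sendHistory} of \cref{alg:ksa}), which every still-undecided recipient immediately adopts (\cref{line:adoptdec}). Since $\MAkI$ entails, in particular, the constraints of \cref{ass:inter} with $k=n$ and $d=3D+\nwbound$, a ``fresh'' stable window always exists, and its $\nwbound$-network-boundedness (\cref{def:E-network-vertex-stable-roots}) will be used to carry the decision to every process.

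Concretely, I would first invoke item~(iii) of \cref{ass:inter} to fix a round $r_{ST}$, the interval $J=[r_{ST},r_{ST}+3D+\nwbound-1]$, and the \goodDE\ vertex-stable root components $R_1^J,\dots,R_\ell^J$ ($1\le\ell\le n$) present throughout $J$; write $P=\bigcup_{i=1}^{\ell}R_i$. For each $i$, extend $R_i$ maximally to the left: let $[a_i,b_i]$ be the largest interval throughout which $R_i$ is a root component, so $a_i\le r_{ST}$, $b_i\ge r_{ST}+3D+\nwbound-1$, the interval length is at least $3D+\nwbound>3D$, and $R_i^{[a_i,b_i]}$ is \goodD\ by item~(ii); moreover, by maximality $R_i$ is not a root component in round $a_i-1$, so the hypotheses of \cref{lem:maxdecide} are met. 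Hence every member of $R_i$ has decided by round $a_i+3D\le r_{ST}+3D$, so all of $P$ is decided by round $r_{ST}+3D$ and from then on broadcasts $\li{\hist_i,\decision_i}$ with $\decision_i\ne\bot$ in every round.

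Next I would flood these decisions to $\Pi\setminus P$. Since $|J|\ge\nwbound$, \cref{def:E-network-vertex-stable-roots} yields, for every round $x\in[r_{ST},r_{ST}+3D]$ and every $q\in\Pi$, a causal influence chain of length at most $\nwbound$ from some member of $P$ to $q$ starting in round $x$; picking such an $x$ late enough that the members of $P$ have decided (\cref{lem:maxdecide}), the head of the chain emits a $\decision$-bearing message, each undecided process along the chain adopts it via \cref{line:adoptdec} and passes it on, and hence $q$ decides within at most $\nwbound$ further rounds. As $q$ was arbitrary, every process decides after finitely many rounds.

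The point that needs care — and which is exactly what the latency analysis of \cref{lem:maxlock,lem:maxdecide} handles — is the interplay of two timings. A process in some $R_i$ might still be holding a not-yet-released, stale lock acquired while it was transiently part of an earlier, too-short root component; this is why one must work with the maximal-on-the-left interval $[a_i,b_i]$ and appeal to the ``did not start already before $a_i$'' clause of \cref{lem:maxdecide}. And one must check that a member of $P$ reaches the decided state early enough that a $\decision$-bearing message is still emitted inside the network-reach window of $R^J$. For the purely \emph{eventual} statement claimed here this reconciliation is immaterial; obtaining a tight round bound would require the careful bookkeeping of decision latencies carried out in \cref{lem:maxlock,lem:maxdecide}.
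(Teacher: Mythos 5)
Your proposal follows essentially the same route as the paper's own proof: apply \cref{lem:maxdecide} to the \goodDE\ root components of the stability window guaranteed by item~(iii) of \cref{ass:inter} to get decisions among their members by round $r_{ST}+3D$, and then use the $\nwbound$-network-boundedness of that window (\cref{def:E-network-vertex-stable-roots}) together with \cref{line:sendHistory,line:adoptdec} of \cref{alg:ksa} to carry a $\decision$-bearing message to every remaining process within $\nwbound$ further rounds. Your explicit left-maximal extension of each $R_i$ to an interval $[a_i,b_i]$, so as to meet the ``did not start already before $a$'' hypothesis of \cref{lem:maxdecide}, is a point the paper's one-line invocation glosses over, and it is handled correctly.

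The one place where your argument goes astray is the closing claim that the interplay between the decision round and the reach window is ``immaterial'' for the purely eventual statement. It is not: in this model the \emph{only} guarantee that a process outside all root components is ever causally influenced at all is the $\nwbound$-network-boundedness of the VSRCs during the window $J$; once $J$ has elapsed, \MAkI\ permits the adversary to keep every still-undecided process isolated from every decided one forever (item~(ii) of \cref{ass:inter} constrains only VSRC-internal propagation). Hence eventual termination of such a process stands or falls with a $\decision$-bearing message riding one of the causal chains that are guaranteed \emph{inside} $J$, i.e., with exactly the bookkeeping you defer: members decide by $r_{ST}+3D$ (\cref{lem:maxdecide}), while \cref{def:E-network-vertex-stable-roots} supplies chains of length at most $\nwbound$ starting no later than round $r_{ST}+3D$ within the window of length $3D+\nwbound$ --- this is how the paper arrives at its bound $r_{ST}+3D+\nwbound$. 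So the check you wave away is not an optional refinement toward a round bound; it is the substance of the termination argument and the reason for the window length $3D+\nwbound$ in the first place, and your proof should carry it out (at the level of care of the paper's own proof) rather than dismiss it.
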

\begin{proof}
For a contradiction, assume that there is $p_j \in \Pi$ which has not terminated
after the stable interval guaranteed by \cref{ass:inter}.
This implies that $p_j$ is not part of a root component during this stable interval,
because \cref{lem:maxdecide} ensures termination by $r_{ST}+3D$ at the latest
for the latter. Hence, $p_j$ did not get a decide message either. From
\cref{def:E-network-vertex-stable-roots}, it follows
that there exists a causal chain of length at most $\nwbound$ to $p_j$ from some
member $p_i$ of a VSRC after its termination.
Therefore, it must receive the decide message by
$r_{ST}+3D+\nwbound$
at latest. \qed
\end{proof}

Although we now know that all members of a VSRC that is vertex
stable for at least $3D$ rounds will decide, we did not prove anything
about their decision values yet. In the sequel, we will prove
that they decide on the \emph{same} value.

\begin{lemma} \label{lem:D-consistent-history}
 Given some VSRC $R^I$ with $I=[a,b]$ and $b \ge a+D$, in all rounds
 $x \in [a+D, b]$ it holds that $\forall p_i, p_j \in R \colon
 \bigcup_{r' \leq a} \hist_i[j][r'] = \bigcup_{r' \leq a} \hist_j[j][r']$
\end{lemma}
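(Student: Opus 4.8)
The plan is to establish the lemma by sandwiching $\hist_i[j][r']$ between $\hist_j[j][r']$ from above and from below. \textbf{(I) An under-approximation invariant:} for every round $x$, all $p_k,p_j\in\Pi$, and every $r'\le x$, the value of $\hist_k[j][r']$ at the end of round $x$ is contained in the value of $\hist_j[j][r']$ at the end of round $x$ (and the latter no longer changes once $r'\le x$). \textbf{(II) Completeness inside a $D$-bounded VSRC:} if $p_i,p_j\in R$ and $r'\le a$, then by the end of round $a+D$ the slot $\hist_i[j][r']$ already contains the entire (by then frozen) slot $\hist_j[j][r']$. Granting (I) and (II), for each $r'\le a$ the two slots coincide from round $a+D$ on, and since $\hist_j[j][r']$ is already at its final value once $r'\le x$, taking the union over all $r'\le a$ yields the claimed equality for every $x\in[a+D,b]$. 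The diagonal case $p_i=p_j$ is trivial, so below assume $p_i\ne p_j$.

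For (I) I would first observe that, for a fixed $r'$, the slot $\hist_j[j][r']$ is written only at initialization (if $r'=0$, \cref{line:initHist}) or during round $r'$ itself (\cref{line:addNewLock} and \cref{line:learnLock}); hence it is monotone and stabilizes by the end of round $r'$, and $\hist_k[j][r']$ is empty until after round $r'$ whenever $k\ne j$. The invariant $\hist_k[j][r']\subseteq\hist_j[j][r']$ is then proved by induction on the round number: the base case is the initialization; in the step, remote lock learning (\cref{line:remoteHistoryUpdate}) merges a neighbour's slot $\hist_m[j][r']$ (which, being carried by a message that reflects $p_m$'s state at the end of the previous round, is $\subseteq\hist_j[j][r']$ by the hypothesis) into $\hist_k[j][r']$ (also $\subseteq\hist_j[j][r']$ by the hypothesis), preserving the inclusion; and local lock creation/learning only ever add to $\hist_k[k][r]$ for the current round $r$, i.e.\ to a slot of the form $\hist_j[j][r']$ with $j=k$, $r'=r$, so whatever is added there is by definition part of its final value.

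For (II), recall that the VSRC $R^I$ is $D$-bounded by item~(ii) of \cref{ass:inter}, so by \cref{def:D-bounded-VSRC} we have $\cdiam^{s}(R^I)\le D$ whenever $a\le s\le b-D+1$, i.e.\ there is a causal chain $p_j\overset{s[t]}{\leadsto}p_i$ with $t=cd^s(p_j,p_i)\le D$. I would then show: along any such chain with $s\ge r'+1$, the merge in \cref{line:remoteHistoryUpdate} carries the whole frozen slot $\hist_j[j][r']$ into $\hist_i[j][r']$ by the end of round $s+t-1$. This follows by induction on $t$, using that $p_j$'s slot is already frozen by the end of round $r'\le s-1$ and is broadcast in every subsequent round, that each delivered message triggers the merge of the $[j][r']$-slot at any receiver other than $p_j$, and that self-loop steps of the chain leave the slot untouched. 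Applying this with $s=a$ to each $r'\le a-1$ (legal because $a\le b-D+1$, since $b\ge a+D$) yields $\hist_j[j][r']\subseteq\hist_i[j][r']$ by round $a+D-1$; applying it with $s=a+1$ to the remaining slot $r'=a$ (legal because $a+1\le b-D+1$, again since $b\ge a+D$) yields $\hist_j[j][a]\subseteq\hist_i[j][a]$ by round $a+D$. All these chains use only rounds $\le a+D-1<b$, so (II) indeed holds throughout $[a+D,b]$.

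The single non-routine step — the one I expect to demand care — is the propagation claim inside (II): one has to make precise that the \emph{array slot} $\hist[j][r']$ is forwarded verbatim along causal chains by the remote-lock-learning merge, \emph{independently} of the relabelling that local lock learning performs into the owner's current-round slot, and that the resulting latency is met exactly when $b\ge a+D$. It is the most recent slot $r'=a$ — whose contents (a lock possibly \emph{created} at $p_j$ in round $a$) only start circulating in round $a+1$ and then need up to $D$ more rounds to reach all of $R$ — that forces a window of $D$ rather than $D-1$, and hence explains both the bound $a+D$ in the statement and the hypothesis $b\ge a+D$. Everything else (the invariant, the union bookkeeping, and the diagonal case) is mechanical.
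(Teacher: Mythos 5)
Your proof is correct and takes essentially the same route as the paper's own (much terser) argument, which likewise combines $D$-bounded dissemination inside the VSRC with the observation that only the owner $p_j$ ever writes its row $\hist_j[j][\cdot]$, and only in the current round, so that everything recorded for rounds $r'\le a$ has reached every member of $R$ by the end of round $a+D$. Your version merely makes this rigorous: the under-approximation invariant (I) supplies the containment $\hist_i[j][r']\subseteq\hist_j[j][r']$ that the paper leaves implicit (and lets you dispense with the paper's appeal to the root-component isolation of $R^I$), and your separate treatment of the slot $r'=a$ pins down exactly why $b\ge a+D$ is needed -- refinements of the same argument rather than a different one (the minor slip that the $r'=a$ chains may use round $a+D$, not only rounds $\le a+D-1$, is inconsequential).
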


\begin{proof}
 By the $D$-boundedness of $R^I$, a message from round $a$ has reached
 every member of $R$ by round $a+D$.
 Moreover, no message sent by a process not in $R$ during $I$ can
 reach a member of $R$ during $I$ because $R^I$ is a root component.
 Therefore,
 since $\hist_i$ is sent by each process $p_i$ in every round
 (\cref{line:sendHistory})
 and $p_i$ adds only newly learned entries to $\hist_i$
 (\cref{line:addNewLock,line:learnLock}),
 all these updates of $\hist_i$
 during $I$, regarding any round $r'\leq a$, occur at the
 latest in round $a+D$. \qed
\end{proof}

\begin{lemma} \label{lem:consistent-decisions}
 All processes of a VSRCs $R^I$ of $\mathbb{V}_{2D+1}$ with $I=[a,b]$
 adopt the same lock (and hence decide the same).
\end{lemma}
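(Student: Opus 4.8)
The plan is to show that, in the computing step of round $a+2D$, every not-yet-decided member of $R$ invokes $\getred(R,a)$ in \cref{line:setLock} on \emph{one and the same} argument pair and on \emph{one and the same} lock-history multiset $S$, so that the deterministic function $\getred$ returns an identical new lock $L=(R,v,a+2D)$ at all of them. The ``hence decide the same'' part will then be immediate: while $R$ stays a root no member re-enters \cref{line:lock?} (its $\lockround$ stays $\neq\bot$) and none releases in \cref{line:releaseLock} (since $\stableSCC$ keeps returning $R\neq\emptyset$ by \cref{cor:allrootdec}), so $\lock=L$ persists, and a decision taken via \cref{line:setDecision} is just $\lock.v=v$.

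First I would fix $I=[a,b]$ with $b\ge a+2D$ (as $R^I\in\mathbb{V}_{2D+1}$) and argue that every not-yet-decided $p_i\in R$ executes $\lock:=\getred(\myRoot,\lockround)=\getred(R,a)$ precisely in round $a+2D$, with $\myRoot=R$ by \cref{cor:allrootdec}. For a member of $R$ that does not already belong to a root component before round $a$ this is exactly \cref{lem:maxlock}. For a member entering round $a$ with a stale lock: since $I$ is taken maximal, $R$ is not a root component in any round $<a$, so (by \cref{lem:AsubsetG} and \cref{lem:Cpr2root}, together with uniqueness of the root) $\stableSCC$ cannot return the vertex set $R$ on any interval lying below $a$; hence $\myRoot=\emptyset$ on the interval queried in the first rounds of $I$, the stale lock is released in \cref{line:releaseLock}, $\lockround$ becomes $\bot$, and no re-lock occurs before round $a+2D$ (the queried interval still reaches below $a$), so \cref{lem:maxlock} applies again at round $a+2D$. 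Members of $R$ that were already \emph{decided} before round $a$ never reach \cref{line:setLock} and are irrelevant to the claim, which concerns the locks adopted inside $I$.

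The heart of the argument is that all these calls see the same $S$. By \cref{line:bounded-history}, $\getred(R,a)$ run at $p_i$ builds $S=\biguplus_{p_j\in R}\bigl(\bigcup_{r''\le a}\hist_i[j][r'']\bigr)$. Since the call happens in round $a+2D$, which lies in $[a+D,b]$, \cref{lem:D-consistent-history} gives $\bigcup_{r''\le a}\hist_i[j][r'']=\bigcup_{r''\le a}\hist_j[j][r'']$ for all $p_i,p_j\in R$; hence the contribution of each $p_j\in R$ to $S$ is independent of the calling process $i$, so $S$ is the very same multiset at every $p_i\in R$. As $\getred(R,a)$ is a deterministic function of $R$, of the current round (which is $a+2D$ for all of them) and of $S$ alone — it computes $\hotmaj(S)$ and then returns either the value of its unique element (\cref{line:createLockMajority}) or $\max_{s\in S}\{s.v\}$ (\cref{line:createLockMaximum}) — every $p_i\in R$ obtains the identical lock $L=(R,v,a+2D)$, i.e., all members adopt the same lock, which combined with the first paragraph finishes the proof.

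The main obstacle is the second paragraph: pinning down the possible state of a member of $R$ at the start of round $a$ and showing that any previously adopted lock is reliably released and then re-created consistently, which is exactly where maximality of $I$ and the precise semantics of $\stableSCC$ on intervals straddling round $a$ are needed. Once it is established that all undecided members of $R$ invoke $\getred(R,a)$ simultaneously in round $a+2D$, the remainder is a one-line consequence of \cref{lem:D-consistent-history} and the determinism of $\getred$.
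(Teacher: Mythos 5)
Your proof takes essentially the same route as the paper's: the paper combines \cref{lem:maxlock} (every member of $R$ sets $\ell:=a$ in round $a+2D$) with \cref{lem:D-consistent-history} (the history entries for rounds $\leq a$ agree among all members of $R$ from round $a+D$ on) and the determinism of $\getred$, which is exactly your decomposition, only with the first ingredient partly re-derived inline; in substance your argument is correct at the same level of rigor as the paper's.

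Two cautions about the inline part that you yourself flag as the main obstacle. First, left-maximality of $I$ only tells you that $R$ is not a root component of $\G^{a-1}$; it does not give ``$R$ is not a root component in any round $<a$'', and under the $k$-set adversary of \cref{ass:inter} there is no uniqueness of roots. Consequently, for $r\in[a,a+D)$ the queried window $[r-2D,r-D]$ lies entirely below $a$ and $\stableSCC$ may return a non-empty \emph{stale} set $R'\neq R$, so your claim that $\mathtt{myRoot}=\emptyset$ ``in the first rounds of $I$'' is not justified. What actually forces the release in \cref{line:releaseLock} are the rounds $r\in[a+D,a+2D-1]$, where the queried window contains both $a-1$ and $a$: the same-vertex-set requirement would make $R$ a root of $\G^{a-1}$, contradicting maximality. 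This is precisely the content of the paper's \cref{lem:maxlock} (whose hypothesis ``did not start already before $a$'' is your maximality assumption), so the conclusion you need survives with this local repair. Second, a member still carrying a lock from an earlier $2D{+}1$-VSRC could in principle satisfy the test in \cref{line:decide?} at some round in $[a,a+D)$, i.e., before the forced release, decide there, and its $\textsc{decide}$ message would then be adopted by the other members of $R$ via \cref{line:adoptdec} before round $a+2D$; in that corner case no lock $\getred(R,a)$ is ever created (all members still decide identically, but by adoption). Neither your proof nor the paper's addresses this scenario (nor the case of members already decided before round $a$), so this is a shared gloss rather than a defect specific to your write-up.
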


\begin{proof}
Such a lock is created by $p_i \in R$ in round $a+2D$,
when it recognizes $R^I$ as having been vertex-stable for $D+1$ rounds
according to \cref{lem:maxlock}. As the lock (value) is computed based on $\hist_i$
present in round $a+2D$, which is consistent among all VSRC members
by \cref{lem:D-consistent-history}, the lemma follows. \qed
\end{proof}

Finally, we show that, given that the system satisfies \cref{ass:stablestream},
there will be at most $k$ decision values in any run of \cref{alg:ksa}, which
proves $k$-agreement:
Since there are at most $k$ VSRCs of $\mathbb{V}_{2D+1}$
that are not majority-influenced by other VSRCs,
it remains to show that any majority-influenced VSRC decides the same as the
VSRC it is majority-influenced by.
In order to do so, we will first establish a key property of our central
data structure $\hist_i$.

\begin{lemma} \label{lem:historyApproximation}
Given $R_{cur}^{I_{cur}=[r_{cur},s_{cur}]}$,
       $R_{suc}^{I_{suc}=[r_{suc},s_{suc}]}$
with $|I_{cur}| > 2D$ and any $|I_{suc}| \geq 1$.
Let $L$ be a lock known to all members of $R_{cur}$ by $s_{cur}$,
i.e., for all $p_i \in R_{cur}$ it holds that, by the end of round $s_{cur}$,
$L \in \bigcup_{r' \leq s_{cur}} \hist_i[i][r']$.
For any process $p_j \in R_{suc}$, it holds that
if there exists some $p_i \in R_{cur}$, s.t.\ $\influence{p_i}{p_j}$,
then
$L \in \bigcup_{r' \leq r_{suc}} \hist_j[j][r']$.

\end{lemma}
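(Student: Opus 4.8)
The plan is to follow the lock $L$ hop by hop along a shortest causal influence chain witnessing $\influence{p_i}{p_j}$, showing that at each hop $L$ lands in the \emph{own} component of the receiver's history at an ever-later round index, and then to observe that the chain terminates no later than round $r_{suc}$.

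First I would isolate a small structural invariant of \cref{alg:ksa}: whenever a lock $L$ is present anywhere in $\hist_p$, it is also present in $\hist_p[p][r'']$ for some $r''$ not larger than the round in which $p$ first saw $L$, and it remains there in all later rounds (histories are never pruned). The reason is that a lock can enter $\hist_p$ only in two ways: $p$ creates it, in which case it is written directly into $\hist_p[p][\cdot]$ (\cref{line:initHist} for the virtual lock, \cref{line:addNewLock} for a regular one); or $p$ receives it as a remote lock from a neighbour for the first time, in which case the local-lock-learning rule (\cref{line:learnLock}) copies \emph{every} lock newly added in that round, hence $L$, into $\hist_p[p][r]$ for the current round $r$.

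Next, using $\influence{p_i}{p_j}$ (\cref{def:influence}), I would fix a shortest causal influence chain $q_0 = p_i,\, q_1,\, \dots,\, q_\ell = p_j$ of length $\ell = cd^{s_{cur}+1}(p_i,p_j) \le r_{suc} - s_{cur}$, with $q_m \overset{s_{cur}+1+m}{\leadsto} q_{m+1}$ for $0 \le m < \ell$, and prove by induction on $m$ that at the end of round $s_{cur}+m$ we have $L \in \bigcup_{r' \le s_{cur}+m} \hist_{q_m}[q_m][r']$. The base $m=0$ is exactly the assumption that $L$ is known to $p_i \in R_{cur}$ by round $s_{cur}$. For the step, if $q_m = q_{m+1}$ the claim carries over trivially since histories only grow; otherwise $(q_m \rightarrow q_{m+1}) \in \G^{s_{cur}+1+m}$, so in round $s_{cur}+1+m$ the process $q_m$ broadcasts the history it holds at the start of that round --- which, by the induction hypothesis, contains $L$ in some slot $\hist_{q_m}[q_m][r^*]$ with $r^* \le s_{cur}+m$ --- and $q_{m+1}$ receives it. If $q_{m+1}$ did not already know $L$, then \cref{line:remoteHistoryUpdate,line:learnLock} add $L$ to $\hist_{q_{m+1}}[q_{m+1}][s_{cur}+1+m]$; if it did, the structural invariant already puts $L$ in $\hist_{q_{m+1}}[q_{m+1}][r'']$ for some past round $r'' \le s_{cur}+m$. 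In both cases the invariant holds at $m+1$. Instantiating $m = \ell$ and using $s_{cur}+\ell \le r_{suc}$ gives $L \in \bigcup_{r' \le r_{suc}} \hist_{p_j}[p_j][r']$ as required; the degenerate case $p_i = p_j$ is covered by the convention $cd^{s_{cur}+1}(p_i,p_i) = 1$ together with $r_{suc} > s_{cur}$.

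The delicate point --- and the step I expect to be the real obstacle --- is that a process stops running the history-update code once it has decided, and an undecided process that receives a message carrying a decision in some round adopts that decision and skips the history merge for that round (\cref{line:adoptdec}); in either case the inductive step above can break at that hop. One therefore has to argue that no process $q_{m+1}$ on the chain decides (or adopts a decision) strictly before $L$ reaches it. This is exactly where the role of the lemma inside the overall correctness argument matters: the conclusion is only ever exploited when the members of $R_{suc}$ are still undecided when they lock at round $r_{suc}+2D$ (otherwise, by $D$-boundedness of $R_{suc}^{I_{suc}}$ and decision adoption, the decision would propagate to all of $R_{suc}$ before that round, and $\getred$ would never be invoked for $R_{suc}$, making the statement vacuous downstream). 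Turning this observation into a clean side condition on the rounds in which the chain processes are guaranteed undecided is the only part that requires real care; everything else is the bookkeeping sketched above.
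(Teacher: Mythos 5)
Your proof is correct and takes essentially the same route as the paper's: the paper also relays $L$ hop by hop along the causal chain witnessing $\influence{p_i}{p_j}$, using the fact that $\hist$ is broadcast in every round (\cref{line:sendHistory}) and that newly learned locks are copied into the receiver's own row (\cref{line:remoteHistoryUpdate,line:learnLock}); your invariant-plus-induction is simply a more explicit rendering of that one-paragraph argument. The ``delicate point'' you flag --- that a decided process (or one adopting a received decision) skips the history merge and could break the relay --- is a genuine subtlety, but the paper's own proof silently ignores it as well, so it does not constitute a difference in approach.
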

\begin{proof}
    Assume there exists a $p_i \in R_{cur}$ s.t.\ $\influence{p_i}{p_j}$ but
    $L \notin \bigcup_{r' \leq r_{suc}} \hist_j[j][r']$.
    The definition of
    $\influence{p_i}{p_j}$ implies that
    there exists a causal chain from $p_i$ to $p_j$ that ends before
    $p_j$ becomes a part of $R_{suc}$.
    Since processes send their own history in every round according to
    \cref{line:sendHistory}, every message in this causal chain consisted of
    a $\hist$ containing $L$ and thus
    $p_j$ put $L$ into its $\hist_j[j][r]$ via
    \cref{line:remoteHistoryUpdate} if $\bigcup_{r' \leq r} \hist_j[j][r']$
    did not already contain $L$. \qed
\end{proof}

\begin{lemma}
 Given $R_{cur}^{I_{cur}=[r_{cur},s_{cur}]}\in \mathbb{V}_{2D+1}$ and
       $R_{suc}^{I_{suc}=[r_{suc},s_{suc}]} \in \mathbb{V}_{2D+1}$,
 assume that the processes of $R_{cur}$ created the (same) lock $L$ when locking.
 If $\majinf{R_{cur}^{I_{cur}}}{R_{suc}^{I_{suc}}}$, then the processes of $R_{suc}$ will choose
 a lock $L'$ where $L.v = L'.v$ (and hence decide the same as the processes of
 $R_{cur}$).
\end{lemma}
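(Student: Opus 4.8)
The plan is to show that the call $\getred(R_{suc},r_{suc})$ --- which every member of $R_{suc}$ performs in round $r_{suc}+2D$ with identical arguments and on the identical multiset $S$ of \cref{line:bounded-history}, by \cref{lem:maxlock} and \cref{lem:D-consistent-history} exactly as in the proof of \cref{lem:consistent-decisions} --- returns, via \cref{line:createLockMajority}, a lock $L'$ with $L'.v=L.v$. So it suffices to prove $\hotmaj(S)=\set{L}$, i.e.\ that $L$ is a most frequent element of $S$ and that every other most frequent element of $S$ has strictly smaller creation round than $\tau_L$. Note that $\tau_L=r_{cur}+2D$ by \cref{lem:maxlock}, and that the hypothesis $\majinf{R_{cur}^{I_{cur}}}{R_{suc}^{I_{suc}}}$ also forces $R_{cur}$ to occur strictly before $R_{suc}$ (\cref{lem:propmaj}), so $s_{cur}<r_{suc}$ and hence $\tau_L\le s_{cur}<r_{suc}$.

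First I would bound the multiplicity of $L$ in $S$ from below. As the members of $R_{cur}$ create $L$ in round $r_{cur}+2D\le s_{cur}$, the lock $L$ is known to all of them by the end of round $s_{cur}$, so \cref{lem:historyApproximation} (with $R_{cur}$ as the earlier and $R_{suc}$ as the later component) yields $L\in\bigcup_{r'\le r_{suc}}\hist_j[j][r']$ for every $p_j\in\IS(R_{cur},R_{suc})$; together with the consistency of the relevant $\hist$-entries among members of $R_{suc}$ (\cref{lem:D-consistent-history}), this shows that the multiplicity of $L$ in $S$ is at least $|\IS(R_{cur},R_{suc})|$. Next I would bound, from above, the multiplicity of any competing lock $M\ne L$ in $S$. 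By \cref{lem:minlock} every non-virtual lock $M$ was created by the members of some VSRC $R_M\in\mathbb{V}_{D+1}$, and the key claim is that the multiplicity of $M$ in $S$ is at most $|\IS(R_M,R_{suc})|$: a member $p_j$ of $R_{suc}$ can hold $M$ by round $r_{suc}$ only if $M$ was carried to it along a causal chain rooted at a member of $R_M$ holding $M$ during $I_M$, and --- $R_M$ being a root component throughout $I_M$, hence receiving no messages from outside --- such a chain can be taken to start strictly after $I_M$, so that some member of $R_M$ influences $p_j$ in the sense of \cref{def:influence}. With these two bounds, \cref{def:majorityInfluence} finishes the comparison: if $\IS(R_M,R_{cur})=\emptyset$ then $|\IS(R_{cur},R_{suc})|>|\IS(R_M,R_{suc})|$, so $M$ is strictly less frequent in $S$ than $L$; and if $\IS(R_M,R_{cur})\ne\emptyset$ then, by \cref{def:influence}, the stability interval of $R_M$ ends before $r_{cur}$, hence $\tau_M<r_{cur}+2D=\tau_L$, while $|\IS(R_{cur},R_{suc})|\ge|\IS(R_M,R_{suc})|$ still keeps $M$ at most as frequent as $L$. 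A virtual lock has creation round $0<\tau_L$ and would be covered by the same influence-counting estimate, treating its singleton originator as a degenerate source.

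Putting this together, every lock in $S$ is at most as frequent as $L$, so $L\in\maj(S)$; and every element of $\maj(S)$ other than $L$ is either virtual or a lock $M$ with $\IS(R_M,R_{cur})\ne\emptyset$ (otherwise it would be strictly less frequent than $L$), hence in both cases has creation round strictly below $\tau_L$. Therefore $\hotmaj(S)=\set{L}$, so $\getred$ returns a lock $L'$ with $L'.v=L.v$, which is what we wanted. I expect the main obstacle to be the multiplicity upper bound for a competitor $M$ and its clean alignment with \cref{def:influence}: one has to make the ``$M$ leaves the root component $R_M$ only after $I_M$'' argument precise even when $M$ is created inside $I_M$ (i.e.\ when $R_M$ is long-lived, so that the chain carrying $M$ out may exit during the interval rather than after it), and one has to dispose of the degenerate virtual-lock sources in the same spirit. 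This is careful bookkeeping rather than a single slick observation, and is where the actual work lies.
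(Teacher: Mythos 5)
Your proof follows essentially the same route as the paper's: both set up the identical multiset $S$ at all members of $R_{suc}$ via \cref{lem:maxlock} and \cref{lem:D-consistent-history}, lower-bound the multiplicity of $L$ by $|\IS(R_{cur},R_{suc})|$ via \cref{lem:historyApproximation}, upper-bound any competitor via \cref{lem:minlock} and its influence set, and then combine \cref{def:majorityInfluence} with the creation-round tie-break to get $\hotmaj(S)=\set{L}$ and hence $L'.v=L.v$ through \cref{line:createLockMajority}. The counting subtlety you flag (a competitor lock being exported from a long-lived $R_M$ \emph{during} $I_M$, so that its recipients need not lie in $\IS(R_M,R_{suc})$ as literally defined) is present but left implicit in the paper's own proof as well, so your write-up is, if anything, the more explicit of the two.
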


\begin{proof}
 From the definition of $\majinf{}{}$ (\cref{def:majorityInfluence}), it
 follows that no VSRC $R^I$ of $\mathbb{V}_{D+1}$ has a larger influence set on
 $R_{suc}$
 than $R_{cur}$.
 By \cref{lem:minlock}, this implies that no lock that was generated by some
 $R^I$ in $\mathbb{V}_{D+1}$ can be known to more members of $R_{suc}$ than the
 lock $L$
 generated by $R_{cur}$.
 Since process $p_i$ puts only newly learned locks into $\hist_i$
 (\cref{line:addNewLock,line:learnLock}),
 by \cref{lem:historyApproximation}, this means that in round $r_{suc}$
 no ``bad'' lock $L_b$ is present in more elements of
 $S = \bigcup_{p_i \in R_{suc},r' \leq r_{suc}} \hist_i[i][r']$
 than $L$.
 We now show that $L.\hot > L_b.\hot$ for all $L_b$ occuring in as many
 elements of $S$ as $L$ with $L_b \neq L$.
 Obviously, the only locks $L_b$ that could occur in as many elements of $S$
 as $L$ are locks that have been in $\hist_i$ of some $p_i \in R_{cur}$
 at the beginning of round $r_{cur}$ already.
 Since for any such $L_b$, $L$ was created after $L_b$,
 by \cref{line:createLockMajority,line:createLockMaximum},
 we have that $L.\hot > L_b.\hot$, as claimed.
 Because in round $r_{suc}+2D$, at all processes $p_i, p_j$ of $R_{suc}$,
 \cref{lem:D-consistent-history} implies
 that $\bigcup_{r' \leq r_{suc}} \hist_i[j][r']
     = \bigcup_{r' \leq r_{suc}} \hist_j[i][r']$,
 when locking in round $r_{suc}+2D$ according to \cref{lem:maxlock},
 every $p_i$ of $R_{suc}$ will find $L$ as the unique most common lock in the
 elements of $S$ with maximal $\hot$.
 This leads to the evaluation of the if-statement in \cref{line:hottestMajority?}
 to true and to the creation of a new lock $L'$,
 where $L'.v = L.v$ in \cref{line:createLockMajority}, as asserted. \qed
\end{proof}

This finally completes the proof of \cref{thm:kset}.

\section{Conclusions}
\label{sec:conclusion}

We introduced a framework for modeling dynamic networks with directed
communication links under generalized message adversaries that focus
on vertex-stable root components. We presented
related impossibility results and lower bounds for consensus, as well
as a message adversary that is much stronger than the ones known so far
for solving consensus, along with a suitable algorithm and its correctness
proof. Moreover, we made a significant step towards determining the 
solvability/impossibility border of general $k$-set agreement in our model. 
We provided several impossibility results and lower
bounds, which also led us to the, to the best of our knowledge, first gracefully degrading 
consensus/$k$-uniform $k$-set agreement under fairly strong message adversaries
proposed so far.

\iftoggle{journal}
{
\bibliography{lit,additional}
}
{
\bibliography{paper}
}
\bibliographystyle{abbrv}

\iftoggle{journal}{}{
\clearpage
\appendix
\section{Appendix}

\subsection{Generic Impossibility Theorem}

For the convenience of the reviewers, we restate the generic impossibility 
theorem from \cite{BRS11:OPODIS}.

\begin{theorem}[$k$-Set Agreement Impossibility {\cite[Thm.~1]{BRS11:OPODIS}}]
\label{thm:impossibility}
Let $\Mod=\li{\Pi}$ be a system model and consider the runs $\M_A$ that are
generated
  by some fixed algorithm $A$ in $\Mod$, where every process starts
  with a distinct input value.
Fix some nonempty and pairwise disjoint sets of processes
$\D_1,\dots,\D_{k-1}$, and a
      set of distinct decision values $\set{v_1,\dots,v_{k-1}}$.
Moreover, let $\D=\bigcup_{1\le i< k}\D_i$ and $\PiD=\Pi\setminus \D$.
Consider the following two properties:
\begin{itemize}
\item[\boldmath\bf(dec-$\D$)]
For every set $\D_i$, value $v_i$ was proposed by
  some $p\in \D$, and there is some $q\in \D_i$ that decides $v_i$.
\item[\boldmath\bf(dec-$\PiD$)]
If $p_j \in \PiD$ then $p_j$ receives
  no messages from any process in $\D$ until every process in $\PiD$ has decided.
\end{itemize}
Let $\Rii\subseteq \M_A$ and $\Riii\subseteq \M_A$ be the sets
  of runs of $A$ where (dec-$\PiD$) respectively both, (dec-$\D$) and
(dec-$\PiD$), hold.\footnote{Note that
  $\Rii$ is by definition compatible with the runs of the restricted
algorithm $\APiD$.}
Suppose that the following conditions are satisfied:
\begin{description}
\item[\bf\textup\condNonempty]
$\Rii$ is nonempty.
\item[\bf\textup\condRiiRiii]
$\Rii\subruns[\PiD]\Riii$.
\end{description}
In addition, consider a restricted model $\Mod'=\li{\PiD}$ such that the
following properties hold:
\begin{description}
\item[\bf\textup\condNocons]
There is no algorithm that solves consensus in $\Mod'$.
\item[\bf\textup\condMADMA]
$\MAD'\subruns \M_A$.
\end{description}
Then, $A$ does not solve $k$-set agreement in $\Mod$.
\end{theorem}

}
\end{document}